\theoremstyle{plain}
\newtheorem{theorem}{Theorem}
\newtheorem{lemma}{Lemma}
\begin{document}

\title{Enumeration algorithms for combinatorial problems using Ising machines:\\
       When should we stop exploring energy landscapes?}

\author{Yuta Mizuno}
  \altaffiliation{Present address: Global Research and Development Center for
                  Business by Quantum-AI Technology (G-QuAT),
                  National Institute of Advanced Industrial Science and Technology,
                  Tsukuba, Ibaraki 305-8568, Japan}
  \email{yuta.mizuno@aist.go.jp}
  \email{mizuno@es.hokudai.ac.jp}
  \affiliation{Research Institute for Electronic Science, Hokkaido University,
               Sapporo, Hokkaido 001-0020, Japan}
  \affiliation{Institute for Chemical Reaction Design and Discovery
               (WPI-ICReDD), Hokkaido University, Sapporo, Hokkaido 001-0021, Japan}
  \affiliation{Graduate School of Chemical Sciences and Engineering,
               Hokkaido University, Sapporo, Hokkaido 060-8628, Japan}

\author{Mohammad Ali}
  \affiliation{Graduate School of Chemical Sciences and Engineering,
               Hokkaido University, Sapporo, Hokkaido 060-8628, Japan}
  \affiliation{Statistics Discipline, Khulna University, Khulna 9280, Bangladesh}
\author{Tamiki Komatsuzaki}
  \affiliation{Research Institute for Electronic Science, Hokkaido University,
               Sapporo, Hokkaido 001-0020, Japan}
  \affiliation{Institute for Chemical Reaction Design and Discovery
               (WPI-ICReDD), Hokkaido University, Sapporo, Hokkaido 001-0021, Japan}
  \affiliation{Graduate School of Chemical Sciences and Engineering,
               Hokkaido University, Sapporo, Hokkaido 060-8628, Japan}
  \affiliation{SANKEN, The University of Osaka, Ibaraki, Osaka 567-0047, Japan}

\date{\today}

\begin{abstract}
  Combinatorial problems such as combinatorial optimization and constraint
  satisfaction problems arise in decision-making across various fields of
  science and technology. In real-world applications, when multiple
  optimal or constraint-satisfying solutions exist, enumerating all
  these solutions---rather than finding just one---is often desirable,
  as it provides flexibility in decision-making. However, combinatorial
  problems and their enumeration versions pose significant computational
  challenges due to combinatorial explosion. To address these challenges,
  we propose enumeration algorithms for combinatorial optimization and
  constraint satisfaction problems using Ising machines. Ising machines
  are specialized devices designed to efficiently solve combinatorial
  problems by exploring the energy landscape of an Ising model corresponding
  to the original problem. Ising machines typically sample lower-cost
  solutions with higher probability. Our enumeration algorithms repeatedly
  perform such sampling to collect all desirable solutions. The crux of
  the proposed algorithms lies in their stopping criteria for sampling-based
  energy landscape exploration, which are derived from probability theory.
  In particular, the proposed algorithms have theoretical guarantees
  that the failure probability of enumeration is bounded above
  by a user-specified value, provided that lower-cost solutions
  are sampled more frequently and equal-cost solutions are
  sampled with equal probability. Many physics-based Ising machines
  are expected to (approximately) satisfy these conditions.
  As a demonstration, we applied our algorithm using simulated annealing
  to maximum clique enumeration on random graphs. We found that
  our algorithm enumerates all maximum cliques in large, dense graphs
  faster than a conventional branch-and-bound algorithm specifically
  designed for maximum clique enumeration. These findings underscore
  the effectiveness and potential of our proposed approach.
\end{abstract}

\maketitle

\section{Introduction}

Combinatorial optimization and constraint satisfaction play significant
roles in decision-making across scientific research, industrial development,
and other real-life problem-solving. Combinatorial optimization is the process
of selecting an optimal option, in terms of a specific criterion, from a finite
discrete set of feasible alternatives. In contrast, constraint satisfaction is
the process of finding a feasible solution that satisfies specified constraints
without necessarily optimizing any criterion. Combinatorial problems---which
encompass combinatorial optimization problems and constraint satisfaction
problems---arise in various real-world applications, including
chemistry and materials science \cite{Mizuno2024,Ali2025,Kitai2020},
drug discovery \cite{Sakaguchi2016}, system design \cite{Kirkpatrick1983},
operational scheduling and navigation \cite{Hopfield1985,Rieffel2015,Ohzeki2019},
finance \cite{Rosenberg2016}, and leisure \cite{Mukasa2021}.

Enumerating all optimal or constraint-satisfying solutions is often desirable
in practical applications \cite{Mizuno2024,Ali2025,Eblen2012,Shibukawa2020}.
Desirable solutions of combinatorial problems (i.e., optimal or
constraint-satisfying solutions) are not necessarily unique.
When multiple desirable solutions exist, enumerating all these
solutions---rather than finding just one---provides flexibility
in decision-making. This enables decision-makers to select solutions
that better align with their specific preferences or constraints
not captured in the initial problem modeling.

Despite their practical importance, combinatorial problems and their
enumeration versions pose significant computational challenges.
Many combinatorial problems are known to be NP-hard \cite{Karp1972};
in worst-case scenarios, the computation time to solve such a problem
increases exponentially with the problem size. Moreover, enumerating all
solutions requires more computational effort than finding just
one solution. To address these challenges, we propose enumeration algorithms
for combinatorial problems using Ising machines.

Ising machines are specialized devices designed to efficiently solve
combinatorial problems \cite{Mohseni2022}. The term ``Ising machine" comes
from their specialization in finding the ground states of Ising models
(or spin glass models) in the statistical physics of magnets.
Several seminal studies on computations utilizing Ising models were
published in the 1980s \footnote{
  In particular, the Nobel Prize in Physics 2024 was awarded
  to John J. Hopfield and Geoffrey Hinton for their contributions, including
  the Hopfield network (Hopfield) and the Boltzmann machine (Hinton).
}, including the Hopfield network \cite{Hopfield1982,Hopfield1984}
with its application to combinatorial optimization \cite{Hopfield1985},
the Boltzmann machine \cite{Ackley1985}, and simulated annealing (SA)
\cite{Kirkpatrick1983}. During the same period, early specialized devices
for Ising model simulation were also developed \cite{Pearson1983,Hoogland1983}.
More recently, quantum annealing (QA) was proposed in 1998 \cite{Kadowaki1998}
and physically implemented in 2011 \cite{Johnson2011}. Furthermore,
the quantum approximate optimization algorithm (QAOA) \cite{Farhi2014},
which runs on gate-type quantum computers, typically targets Ising model problems.
The coherent Ising machine (CIM) \cite{Wang2013} is another type of Ising machine,
which utilizes optical phenomena. Currently, various types of Ising machines
are available, as reviewed in \cite{Mohseni2022}.

Many combinatorial problems are efficiently reducible to finding the ground
states of Ising models \cite{Lucas2014,Yarkoni2022}. Ising model problems are
NP-hard \cite{Barahona1982}; thus, any NP problem can be mapped to an Ising model
problem in theory. Furthermore, the real-world applications mentioned above
can also be mapped to Ising model problems \cite{Mizuno2024,Ali2025,Kitai2020,
Sakaguchi2016,Kirkpatrick1983,Hopfield1985,Rieffel2015,Ohzeki2019,
Rosenberg2016,Mukasa2021}. Therefore, Ising machines are widely applicable
to real-world combinatorial problems.

A key feature of Ising machines, especially those based on statistical
\cite{Kirkpatrick1983}, quantum \cite{Kadowaki1998,Johnson2011,Farhi2014},
or optical physics \cite{Wang2013}, is that most of them can be regarded as
samplers from low energy states of Ising models. For instance, SA simulates
a thermal annealing process, where the system temperature gradually decreases.
If the cooling schedule is sufficiently slow, the system is expected to
remain in thermal equilibrium during the annealing process, and thus
the final state distribution is close to the Boltzmann (or Gibbs)
distribution at a low temperature. In fact, the sampling probability
distribution converges to the uniform measure on the ground states, i.e.,
the Boltzmann distribution at absolute zero temperature, for a sufficiently
slow annealing schedule \cite{Geman1984}. Furthermore, quantum and optical Ising
machines, such as (noisy) QA devices \cite{Vuffray2022,Nelson2022,Shibukawa2024},
QAOA \cite{Diez2023,Lotshaw2023}, CIM \cite{Sakaguchi2016},
and quantum bifurcation machine (QbM) \cite{Goto2018}, have theoretical or
empirical evidence that they approximately realize Boltzmann sampling
at a low effective temperature.

We utilize Ising machines as samplers to enumerate all ground states of
Ising models. By repeatedly sampling states using Ising machines, one can
eventually collect all ground states in the limit of infinite samples
\footnote{
  QA devices with transverse-field driving Hamiltonian
  may not be able to identify all ground states in some problems;
  the sampling of some ground state is sometimes significantly
  suppressed \cite{Matsuda2009,Mandra2017,Konz2019}. We do not
  consider the use of such {``unfair"} Ising machines in this article.
}. This raises a fundamental practical question:
\textit{When should we stop sampling?} In this article, we address this question
and derive effective stopping criteria based on probability theory.

The remainder of this article is organized as follows. In Sec.~\ref{sec:models},
we formulate the combinatorial problems and Ising model problems considered
in this article, and define \textit{energy-ordered fair Ising machines} and
\textit{cost-ordered fair samplers} as sampler models. These sampler models
are necessary for deriving appropriate stopping criteria for sampling.
In Sec.~\ref{sec:algorithms}, we propose enumeration algorithms for constraint
satisfaction problems (Algorithm~1) and combinatorial optimization problems
(Algorithm~2). These algorithms have theoretical guarantees that
the failure probability of enumeration is bounded above by a user-specified
value $\epsilon$ when using a cost-ordered fair sampler (or an energy-ordered
fair Ising machine). Detailed theoretical analysis of the failure probability
is provided in Appendix~\ref{appx:theory}. Furthermore, in Sec.~\ref{sec:demo},
we present a numerical demonstration where we applied Algorithm~2 using SA to maximum
clique enumeration on random graphs. Finally, we conclude in Sec.~\ref{sec:conclusion}.

\section{Problem Formulation and Sampler Models} \label{sec:models}

\subsection{Combinatorial Problems and Ising Models}

The combinatorial problems we consider in this article are generally
formulated as
\begin{equation}\label{eq:comb-prob}
\operatorname*{minimize}_{x \in X}\ f(x),
\end{equation}
where $X$ is a finite discrete set of feasible solutions, and
$f\colon X \to \mathbb{R}$ is a cost function to be minimized.
The variable $x$ is typically represented as a binary or integer vector,
and the feasible set $X$ is defined by equality or inequality constraints
on $x$. If $f$ is a constant function, there is no preference between alternatives,
and thus all feasible solutions are desirable solutions; that is, the problem is
a constraint satisfaction problem. Otherwise, it is a (single-objective)
combinatorial optimization problem. 

In many cases, the combinatorial problem defined in Eq.~\eqref{eq:comb-prob}
can be mapped to an Ising model problem:
\begin{equation}
\operatorname*{minimize}_{\bm{\sigma} \in \{-1,1\}^N}
\ H_\mathrm{Ising}(\bm{\sigma}).
\end{equation}
The Ising Hamiltonian $H_\mathrm{Ising}$ is defined as
\begin{equation}
H_\mathrm{Ising}(\bm{\sigma})
= -\sum_{i=1}^{N-1} \sum_{j=i+1}^{N} J_{ij} \sigma_i \sigma_j
  -\sum_{i=1}^{N} h_i \sigma_i.   
\end{equation}
Here, $N$, $\sigma_i$, $J_{ij}$, and $h_i$ denote, respectively,
the number of spin variables, the $i$th spin variable, the interaction
coefficient between two spins $\sigma_i$ and $\sigma_j$, and the local
field interacting with $\sigma_i$. This Ising model should be designed
so that the ground states correspond to the desirable solutions
of the original problem. Standard techniques for mapping combinatorial
problems into Ising model problems can be found in \cite{Lucas2014,Yarkoni2022}.

\subsection{Cost-Ordered Fair Samplers}

To derive appropriate stopping criteria for sampling, we need to specify
a class of samplers (or sampling probability distributions) to be considered.
In this subsection, we define two classes of samplers,
\textit{energy-ordered fair Ising machines} for Ising model problems
and \textit{cost-ordered fair samplers} for general combinatorial problems.
In brief, these sampler models capture the following desirable features
of samplers for optimization: more preferred solutions are sampled more
frequently, and equally preferred solutions are sampled with equal
probability.

First, let us introduce two conditions regarding the sampling probability
distribution of an Ising machine, denoted by $p_\mathrm{Ising}$.
For any two spin configurations $\bm{\sigma}_1$ and $\bm{\sigma}_2$,
\begin{equation}
\begin{cases}
H_\mathrm{Ising}(\bm{\sigma}_1) < H_\mathrm{Ising}(\bm{\sigma}_2)
\Rightarrow
p_\mathrm{Ising}(\bm{\sigma}_1) \ge p_\mathrm{Ising}(\bm{\sigma}_2), \\
H_\mathrm{Ising}(\bm{\sigma}_1) = H_\mathrm{Ising}(\bm{\sigma}_2)
\Rightarrow
p_\mathrm{Ising}(\bm{\sigma}_1) = p_\mathrm{Ising}(\bm{\sigma}_2).
\end{cases}
\end{equation}
The first condition---referred to as the energy-ordered sampling
condition---asserts that a spin configuration with lower energy
is sampled more frequently (or at least with the same frequency)
than a spin configuration with higher energy. In contrast, the second
condition---referred to as the fair sampling condition---states
that two spin configurations with equal energy are sampled with
equal probability. For example, the Boltzmann distribution satisfies
these two conditions. Several Ising machines, such as SA devices \cite{Geman1984},
(noisy) QA devices \cite{Vuffray2022,Nelson2022,Shibukawa2024},
gate-type quantum computers with QAOA \cite{Diez2023,Lotshaw2023},
CIM \cite{Sakaguchi2016}, and QbM \cite{Goto2018}, are known to be
(approximate) Boltzmann samplers for appropriate parameter regimes
(e.g., a sufficiently slow annealing schedule for SA). Therefore,
these Ising machines can be utilized as (approximate) energy-ordered
fair Ising machines. Furthermore, since the energy-ordered and fair
sampling conditions are weaker than the Boltzmann sampling condition,
a broader class of Ising machines is expected to be utilized as
energy-ordered fair Ising machines.

Next, we extend the concept of energy-ordered fair Ising machines
to cost-ordered fair samplers for general combinatorial problems.
We define the cost-ordered and fair sampling conditions regarding
a sampling probability distribution $p$ over feasible solutions of
the combinatorial problem defined in Eq.~\eqref{eq:comb-prob}
as follows. For any two feasible solutions $x_1$ and $x_2$,
\begin{equation}
\begin{cases}
f(x_1) < f(x_2) \Rightarrow p(x_1) \ge p(x_2), \\
f(x_1) = f(x_2) \Rightarrow p(x_1) = p(x_2).
\end{cases}
\label{eq:cost-ordered-fair}
\end{equation}
We define cost-ordered fair samplers as samplers that generate
only feasible solutions and follow the probability distribution $p$ satisfying 
the conditions of Eq.~\eqref{eq:cost-ordered-fair}. Since Ising model problems
are a subset of combinatorial problems, and all spin configurations
are feasible solutions in Ising model problems, energy-ordered fair Ising
machines are also cost-ordered fair samplers for the Ising model problems.

Cost-ordered fair samplers for general combinatorial problems can be
implemented by using energy-ordered fair Ising machines. Typical Ising
formulations of combinatorial problems preserve the order of preference
among solutions \cite{Lucas2014,Yarkoni2022}:
\begin{equation}
\begin{cases}
f(x_1) < f(x_2) \Rightarrow
H_\mathrm{Ising}(\bm{\sigma}_1) < H_\mathrm{Ising}(\bm{\sigma}_2), \\
f(x_1) = f(x_2) \Rightarrow
H_\mathrm{Ising}(\bm{\sigma}_1) = H_\mathrm{Ising}(\bm{\sigma}_2),
\end{cases}
\end{equation}
where $\bm{\sigma}_1$ and $\bm{\sigma}_2$ are the spin configurations
corresponding to feasible solutions $x_1$ and $x_2$, respectively.
Under this condition, the probability distribution over feasible solutions
generated by an energy-ordered fair Ising machine satisfies the cost-ordered
and fair sampling conditions as defined in Eq.~\eqref{eq:cost-ordered-fair}.
Note that not all possible spin configurations that can be sampled by
the Ising machine correspond to feasible solutions of the original (constrained)
combinatorial problem. However, such infeasible solutions can be rejected during
sampling by checking constraint satisfaction, so that the sampler generates
only feasible solutions. (This rejection process will also be illustrated
in the next section.)

In the next section, we will present enumeration algorithms for combinatorial
problems using cost-ordered fair samplers. Although we primarily focus on
cost-ordered fair samplers implemented using energy-ordered fair Ising
machines, our enumeration algorithms can employ a wider class of
stochastic methods and computing devices that satisfy the conditions of
Eq.~\eqref{eq:cost-ordered-fair}. Additionally, the proposed algorithms can
still work effectively in practice, even if the sampler employed does not
strictly meet the conditions of Eq.~\eqref{eq:cost-ordered-fair} 
(see also Sec.~\ref{sec:demo}).

\section{Algorithms} \label{sec:algorithms}

This section describes our proposed algorithms that enumerate all solutions
to (1) a constraint satisfaction problem and (2) a combinatorial optimization
problem using an Ising machine.

\subsection{Preliminaries: Coupon Collector's Problem}

Our enumeration algorithms involve stopping criteria inspired by the coupon
collector's problem, a classic problem in probability theory. This problem
considers the scenario where one needs to collect all distinct items (coupons)
through uniformly random sampling. For example, the number of samples necessary
to collect all $n$ distinct items in a set of cardinality $n$, denoted by
$T^{(n)}_n$, has the following tail distribution:
\begin{equation}
P\left(T^{(n)}_n > \left\lceil n\ln\frac{n}{\epsilon} \right\rceil \right)
< \epsilon,
\end{equation}
where $\lceil\ \rceil$ denotes the ceiling function, and
$\epsilon$ is any positive number less than one.
(See also Lemma~\ref{lemma:tail-distribution-complete}
 in Appendix~\ref{appx:algorithm-1}.)
This inequality suggests that when sampling is stopped at
$\lceil n\ln(n/\epsilon) \rceil$, the failure probability of
collecting all items is less than $\epsilon$. Therefore,
if the number of desirable solutions $n$ were known in advance,
we could simply employ $\lceil n\ln(n/\epsilon) \rceil$ as
the deadline for collecting all desirable solutions. However,
the value of $n$ is unknown in practice. Furthermore,
in a combinatorial optimization problem, nonoptimal solutions are also
sampled in addition to optimal solutions. These challenges demand
an extension of the theory of the coupon collector's problem,
which we address in this article.

In the following two subsections, we present our enumeration algorithms
based on the extended theory of the coupon collector's problem.
Mathematical details are presented in Appendix~\ref{appx:theory}.

\subsection{Enumeration Algorithm for Constraint Satisfaction Problems}

First, we present an enumeration algorithm for constraint satisfaction
problems, referred to as \textit{Algorithm~1} in this article. Algorithm~1
assumes that at least one feasible solution exists and a fair sampler of
feasible solutions is available. Note that for a constraint satisfaction
problem, the cost function $f$ is considered constant; thus
the cost-ordered sampling condition is not required.
The pseudocode is shown in Fig.~\ref{fig:algorithm-1}.

\begin{figure}
    \centering
    \includegraphics[width=\linewidth]{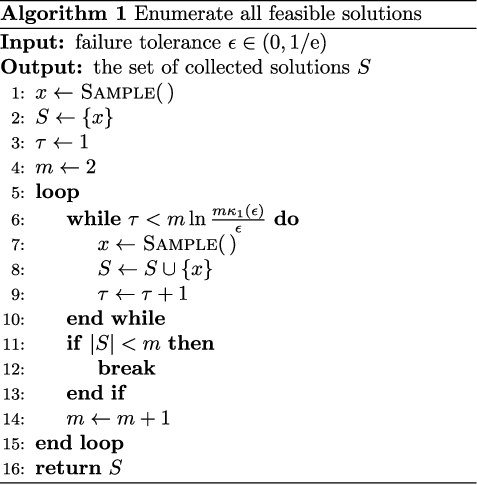}
    \caption{Pseudocode of Algorithm~1. The function \texttt{SAMPLE} is a fair sampler
             of feasible solutions. The definition of $\kappa_1(\epsilon)$,
             which appears in line 6, is provided in Eq.~\eqref{eq:kappa1-def}.
             The failure probability of Algorithm~1 is theoretically guaranteed
             to be less than the user-specified failure tolerance $\epsilon$
             (see Theorem~\ref{theorem:algorithm-1} in Appendix~\ref{appx:algorithm-1}).}
    \label{fig:algorithm-1}
\end{figure}

Algorithm~1 repeatedly samples feasible solutions for a constraint 
satisfaction problem using the function \texttt{SAMPLE}. This function
returns a feasible solution uniformly at random. Such a fair sampler
can be implemented using a fair Ising machine that samples each ground
state of an Ising model with equal probability. In general, it is
easy to check whether a solution satisfies the constraints;
thus, \texttt{SAMPLE} can return only feasible solutions by discarding
infeasible samples generated by the Ising machine.

As the sampling process is repeated and the number of samples
$\tau$ approaches infinity, the set of collected solutions $S$ converges
to the set of all feasible solutions. To stop the sampling process after
a finite number of samples, Algorithm~1 sets the deadline for collecting
$m$ distinct solutions as $\lceil m\ln(m\kappa_1/\epsilon) \rceil$ for
$m = 2, 3, \dots$. Here, $\epsilon$ is a tolerable failure probability for
the enumeration and is required to be less than $1/\mathrm{e}\ (\simeq 0.37)$.
Note that we typically set the tolerable failure probability $\epsilon$
to a much smaller value, such as 0.01 (1\%), and thus this requirement on
$\epsilon$ is not severe. The factor $\kappa_1$ depends on $\epsilon$
but not on the unknown number of desirable solutions to be enumerated.
It is defined as
\begin{equation} \label{eq:kappa1-def}
\kappa_1 \coloneqq
\frac{3^{-2\alpha}}{1-\mathrm{e}^{-\beta}} +
\frac{1}{1-\mathrm{e}^{-\frac{\alpha}{\mathrm{e}-1}}},
\end{equation}
where
\begin{align}
\alpha &\coloneqq \ln\frac{1}{\epsilon} - 1, \\
\beta &\coloneqq
\frac{\frac{1}{\mathrm{e}} + \frac{1}{3}\ln\frac{1}{3}}
     {\frac{1}{\mathrm{e}} - \frac{1}{3}} \alpha.
\end{align}
For instance, $\kappa_1 \simeq 1.14$ when $\epsilon = 0.01$.
Intuitively, the constant $\kappa_1$---which is always larger than one---can
be regarded as a ``correction" factor to the original deadline in the coupon
collector's problem. It slightly extends the deadline to compensate for
the increased error chances caused by the lack of information about
the number of desirable solutions (see Appendix~\ref{appx:algorithm-1}
for detailed discussion). If the number of collected solutions is
fewer than $m$ at $\tau = \lceil m\ln(m\kappa_1/\epsilon) \rceil$,
Algorithm~1 stops the sampling. These specific deadlines ensure that
the failure probability for the enumeration remains below $\epsilon$,
as stated by Theorem~\ref{theorem:algorithm-1} in Appendix~\ref{appx:algorithm-1}.

\begin{figure}
    \centering
    \includegraphics[width=\linewidth]{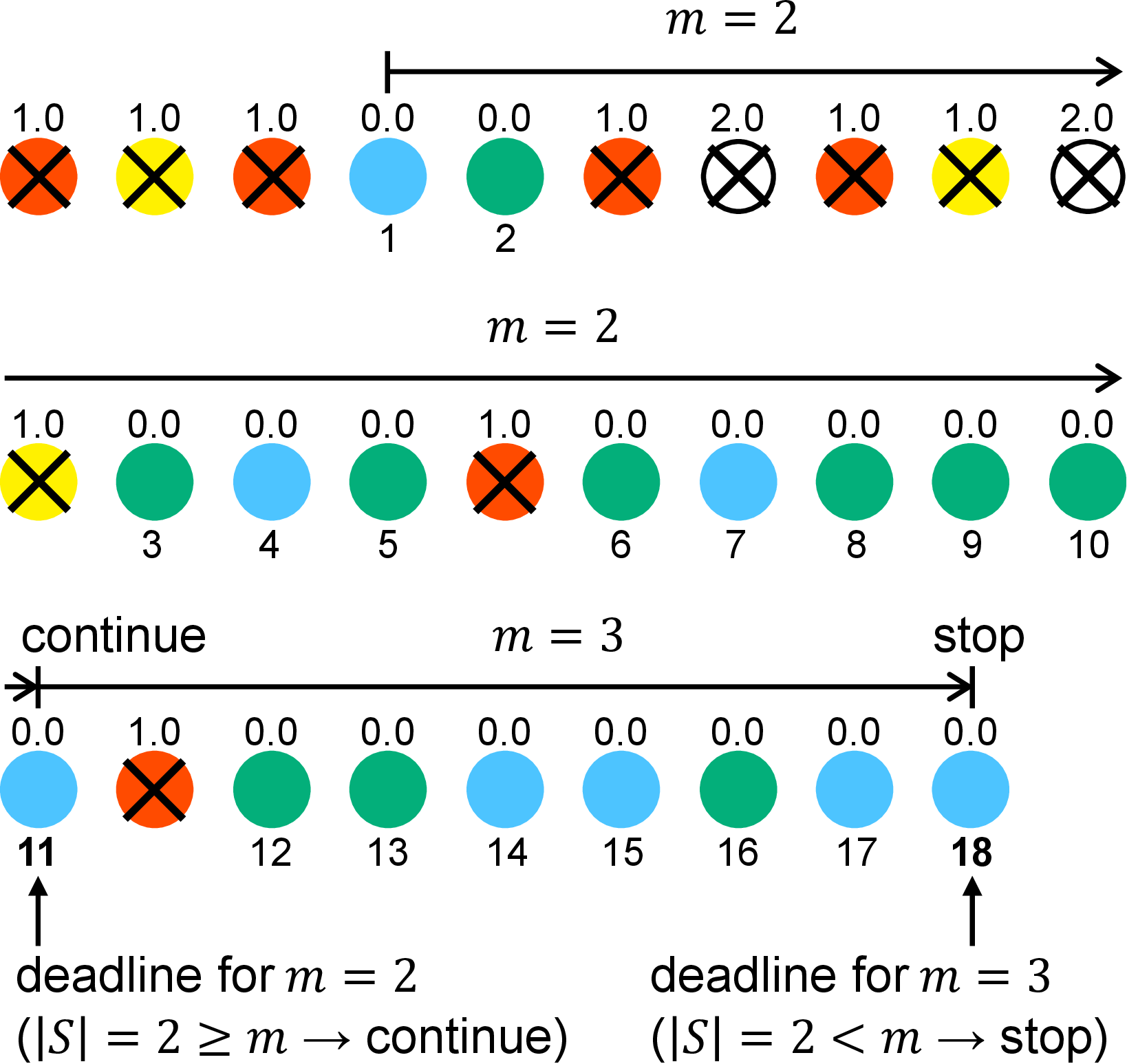}
    \caption{An illustration of a sampling process in Algorithm~1. Each circle
             represents a sample generated by an Ising machine, with different colors
             indicating different solutions. The value above each circle indicates
             the energy of the corresponding solution. In this example, the ground
             state energy is 0.0, and thus light-blue and green circles represent feasible
             solutions (i.e., desirable solutions). In contrast, circles of other colors
             correspond to infeasible solutions, which are discarded during
             the sampling process, as indicated by ``$\times$" marks. The numbers
             below the feasible solutions indicate the sample count $\tau$.
             The deadlines for $m = 2$ and $3$ were calculated as
             $\lceil m\ln(m\kappa_1/\epsilon) \rceil$ with $\epsilon = 0.01$.}
    \label{fig:algorithm-1-example}
\end{figure}

Figure~\ref{fig:algorithm-1-example} illustrates a sampling process in Algorithm~1.
Each circle represents a sample generated by an Ising machine, associated with its
energy value; different colors indicate different solutions. During the sampling process,
infeasible solutions, i.e., samples with energy higher than the ground state energy
of 0.0 (red, yellow, and white in this example) are discarded by checking constraint
satisfaction, as indicated by ``$\times$" marks. This discarding process is a part of
the \texttt{SAMPLE} subroutine in the pseudocode; thus, \texttt{SAMPLE} returns
only feasible solutions without ``$\times$" marks. After sampling the first feasible
solution (light-blue), Algorithm~1 continues sampling until the deadline for
collecting $m = 2$ distinct solutions. This deadline is
$\lceil 2\ln(2\kappa_1/\epsilon) \rceil$, which equals 11 for
$\epsilon = 0.01$. Note that the sample count $\tau$, indicated by
numbers under the circles of feasible solutions, is incremented
only when a feasible solution is sampled. At the deadline $\tau = 11$,
the set of collected solutions $S$ contains two distinct feasible
solutions (light-blue and green). Since the number of collected
solutions $|S|$ equals $m = 2$, Algorithm~1 proceeds to the next phase,
aiming to collect $m = 3$ distinct solutions. The next deadline is
$\lceil 3\ln(3\kappa_1/\epsilon) \rceil$, which equals 18.
However, at the deadline $\tau=18$, the number of collected solutions
$|S|$ is still two, which is fewer than the goal value $m = 3$.
Therefore, Algorithm~1 stops sampling and returns the set $S$
containing the two distinct feasible solutions. In this way,
Algorithm~1 enumerates all feasible solutions, ensuring that
the failure probability for the enumeration is at most $\epsilon$.

\subsection{Enumeration Algorithm for Combinatorial Optimization Problems} \label{sec:algorithm-2}

Next, we present an enumeration algorithm for combinatorial optimization
problems, referred to as \textit{Algorithm~2} in this article. Similar to
Algorithm~1, Algorithm~2 assumes that at least one feasible solution
exists and a cost-ordered fair sampler of feasible solutions is available.
The pseudocode is shown in Fig.~\ref{fig:algorithm-2}.

\begin{figure}
    \centering
    \includegraphics[width=\linewidth]{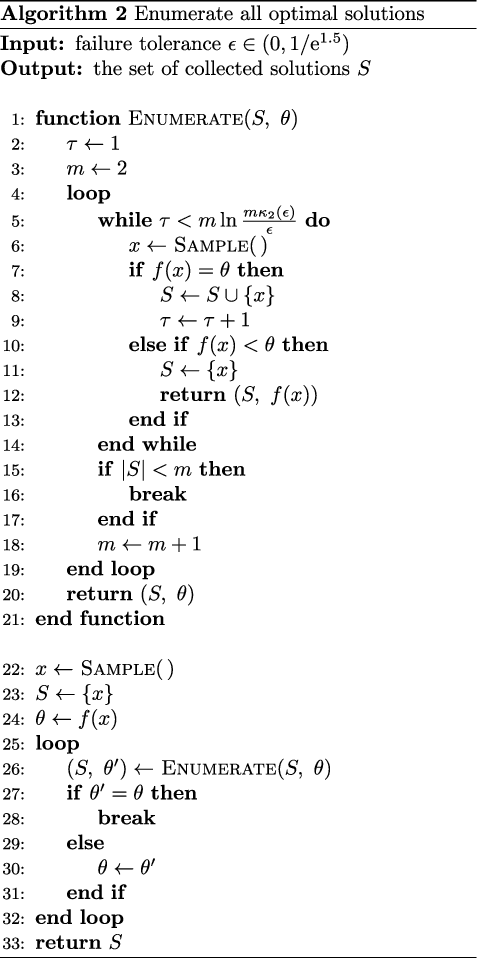}
    \caption{Pseudocode of Algorithm~2. The function \texttt{SAMPLE} is a cost-ordered
             fair sampler of feasible solutions. The definition of $\kappa_2(\epsilon)$,
             which appears in line 5, is provided in Eq.~\eqref{eq:kappa2-def}.
             The failure probability of Algorithm~2 is theoretically guaranteed
             to be less than the user-specified failure tolerance $\epsilon$
             (see Theorem~\ref{theorem:algorithm-2} in Appendix~\ref{appx:algorithm-2}).}
    \label{fig:algorithm-2}
\end{figure}

Enumerating all optimal solutions poses a challenge that does not
arise in enumerating all feasible solutions: it is impossible to
judge whether a sampled solution is optimal without knowing
the minimum cost value in advance. Therefore, Algorithm~2 collects
current best solutions as \textit{provisional} optimal solutions during sampling.
If a solution better than the provisional optimal solutions is sampled,
the algorithm discards the already collected solutions and continues to
collect new provisional optimal solutions.

Specifically, Algorithm~2 holds the minimum cost value among already
sampled solutions as $\theta$. To collect provisional optimal
solutions with cost $\theta$, the algorithm uses the subroutine
\texttt{ENUMERATE} (see lines 1--21 in Fig.~\ref{fig:algorithm-2}).
This subroutine is a modified version of Algorithm~1, which aims to
enumerate all feasible solutions with cost $\theta$. However,
if a solution with cost lower than $\theta$ is sampled
during enumeration, this subroutine stops collecting solutions
and resets the set of collected solutions $S$ so that it contains
only the newly-found better solution. In either case, the subroutine returns $S$
and the current minimum cost value. If the \texttt{ENUMERATE} subroutine stops
enumeration without sampling a better solution (i.e., the current minimum cost
value does not change), Algorithm~2 halts and returns $S$.

The deadline for collecting $m$ distinct solutions employed in
the \texttt{ENUMERATE} subroutine depends on $\kappa_2$, instead of $\kappa_1$
used in Algorithm~1. The constant $\kappa_2$ is defined as
\begin{equation} \label{eq:kappa2-def}
\kappa_2 \coloneqq
\frac{4^\alpha}{1-\mathrm{e}^{-\beta}}
\left(\zeta(2\alpha) - \sum_{k=1}^5 \frac{1}{k^{2\alpha}}\right) +
\frac{2-\mathrm{e}^{-\frac{\alpha}{\mathrm{e}-1}}}
     {\left(1-\mathrm{e}^{-\frac{\alpha}{\mathrm{e}-1}}\right)^2},
\end{equation}
where $\alpha$ and $\beta$ are the same as those used in
Eq.~\eqref{eq:kappa1-def}, and $\zeta$ denotes the Riemann zeta function.
For instance, when $\epsilon=0.01$, $\kappa_2 \simeq 2.44$.
To ensure the convergence of the Riemann zeta function,
$2\alpha\ [= 2\ln(1/\epsilon)-2]$ must be greater than 1,
which implies that $\epsilon < 1/\mathrm{e}^{1.5}\ (\simeq 0.22)$.
Note that this upper limit for allowable $\epsilon$ value is moderate,
as we typically set the tolerable failure probability to a much smaller
value, such as 0.01 (1\%). This specific design of $\kappa_2$ ensures that
the failure probability for the enumeration remains below $\epsilon$,
as stated by Theorem~\ref{theorem:algorithm-2} in Appendix~\ref{appx:algorithm-2}.

The deadlines used in Algorithm~2 are larger than those in Algorithm~1.
This is because $\kappa_2$ is always larger than $\kappa_1$, in order to
compensate for the increased error chances caused by the lack of
information about the true minimum cost (see the end of
Appendix~\ref{appx:algorithm-2} for detailed discussion).

\begin{figure}
    \centering
    \includegraphics[width=\linewidth]{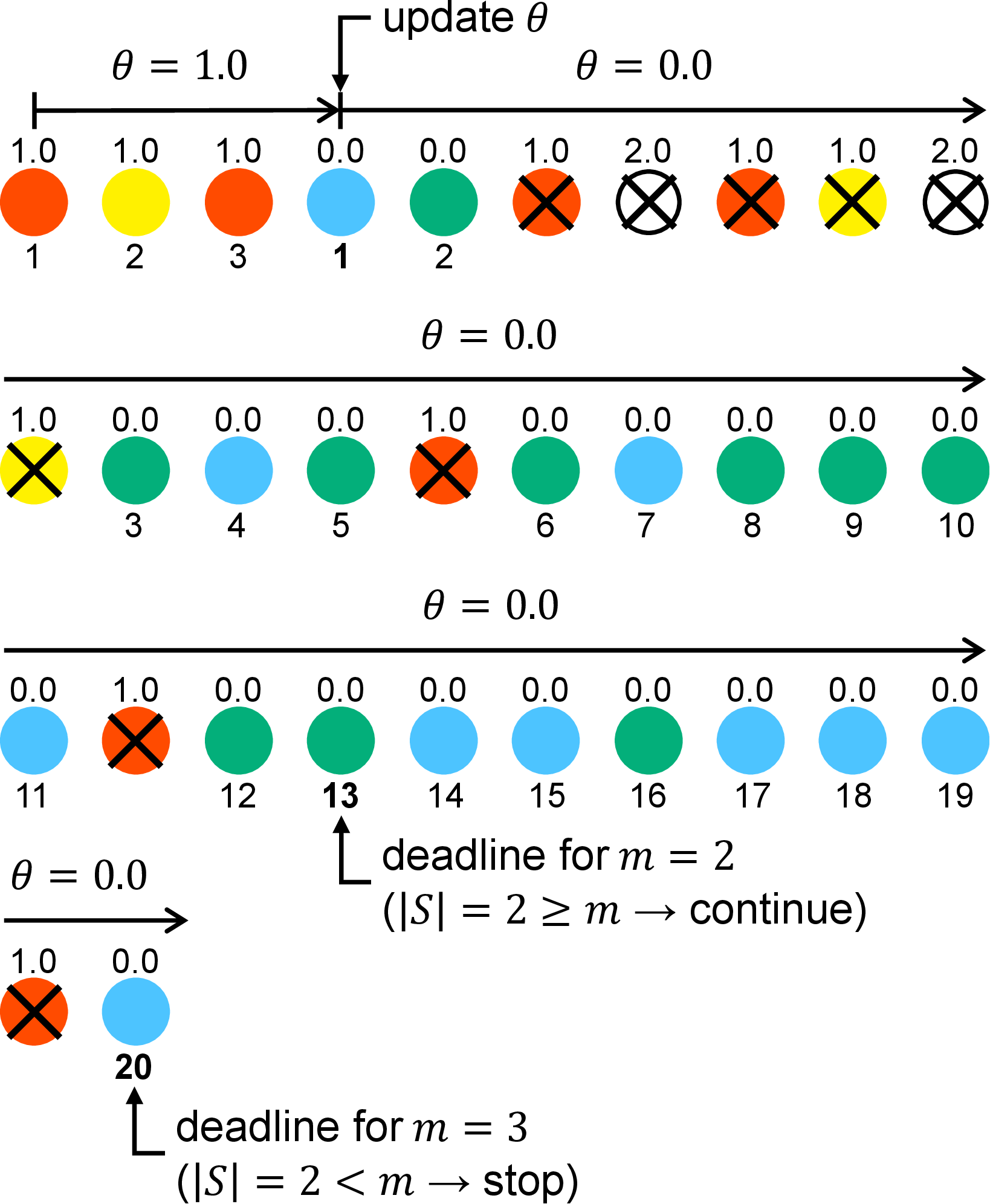}
    \caption{An illustration of a sampling process in Algorithm~2. Each circle
             represents a sample generated by an Ising machine, with different colors
             indicating different solutions. The ``$\times$" marks indicate the rejection
             of the samples. The value above each circle indicates the energy of
             the corresponding solution. Here, for simplicity, all samples are assumed
             to be feasible, and the energy of each feasible solution is set to
             its cost value in the original problem. In this example, the ground state
             energy is 0.0, and thus light-blue and green circles represent the optimal
             solutions. The numbers below the accepted solutions indicate the sample
             count $\tau$. Note that the sample count is reset when the current minimum
             cost value $\theta$ is updated. The deadlines for $m = 2$ and $3$ were
             calculated as $\lceil m\ln(m\kappa_2/\epsilon) \rceil$ with $\epsilon = 0.01$.
             The sample sequence is the same as that in Fig.~\ref{fig:algorithm-1-example}.}
    \label{fig:algorithm-2-example}
\end{figure}

Figure~\ref{fig:algorithm-2-example} illustrates a sampling process in Algorithm~2.
Unlike Fig.~\ref{fig:algorithm-1-example}, here all samples are assumed to be feasible
for simplicity; even in this setting, the desirable solutions to be enumerated remain
the two lowest energy solutions, light-blue and green. In this example, Algorithm~2 does
not reject the first sample (red) that is not a true optimal solution. Instead,
the algorithm collects it as a provisional optimal solution and sets $\theta$
to its cost value 1.0. At this first stage, the algorithm aims to collect
all provisional optimal solutions with cost 1.0 (red and yellow).
However, before the deadline for collecting $m=2$ distinct solutions
with cost 1.0, a better solution (light-blue) is sampled.
Thus, the algorithm updates $\theta$ to 0.0 and resets $S$ and the sample
count $\tau$. At this second stage, the algorithm aims to collect all
provisional optimal solutions with cost 0.0 (light-blue and green).
Solutions with any cost value exceeding the new threshold $\theta=0.0$ are
rejected during sampling, as indicated by the ``$\times$" marks.
The algorithm continues the enumeration until the deadline for
$m=3$ without updating $\theta$. Since the number of collected solutions
with cost 0.0 is two, the algorithm halts at this deadline,
returning $S$ that contains the two collected solutions. In this way,
Algorithm~2 enumerates all optimal solutions by attempting to enumerate
provisional best solutions and updating the current minimum cost
value $\theta$, ensuring that the failure probability
for the enumeration is at most $\epsilon$.

\subsection{Computational Complexity} \label{sec:complexity}

Before concluding this section, we discuss the computational complexity
of the proposed enumeration algorithms. Both Algorithm~1 and Algorithm~2
require $\lceil (n+1)\ln[(n+1)\kappa/\epsilon] \rceil$ samples of desirable
solutions to ensure successful collection of all $n$ desirable solutions,
where $\kappa$ is either $\kappa_1$ or $\kappa_2$.
(Note that the algorithms stop at the deadline for collecting $n+1$
distinct desirable solutions in successful cases.) On the other hand,
the expected time to sample a desirable solution can be estimated
by $\mathcal{T}_\mathrm{sample}/p_\mathrm{desirable}$. Here,
$\mathcal{T}_\mathrm{sample}$ denotes the time to sample a feasible solution
(including both desirable and undesirable ones) using a cost-ordered fair sampler,
and $p_\mathrm{desirable}$ represents the probability that the sampler generates
a desirable solution, i.e.,
$p_\mathrm{desirable} =
 \sum_{x \in \operatorname{argmin}_{x^\prime} f(x^\prime)} p(x)$
[$p(x)$: the cost-ordered fair sampling probability of $x$].
Combining these estimates, we obtain the expected computation
time of the proposed algorithms in cases where enumeration succeeds as
\begin{equation}
\left\lceil (n+1)\ln\frac{(n+1)\kappa}{\epsilon} \right\rceil
\times \frac{\mathcal{T}_\mathrm{sample}}{p_\mathrm{desirable}}.
\end{equation}
Although the first factor does not directly depend on the problem size
(e.g., the number of variables), the second factor may increase exponentially
with the problem size for NP-hard problems. Therefore, the computation time
is dominated primarily by the second factor, i.e., the sampling performance of
the cost-ordered fair sampler (or the Ising machine employed). Moreover,
the number of desirable solutions $n$ may also increase exponentially
with the problem size in worst-case scenarios.

An enumeration algorithm for constraint satisfaction problems utilizing
a fair Ising machine was previously proposed by Kumar et al. \cite{Kumar2020}
and later improved by Mizuno and Komatsuzaki \cite{Mizuno2021}. Their algorithm
is the direct ancestor of Algorithm~1 proposed in this article. In their
algorithm, the deadlines for collecting $m$ distinct solutions are set at
large intervals (e.g., $m = 2, 2^2, \cdots, 2^N$, where $N$ is the number
of spin variables). This leads to additional overhead in the number of
samples required. For instance, when $n=20$, the required number of samples
is $\lceil 32\ln(32\kappa/\epsilon) \rceil$ in their algorithm.
In contrast, our Algorithm~1 requires a much smaller number of samples,
$\lceil 21\ln(21\kappa_1/\epsilon) \rceil$, because in Algorithm~1,
the deadlines are set at every integer value of $m$. Furthermore,
the factor $\kappa$ in the previous algorithm \cite{Mizuno2021} is
typically proportional to $N$, while $\kappa_1$ used in our Algorithm~1
is independent of $N$. This improvement in the computational complexity
of Algorithm~1 results from the careful analysis of the failure probability,
which is detailed in Appendix~\ref{appx:algorithm-1}.

\section{Numerical Demonstration} \label{sec:demo}

This section presents a numerical demonstration of Algorithm~2,
the enumeration algorithm for combinatorial optimization problems
proposed in Sec.~\ref{sec:algorithm-2}. As discussed in Sec.~\ref{sec:complexity},
the actual computation time of the algorithm depends on the performance
of an Ising machine employed. Furthermore, although the algorithm has
the theoretical guarantee of its success rate under the cost-ordered fair
sampling model, the success rate could be different from the theoretical
expectation due to deviations in the actual sampling probability from
the theoretical model. Therefore, we evaluate the actual computation
time and success rate of Algorithm~2 for the maximum clique problem
\cite{Wu2015}, a textbook example of combinatorial optimization.

\subsection{Maximum Clique Problem}

A clique in an undirected graph $G$ is a subgraph in which every two distinct
vertices are adjacent in $G$. Finding a maximum clique, i.e., a clique
with the largest number of vertices, is a well-known NP-hard combinatorial
optimization problem \cite{Karp1972,Lucas2014,Wu2015}. The maximum clique
problem has a wide range of real-world applications from chemoinformatics
to social network analysis \cite{Wu2015}. In particular, enumerating all
maximum cliques is desirable in applications to chemoinformatics \cite{Ali2025}
and bioinformatics \cite{Eblen2012}.

The maximum clique problem on graph $G$ can be formulated as:
\begin{equation}
\begin{alignedat}{2}
&\operatorname*{maximize}_{\bm{x} \in \{0, 1\}^{|V_G|}}& \quad
&\sum_{v \in V_G} x_v, \\
&\operatorname{subject\ to}& \quad
&\forall \{u, v\} \in \overline{E}_G,\; x_u x_v = 0,
\end{alignedat}
\end{equation}
where $V_G$ and $\overline{E}_G$ denote the vertex set and the complementary
edge set (i.e., the set of nonadjacent vertex pairs) of $G$, respectively.
The symbol $\bm{x}$ collectively denotes the binary variables
$\{x_v\}_{v \in V_G}$ and represents a subset of vertices in $G$;
each variable $x_v$ indicates whether the vertex $v$ is included
in the subset ($x_v = 1$) or not ($x_v = 0$). The constraints ensure
that the vertex subset does not include any nonadjacent vertex pairs.
In other words, these constraints exclude vertex subsets that do not
form a clique. Under the clique constraints, the objective is to maximize
the number of vertices included in a clique, which equals $\sum_{v \in V_G} x_v$.

Alternatively, the maximum clique problem can be formulated as
a quadratic unconstrained binary optimization (QUBO) problem:
\begin{equation} \label{eq:max-clique-qubo}
\operatorname*{minimize}_{\bm{x} \in \{0, 1\}^{|V_G|}}
\quad -\sum_{v \in V_G} x_v + A \sum_{\{u, v\} \in \overline{E}_G} x_u x_v,
\end{equation}
where $A$ is a positive constant that controls the penalty for violating
the clique constraints. If $A$ is greater than one, the optimal solutions of
this QUBO formulation are exactly the same as those of the original formulation
\cite{Lucas2014}. By converting the binary variables $x_v$ to spin variables
$\sigma_v\ (\coloneqq 1 - 2x_v)$, the QUBO problem becomes equivalent to
finding the ground state(s) of an Ising model.

\subsection{Computation Methods}

We generated Erdős-Rényi random graphs \cite{Erdos1959} to create benchmark
problems. The number of vertices of each graph $G$, denoted by $|V_G|$,
was randomly selected from the range of 10 to 500. The number of edges
was determined to achieve an approximate graph density $D$, calculated
as $\binom{|V_G|}{2}D$ rounded to the nearest integer. The graph density
parameter $D$ was set to 0.25, 0.5, and 0.75. For each value of $D$,
100 random graphs were generated.

We solved the maximum clique problem on each graph using Algorithm~2
with simulated annealing (SA). The algorithm was implemented in Python,
employing the \texttt{SimulatedAnnealingSampler} from the D-Wave Ocean Software
\cite{DWaveOcean}. The tolerable failure probability $\epsilon$ was set
to 0.01. The penalty strength $A$ in Eq.~\eqref{eq:max-clique-qubo} was set
to a moderate value of 2, as an excessively large penalty strength
may deteriorate the performance of SA. Additionally, we used default
parameters for the SA function. Each randomly generated problem was
solved 100 times using the proposed algorithm, where each run used
a different random seed for SA.

We also solved the same problems using a conventional branch-and-bound
algorithm as a reference. This branch-and-bound algorithm is based on
the Bron--Kerbosch algorithm \cite{Bron1973} with a pivoting technique proposed
by Tomita, Tanaka, and Takahashi \cite{Tomita2006}. This standard algorithm is
implemented in the NetworkX package \cite{NetworkX}, called \texttt{find\_cliques}.
We further modified this \texttt{find\_cliques} function by incorporating a basic
bounding condition for efficient maximum clique search proposed by Carraghan
and Pardalos \cite{Carraghan1990}. This algorithm is exact, i.e., it enumerates
all maximum cliques with 100\% success probability, albeit at the cost of long
computation time, especially for large and dense graphs, as shown below.

All computations were performed on a Linux machine equipped with two Intel
Xeon Platinum 8360Y processors (2.40 GHz, 36 cores each).

\subsection{Results and Discussion}

\subsubsection{Computation time}

First, we compare the computation time of our proposed algorithm
(Algorithm~2 using SA) against that of the conventional algorithm
(Bron--Kerbosch combined with the enhancements by Tomita--Tanaka--Takahashi
and Carraghan--Pardalos). The results are shown in Fig.~\ref{fig:computation-time}
and Table~\ref{tbl:computation-time-scaling}.

\begin{figure*}
    \centering
    \includegraphics[width=\linewidth]{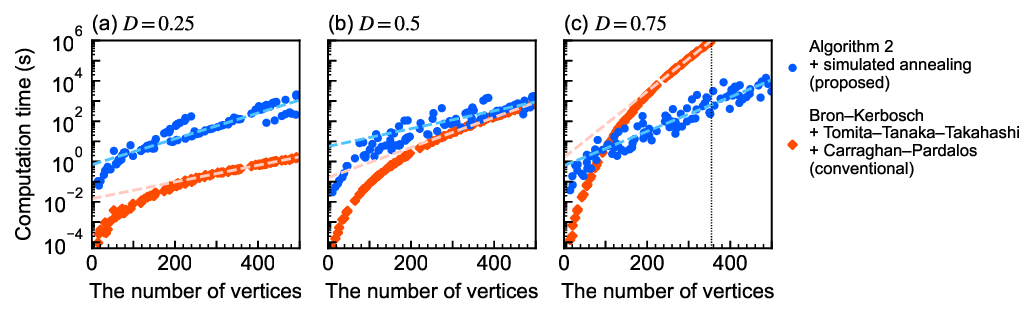}
    \caption{Computation times required to enumerate all maximum cliques in random graphs
             with different graph densities $D$ and different numbers of vertices.
             The blue circles indicate the computation times of Algorithm~2 using SA.
             For each data point, the mean computation time of all successful cases
             out of 100 independent runs was calculated. For graphs with density $D=0.75$
             and more than 355 vertices [the right-hand side of the vertical dotted line
             in panel (c)], where the conventional exact algorithm did not terminate
             even after 10 days and thus the true maximum cliques are unknown, we treat
             runs in which all largest cliques found across the 100 runs were obtained
             as ``successful" cases for the purpose of estimating the computation time.
             The relative standard errors had a mean of 0.01 and a maximum of 0.07;
             thus, the error bars (not shown) are shorter than the diameter of the blue circles.
             The sky-blue dashed lines represent linear fits to the computation times indicated
             by the blue circles for graphs with more than 250 vertices. The red diamonds indicate
             the computation times of the conventional algorithm (Bron--Kerbosch combined
             with the enhancements by Tomita--Tanaka--Takahashi and Carraghan--Pardalos).
             For cases with computation times shorter than seven days, the average of
             10 independent runs was taken. For cases with computation times longer than
             seven days, only one run was conducted to evaluate the computation time.
             The relative standard errors had a mean of 0.01 and a maximum of 0.33;
             thus, the error bars (not shown) are shorter than the size of the red
             diamonds. The light-pink dashed lines represent linear fits to
             the computation times indicated by the red diamonds for graphs
             with more than 250 vertices.}
    \label{fig:computation-time}
\end{figure*}

\begin{table}
    \centering
    \caption{Computation time scaling with respect to the number of vertices $|V_G|$,
             computed through the linear fitting shown in Fig.~\ref{fig:computation-time}.}
    \begin{tabular}{lll}
         \hline
         Density $D$ & Algorithm~2 + SA & Conventional Algorithm\\
         \hline
         0.25 & $O(1.015^{|V_G|})$ & $O(1.010^{|V_G|})$ \\
         0.5  & $O(1.010^{|V_G|})$ & $O(1.017^{|V_G|})$ \\
         0.75 & $O(1.020^{|V_G|})$ & $O(1.038^{|V_G|})$\footnotemark \\
         \hline
    \end{tabular}
    \footnotetext[1]{
        This value was estimated for graphs where the number of vertices
        is less than 355 because the conventional algorithm did not
        terminate even after 10 days.
    }
    \label{tbl:computation-time-scaling}
\end{table}

In terms of the computation time scaling with respect to the number of
vertices, the performance of the conventional algorithm is more susceptible
to the graph density than that of Algorithm~2 using SA. The rate of increase
in the computation time of the conventional algorithm becomes significantly
higher as the graph density increases. In contrast, that of Algorithm~2
only modestly changes with the graph density. Consequently, for sparse graphs
with $D = 0.25$, the conventional algorithm exhibits better performance
than Algorithm~2, while for denser graphs with $D = 0.5$ and $0.75$,
our Algorithm~2 outperforms the conventional algorithm. Note in particular that
for dense and large random graphs with a density of 0.75 and more than 355
vertices, the conventional algorithm did not terminate even after 10 days.

Furthermore, we have recently found that our Algorithm~2 using SA requires
less computation time than the conventional algorithm for maximum clique
problems arising in a chemoinformatics application---atom-to-atom mapping
\cite{Ali2025}. This improvement contributes to achieving accurate and
practical atom-to-atom mapping without relying on any chemical reaction rule
or machine learning.

\subsubsection{Success rate}

Next, we examine the success rate---the fraction of runs in which all maximum cliques
are successfully identified---for Algorithm~2. The statistics of the number of successes
(out of 100 independent runs) for problems with different graph densities
are shown in Fig.~\ref{fig:success-rate}. Here, the success of Algorithm~2 was determined
by whether all maximum cliques obtained by the conventional exact algorithm were enumerated.
The conventional algorithm was able to solve all 100 problems for $D=0.25$ and $0.5$,
but 74 problems for $D=0.75$ within 10 days. Therefore, for $D=0.25$ and $0.5$,
the number of successes over 100 independent runs of Algorithm~2 was calculated
for all 100 problems, whereas for $D=0.75$, it was calculated for the 74 exactly-solved
problems.

\begin{figure*}
    \centering
    \includegraphics[width=\linewidth]{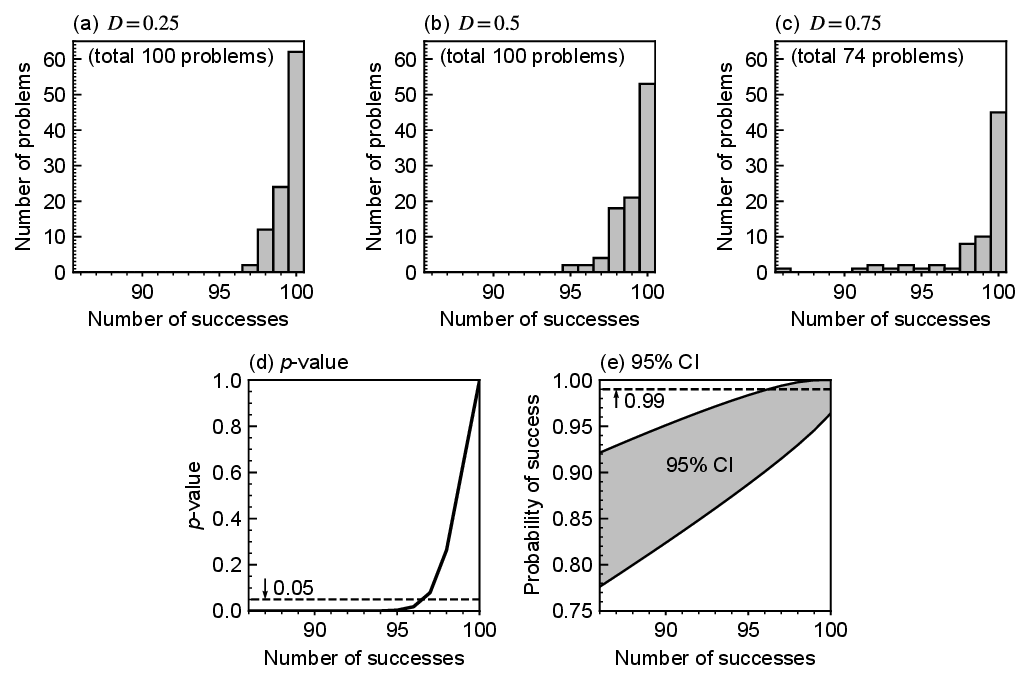}
    \caption{Statistics of the number of successes of Algorithm~2. Panels (a)--(c)
             display histograms of the number of successful runs for problems
             with different graph densities: the height of the bar at $n_\mathrm{success}$
             indicates the number of problems for which the algorithm successfully identified
             all maximum cliques in $n_\mathrm{success}$ out of 100 independent runs.
             Panel (d) shows the $p$-values for each observed number of successes,
             representing the probability of obtaining that number or fewer successes
             under the hypothesis that the true success probability is 0.99.
             The $p$-values were computed from the cumulative distribution function
             of the binomial distribution. Panel (e) presents the 95\% confidence intervals
             (CIs; also referred to as compatibility intervals \cite{Amrhein2019})
             for the estimated success probability based on each observed number of successes.
             The CIs were computed by the method detailed in \cite{Clopper1934,Thulin2014}.}
    \label{fig:success-rate}
\end{figure*}

When the cost-ordered fair sampling condition holds, Algorithm~2 guarantees
a success probability greater than 0.99 for a single run with $\epsilon=0.01$.
Consequently, observing fewer than 97 successes in 100 independent runs is
considered incompatible with the theoretical success probability, according to
the criterion that the $p$-value is less than 0.05 and/or the 95\% CI does not
include 0.99 [see Figs.~\ref{fig:success-rate}(d) and (e)]. The number of
incompatible cases increases with $D$: 0, 4, and 10 cases were observed
for $D=0.25$, $0.5$, and $0.75$, respectively. In all failure runs,
the algorithm still found at least one maximum clique but missed
some others, indicating that ground-state sampling using SA does not
always satisfy the fair sampling condition.

To assess the fairness of the ground-state sampling using SA, we conducted
chi-squared tests for each problem, using samples obtained during
the 100 independent runs of Algorithm~2. The results are shown in
Fig.~\ref{fig:fairness} and Table~\ref{tbl:category-statistics}.

\begin{figure*}
    \centering
    \includegraphics[width=0.66\linewidth]{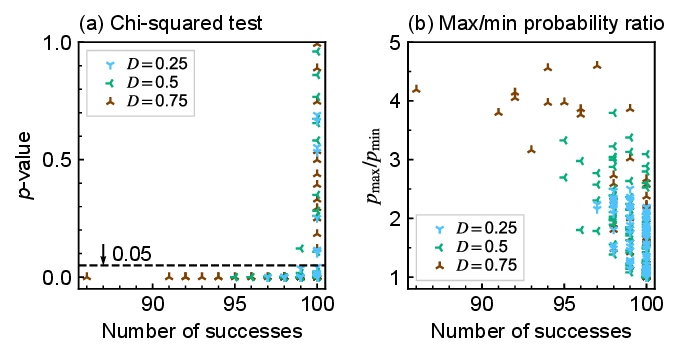}
    \caption{Fairness of the ground-state sampling using SA. Panel (a) shows
             the relationship between the number of successes and the $p$-value of
             the chi-squared test, assessing the fairness of the ground-state sampling.
             A $p$-value closer to 0 suggests that the observed sampling frequencies are
             unlikely under the hypothesis that the ground-state sampling is fair.
             For problems with a unique ground state, the $p$-values are undefined;
             thus, data points for these problems are not plotted in this panel.
             Panel (b) presents the correlation between the number of successes
             and the ratio of the maximum and minimum sampling probabilities
             among ground states, denoted by $p_\mathrm{max}/p_\mathrm{min}$.
             This ratio is also an indicator of the fairness of the ground-state
             sampling: the closer it is to 1, the more fair the ground-state
             sampling is.}
    \label{fig:fairness}
\end{figure*}

\begin{table}
    \centering
    \caption{Statistical categorization of sampling fairness
             for each problem\protect\footnotemark[1].}
    \begin{tabular}{lllll}
    \hline
    Density $D$ & All\footnotemark[2] & Unique\footnotemark[3]
    & \multicolumn{2}{c}{Multiple\footnotemark[4]} \\
    & & & ``Fair'' & ``Unfair'' \\
    \hline
    0.25 & 100 (0) & 16 (0) & 8 (0) & 76 (0) \\
    0.5  & 100 (4) & 17 (0) & 11 (0) & 72 (4) \\
    0.75 & 74 (10) & 11 (0) & 12 (0) & 51 (10) \\
    \hline
    \end{tabular}
    \footnotetext[1]{The numbers in parentheses indicate the number of incompatible cases,
                     i.e., problems for which the observed success count was less than 97.}
    \footnotetext[2]{The total number of problems solved by the conventional exact algorithm.}
    \footnotetext[3]{The number of problems with a unique ground state.}
    \footnotetext[4]{``Fair" (respectively,``Unfair") represents problems
                     with multiple ground states where the ground-state sampling
                     using SA had a $p$-value of the chi-squared test greater than or
                     equal to (respectively, less than) 0.05.}
    \label{tbl:category-statistics}
\end{table}

In Table~\ref{tbl:category-statistics}, we tentatively categorize sampling probability
distributions on multiple ground states into ``fair" and ``unfair" based on
the $p$-values of the chi-squared tests
\footnote{Note that categorizing sampling
probability distributions as ``fair'' and ``unfair'' based on the $p$-values is
not definitive; it only indicates whether each distribution is considered compatible
with the fair sampling condition under a specified criterion.}.
The numbers in parentheses in the table indicate the number of problems
where the number of successes was fewer than 97, suggesting incompatibility
with the theoretical guarantee of Algorithm~2. Note here that all cases
incompatible with the theoretical guarantee are assigned to ``unfair",
as expected. Furthermore, it is worth noting that there are many ``unfair" cases
with an estimated success probability compatible with 0.99. These facts can also
be confirmed by the relationship between the number of successes and the $p$-value
of the chi-squared test presented in Fig.~\ref{fig:fairness}(a).

As another indicator of the fairness of the ground-state sampling, we also
calculated the ratio of the maximum and minimum sampling probabilities
among ground states, denoted by $p_\mathrm{max}/p_\mathrm{min}$.
Figure~\ref{fig:fairness}(b) indicates a moderate negative correlation between
the number of successes and $p_\mathrm{max}/p_\mathrm{min}$,
with a Pearson correlation coefficient value of -0.69. As expected,
a larger variation in the sampling probability tends to
result in fewer successes.

Finally, we calculated the solution coverage defined as the number of
collected solutions divided by the total number of the optimal solutions.
For any problem the algorithm solved, the mean solution coverage of the 100 runs
was greater than or equal to 0.99. This implies that even though the algorithm
fails to enumerate all optimal solutions, only a few solutions are uncollected.
In fact, the number of uncollected solutions in failure runs was typically
one, and two or more uncollected solutions were observed
only in seven problems.

In summary, ground-state sampling using SA is not necessarily fair
in the present setting. However, our Algorithm~2 still works effectively
even for such ``unfair" cases with high success probability and/or high
solution coverage, though there is no theoretical guarantee of the success
probability for such cases yet. To theoretically ensure the success
probability, one can employ other samplers such as the Grover-mixer
quantum alternating operator ansatz algorithm \cite{Bartschi2020},
for which the fair sampling condition is theoretically guaranteed.
Alternatively, it should be helpful to extend the present algorithm
and theory to allow variation in sampling probability up to
a user-specified value of $p_\mathrm{max}/p_\mathrm{min}$.

\section{Conclusions} \label{sec:conclusion}

We have developed enumeration algorithms for combinatorial problems,
specifically (1) constraint satisfaction and (2) combinatorial optimization
problems, using Ising machines as solution samplers. Appropriate stopping
criteria for solution sampling have been derived based on the cost-ordered
fair sampler model. If the solution sampling satisfies the cost-ordered and
fair sampling conditions, the proposed algorithms have theoretical guarantees
that the failure probability is below a user-specified value $\epsilon$.
Various types of physics-based Ising machines are likely to be employed
to implement (approximate) cost-ordered fair samplers. Even though
the sampling process may not strictly satisfy the cost-ordered and
fair sampling conditions in practice, the proposed algorithms can still
function effectively with high success probability and/or high solution
coverage, as demonstrated in maximum clique enumeration using SA.
Furthermore, we showed that Algorithm~2 using SA outperforms
a conventional algorithm for maximum clique enumeration on dense
random graphs and a chemoinformatics application \cite{Ali2025}.

The proposed algorithms rely on the cost-ordered fair sampler model:
more preferred solutions are sampled more frequently, and equally preferred
solutions are sampled with equal probability. Although this model captures
desirable features of samplers for optimization and serves as an archetypal
model for this initial algorithm development, relaxing the cost-ordered
and/or fair sampling conditions should be helpful for expanding
the applicable domain of sampling-based enumeration algorithms.

Moreover, although we have focused on the use of Ising machines in this article,
the proposed algorithms can also be combined with other types of solution samplers
that can be regarded as (approximate) cost-ordered fair samplers. For example,
when combined with a Boltzmann sampler of molecular structures (in an appropriate
discretized representation), our algorithm can determine when to stop
exploring the molecular energy landscape \cite{Wales2003,Wales2018} without
missing the global minimum. Developing sampling-based enumeration algorithms
combined with samplers in various fields is a promising research direction.

We anticipate that the algorithms introduced in this work will drive
future innovations in sampling-based enumeration methods and inspire
a wide range of interdisciplinary applications.

\begin{acknowledgments}
The authors thank Seiji Akiyama for helpful discussions on
maximum clique enumeration and its application to atom-to-atom mapping.
This work was supported by JST, PRESTO Grant Number JPMJPR2018, Japan,
and by the Institute for Chemical Reaction Design and Discovery (ICReDD),
established by the World Premier International Research Initiative (WPI),
MEXT, Japan. This research was also conducted as part of collaboration with
Hitachi Hokkaido University Laboratory, established by Hitachi, Ltd.
at Hokkaido University, and performed partially under the Cooperative Research
Program of ``Network Joint Research Center for Materials and Devices (MEXT).''
\end{acknowledgments}

\section*{Author Contributions}
Y.M. conceptualized the work, developed the algorithms, and
conducted the mathematical analysis and the numerical experiments. 
M.A. contributed to the design of the statistical analysis,
and M.A. and T.K. validated the mathematical proofs.
Y.M. wrote the original draft, and M.A. and T.K.
reviewed and edited the manuscript.

\appendix

\section{Theoretical Analysis} \label{appx:theory}

This appendix presents a theoretical analysis of the failure probabilities
of Algorithms~1 and 2 proposed in this article.

\subsection{Notation} \label{appx:notation}

Let $X$ be a finite set with cardinality $n$, and let $p \colon X \to [0, 1]$
be a discrete probability distribution (probability mass function) on $X$.
Consider the sampling process from $X$, comprising independent trials,
each of which is governed by $p$. We define random variables involved in
the sampling process under the probability distribution $p$ as follows
\footnote{
  We denote the probability measure associated with
  the sampling process simply as $P$. Although this measure depends
  on $X$ and $p$, we do not explicitly indicate this dependence
  in the present article, as the symbol $P$ is always used together
  with random variable symbols indicating the distribution $p$.
}:
\begin{itemize}
\item $x^{(p)}_\tau$: The item sampled at the $\tau$th trial. By definition,
                      $P(x^{(p)}_\tau=x) = p(x)$ for any $x \in X$.
\item $S^{(p)}_\tau$: The set of distinct items that have been sampled by
                      the $\tau$th trial.
\item $\mathfrak{x}^{(p)}_i$: The $i$th distinct item sampled during the process;
                              that is, the $i$th new distinct item not previously
                              sampled.
\item $\mathfrak{S}^{(p)}_i$: The set of the first $i$ distinct sampled items,
                              i.e., $\{\mathfrak{x}_j^{(p)} \mid j=1,\dots,i\}$.
\item $T^{(p)}_m$: The number of trials needed to collect $m$ distinct items;
                   equivalently, the trial number at which the $m$th distinct
                   item $\mathfrak{x}^{(p)}_m$ is first sampled.
\item $t^{(p)}_m$: The number of trials needed to sample the $m$th distinct item
                   after having sampled $m-1$ distinct items, i.e.,
                   $T^{(p)}_m - T^{(p)}_{m-1}$.
\end{itemize}

For instance, in the sample sequence---red (trial 1), yellow (trial 2),
red (trial 3), and blue (trial 4):
\begin{itemize}
\item $x^{(p)}_3 = \text{red}$, $\mathfrak{x}^{(p)}_3 = \text{blue}$.
\item $S^{(p)}_3 = \{\text{red}, \text{yellow}\}$,
      $\mathfrak{S}^{(p)}_3 = \{\text{red}, \text{yellow}, \text{blue}\}$.
\item $T^{(p)}_3 = 4$, $t^{(p)}_3 = 2$.
\end{itemize}

Furthermore, we define $S^{(p)}_0$ and $\mathfrak{S}^{(p)}_0$ as
the empty set and $T^{(p)}_0$ as zero, initializing the process.
In the special case where $p$ is a discrete uniform distribution,
we replace the superscript $(p)$ in the notation by $(n)$
(the cardinality of $X$), e.g., $T^{(n)}_m$.

\subsection{Failure Probability of Algorithm~1} \label{appx:algorithm-1}

In this subsection, we evaluate the failure probability of Algorithm~1.
Since Algorithm~1 utilizes a fair sampler of feasible solutions,
we assume that $X$ is a feasible solution set with cardinality $n$
and the sampling probability distribution $p$ is the discrete uniform
distribution on $X$. Furthermore, Algorithm~1 samples one feasible solution
at the beginning (see line 1 of the pseudocode), and thus Algorithm~1 always
succeeds when $n = 1$. Hence, we assume $n \ge 2$.

Algorithm~1 fails to enumerate all $n$ feasible solutions if and only if
it misses a deadline for collecting $m$ distinct feasible solutions
($m = 2, \dots, n$). In other words,
if $T^{(n)}_m > \lceil m\ln(m\kappa_1/\epsilon) \rceil$ for some $m$,
Algorithm~1 halts before collecting all feasible solutions.
Therefore, the failure probability of Algorithm~1 is bounded above by
the sum of $P(T^{(n)}_m > \lceil m\ln(m\kappa_1/\epsilon) \rceil)$
over $m = 2, \dots, n$. Thus, our primary goal is to evaluate
the tail distribution of $T^{(n)}_m$.

We start by evaluating the simplest case where $m=n$,
which corresponds to the classical coupon collector's problem.

\begin{lemma} \label{lemma:tail-distribution-complete}
Suppose $X$ is a finite set with cardinality $n$ and $p$ is the discrete
uniform distribution on $X$. Let $\epsilon$ be a positive real number
less than one. In the sampling process dictated by $p$, the probability
that $T^{(n)}_n$ exceeds $\lceil n\ln(n/\epsilon) \rceil$ is less than
$\epsilon$:
\begin{equation}
P\left(T^{(n)}_n > \left\lceil n\ln\frac{n}{\epsilon} \right\rceil \right)
< \epsilon.
\end{equation}
\end{lemma}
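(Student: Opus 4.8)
The plan is to bound the tail probability by a union bound over the individual items, exploiting the uniformity of the sampling. Writing $t \coloneqq \lceil n\ln(n/\epsilon) \rceil$, I would first note that the event $\{T^{(n)}_n > t\}$ is exactly the event that at least one item of $X$ remains unsampled after $t$ trials, i.e.\ that $S^{(n)}_t \ne X$. Decomposing this event as a union over the $n$ items and applying Boole's inequality gives
\[
P\left(T^{(n)}_n > t\right)
= P\left(\bigcup_{x \in X} \{x \notin S^{(n)}_t\}\right)
\le \sum_{x \in X} P\left(x \notin S^{(n)}_t\right).
\]

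Next I would evaluate each summand. Because the trials are independent and each is uniform on $X$, a fixed item $x$ is missed at a single trial with probability $1 - 1/n$, and hence missed at all $t$ trials with probability $(1-1/n)^t$. Since this marginal is the same for every $x$, the sum collapses to $n(1-1/n)^t$, so the whole task reduces to establishing $n(1-1/n)^t < \epsilon$.

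To close the estimate I would invoke the elementary strict inequality $1 - u < \mathrm{e}^{-u}$, valid for all $u > 0$. Taking $u = 1/n$ and raising to the $t$th power—legitimate because $t \ge 1$, which holds since $n/\epsilon > 1$ and therefore $\ln(n/\epsilon) > 0$—yields $(1-1/n)^t < \mathrm{e}^{-t/n}$. The definition of the ceiling guarantees $t \ge n\ln(n/\epsilon)$, so $t/n \ge \ln(n/\epsilon)$ and $\mathrm{e}^{-t/n} \le \epsilon/n$. Chaining these gives
\[
n\left(1-\tfrac{1}{n}\right)^t < n\,\mathrm{e}^{-t/n} \le n\cdot\frac{\epsilon}{n} = \epsilon,
\]
and the strict middle inequality propagates to the final bound, completing the proof.

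This is the classical coupon-collector tail estimate, so I expect no serious obstacle; the only point needing genuine care is the \emph{strictness} of the conclusion. The step $\mathrm{e}^{-t/n} \le \epsilon/n$ is only non-strict (equality can occur when $n\ln(n/\epsilon)$ is already an integer), so the required $< \epsilon$ must be sourced from the strict form $1-u < \mathrm{e}^{-u}$ rather than from the exponent bound. I would also verify the conditions $\epsilon < 1 \le n$ used to ensure $\ln(n/\epsilon) > 0$ and $t \ge 1$, and note that the degenerate case $n = 1$ is immediate, since then $T^{(1)}_1 = 1$ while $t \ge 1$, making the tail probability zero.
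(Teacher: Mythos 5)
Your proposal is correct and follows essentially the same route as the paper's proof: a union bound over the $n$ items, the per-item miss probability $(1-1/n)^t$, and the strict inequality $1-u < \mathrm{e}^{-u}$ combined with $t \ge n\ln(n/\epsilon)$ from the ceiling. Your extra care about where the strictness comes from and about the degenerate case $n=1$ is sound but matches the same underlying argument.
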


\begin{proof}
The probability that an element $x \in X$ has not been sampled yet
up to the $\tau$th trial is given by
\begin{equation}
P\left(x \notin S^{(n)}_\tau \right) =
\left(1-\frac{1}{n}\right)^\tau < \mathrm{e}^{-\frac{\tau}{n}}.
\end{equation}
Since $T^{(n)}_n > \tau$ means that there exists $x \in X$ that has
not been sampled yet up to $\tau$, the tail distribution of $T^{(n)}_n$
can be evaluated as follows:
\begin{align}
P\left(T^{(n)}_n > \tau \right)
&= P\left(\bigcup_{x \in X}\{x \notin S^{(n)}_\tau\} \right) \nonumber \\
&\le \sum_{x \in X} P\left(x \notin S^{(n)}_\tau \right) \nonumber \\
&< n\mathrm{e}^{-\frac{\tau}{n}}.
\end{align}
By substituting $\lceil n\ln(n/\epsilon) \rceil$ for $\tau$ in
the above equation, we establish the inequality to be proved.
\end{proof}

Next, we generalize Lemma~\ref{lemma:tail-distribution-complete}
to arbitrary $m\ (\le n)$.

\begin{lemma} \label{lemma:tail-distribution-partial}
Suppose $X$ is a finite set with cardinality $n$ and $p$ is the discrete
uniform distribution on $X$. Let $\epsilon$ be a positive real number
less than one. In the sampling process dictated by $p$, for a positive
integer $m\ (\le n)$, the probability that $T^{(n)}_m$ exceeds
$\lceil m\ln(m/\epsilon) \rceil$ is bounded from above as follows:
\begin{equation}
P\left(T^{(n)}_m > \left\lceil m\ln\frac{m}{\epsilon} \right\rceil \right)
< \left(\frac{m}{n}\right)^{\left\lceil{m\ln\frac{m}{\epsilon}}\right\rceil+1}
  \binom{n}{m} \epsilon.
\end{equation}
\end{lemma}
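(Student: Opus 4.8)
The plan is to reduce the tail event to a statement about the number of distinct items collected, apply a union bound over subsets, and then finish with an elementary comparison of binomial coefficients. Write $\tau := \lceil m\ln(m/\epsilon)\rceil$. We may assume $m \ge 2$, since for $m=1$ the very first sample already yields one distinct item, so $T^{(n)}_1 = 1 \le \tau$ and the bound holds trivially. The first observation is that the tail event can be rewritten in terms of the collected set, namely $\{T^{(n)}_m > \tau\} = \{|S^{(n)}_\tau| \le m-1\}$, because not having collected $m$ distinct items by trial $\tau$ is the same as holding at most $m-1$ of them. Whenever this occurs, $S^{(n)}_\tau$ is contained in some $(m-1)$-element subset $A\subseteq X$, so the event is contained in $\bigcup_{|A|=m-1}\{S^{(n)}_\tau \subseteq A\}$. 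For a fixed such $A$, all $\tau$ independent uniform samples must land in $A$, which happens with probability $((m-1)/n)^\tau$. A union bound over the $\binom{n}{m-1}$ subsets then gives
\[
P\!\left(T^{(n)}_m > \tau\right) \le \binom{n}{m-1}\left(\frac{m-1}{n}\right)^{\!\tau}.
\]

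The remaining work is to show this is strictly smaller than the claimed $\binom{n}{m}(m/n)^{\tau+1}\epsilon$. I would rewrite $\binom{n}{m-1} = \frac{m}{n-m+1}\binom{n}{m}$, after which, dividing through by $\binom{n}{m}(m/n)^\tau$, the target inequality reduces to
\[
\frac{n}{\,n-m+1\,}\left(\frac{m-1}{m}\right)^{\!\tau} < \epsilon.
\]
Here the exponential estimate does the main work: $\left(\frac{m-1}{m}\right)^\tau = (1-1/m)^\tau < \mathrm{e}^{-\tau/m} \le \mathrm{e}^{-\ln(m/\epsilon)} = \epsilon/m$, where the last inequality uses $\tau \ge m\ln(m/\epsilon)$. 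Substituting yields the upper bound $\frac{n}{m(n-m+1)}\,\epsilon$, so it remains only to check $\frac{n}{m(n-m+1)} \le 1$, i.e. $n \le m(n-m+1)$, which rearranges to the elementary inequality $(m-1)(n-m) \ge 0$, valid for all $1 \le m \le n$. Combining the \emph{strict} exponential bound with this non-strict combinatorial inequality delivers the strict inequality to be proved, covering even the boundary case $m=n$.

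I expect the main obstacle to be choosing the correct granularity for the union bound. The naive route---a union bound over $m$-subsets combined with Lemma~\ref{lemma:tail-distribution-complete} applied to the induced $m$-item subproblem---only yields $\binom{n}{m}(m/n)^\tau\epsilon$, which exceeds the target by the factor $n/m$ and is therefore too weak; the overcounting inherent in covering a small collected set by many $m$-subsets is exactly the slack that is lost. Working instead with $(m-1)$-subsets removes this slack, since the dominant configurations $|S^{(n)}_\tau|=m-1$ are then counted exactly once, and the price is only the bookkeeping needed to convert $\binom{n}{m-1}$ back to $\binom{n}{m}$ and to track strictness through the $m=n$ boundary. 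A final subtlety worth flagging is that it is precisely the strictness of $1-1/m < \mathrm{e}^{-1/m}$ (for $m\ge 2$) that rescues the otherwise-equality boundary case, so one should present the exponential step as a strict inequality rather than appealing to $\tau$'s ceiling alone.
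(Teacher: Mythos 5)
Your proof is correct, and it takes a genuinely different route from the paper's. The paper computes the exact joint law of the inter-arrival times $t^{(n)}_1,\dots,t^{(n)}_m$ [Eq.~\eqref{eq:joint-probability}], writes the tail probability as a sum over integer compositions $\tau_1+\cdots+\tau_m=\tau^\prime>\tau$, pulls out the factor $(m/n)^{\tau^\prime}\binom{n}{m}\le (m/n)^{\tau+1}\binom{n}{m}$, recognizes the leftover sum as $P(T^{(m)}_m>\tau)$, and closes by invoking Lemma~\ref{lemma:tail-distribution-complete}. You instead rewrite $\{T^{(n)}_m>\tau\}$ as $\{|S^{(n)}_\tau|\le m-1\}$, take a union bound over the $\binom{n}{m-1}$ subsets of size $m-1$, and finish with elementary algebra: $\binom{n}{m-1}=\frac{m}{n-m+1}\binom{n}{m}$, the strict estimate $1-1/m<\mathrm{e}^{-1/m}$, and $(m-1)(n-m)\ge 0$. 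I checked each step: the reduction of the target to $\frac{n}{n-m+1}\left(1-\frac{1}{m}\right)^{\tau}<\epsilon$ is right, the case $m=1$ is correctly dispatched, and your observation that strictness of the exponential inequality is what rescues $m=n$ is accurate, since there both the ceiling slack and the factor $\frac{n}{m(n-m+1)}$ can degenerate to equality. Your route is shorter and fully self-contained---it never uses Lemma~\ref{lemma:tail-distribution-complete}---and your diagnosis of why the window-$\tau$ union bound over $m$-subsets loses a factor of $n/m$ (and why $(m-1)$-subsets restore it) is correct; the paper recovers that same factor differently, because on the tail event the ``all samples confined to the final $m$-set'' constraint persists through trial $\tau+1$, which is precisely what its composition sum encodes (this is the interpretation given right after the lemma). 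What the paper's heavier machinery buys is reuse: the joint-probability formula and composition-sum structure are recycled nearly verbatim in Lemma~\ref{lemma:tail-distribution-subset} to handle the non-uniform, subset-constrained sampling needed for Algorithm~2, whereas your union-bound argument would need its own (admittedly straightforward) adaptation there.
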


\begin{proof}
The random variable $t^{(n)}_i$ follows the geometric distribution given by
\begin{equation}
P\left(t^{(n)}_i = \tau_i \right) =
\left(\frac{i-1}{n}\right)^{\tau_i-1} \frac{n-(i-1)}{n}.
\end{equation}
This is because the event that $t^{(n)}_i$ equals $\tau_i$ occurs when
the following two conditions are met. First, during the first $\tau_i-1$
trials, the sampler generates any of the $i-1$ already-sampled items.
Second, at the $\tau_i$th trial, it samples one of the $n-(i-1)$ items
not previously sampled. Furthermore, the random variables
$t^{(n)}_1, t^{(n)}_2, \dots, t^{(n)}_m$ are mutually independent,
as the sampling trials are independent. Consequently, we obtain
\begin{align}
&P\left(t^{(n)}_1 = \tau_1, t^{(n)}_2 = \tau_2,
        \dots, t^{(n)}_m = \tau_m \right) \nonumber \\
&= \prod_{i=1}^{m} \left(\frac{i-1}{n}\right)^{\tau_i-1} \frac{n-(i-1)}{n}
   \nonumber \\
&= \frac{1}{n^{\tau^\prime}}
   \frac{n!}{(n-m)!} \prod_{i=1}^m (i-1)^{\tau_i-1},
\label{eq:joint-probability}
\end{align}
where $\tau^\prime = \sum_{i=1}^m \tau_i$, and the last step follows
the equation $\prod_{i=1}^m [n-(i-1)] = n!/(n-m)!$.

The random variable $T^{(n)}_m$ can be expressed as
$t^{(n)}_1 + t^{(n)}_2 + \cdots + t^{(n)}_m$. If $t^{(n)}_i = \tau_i$
for each $i$ from $1$ to $m$, any combination of positive integers
$\tau_1, \tau_2, \dots, \tau_m$, satisfying the condition
$\tau_1 + \tau_2 + \cdots + \tau_m = \tau^\prime$, results in
$T^{(n)}_m = \tau^\prime$. Let us introduce the set of
such combinations, which is given by
\begin{align}
&\mathcal{C}_m(\tau^\prime) \coloneqq \nonumber \\
&\left\{(\tau_1, \tau_2, \dots, \tau_m) \in \mathbb{N}^m
       \ \middle| \
       \tau_1 + \tau_2 + \cdots + \tau_m = \tau^\prime \right\}.
\end{align}
Now the tail distribution of $T^{(n)}_m$ can be written as
\begin{align}
&P\left(T^{(n)}_m > \tau \right) \nonumber \\
&=\sum_{\tau^\prime = \tau+1}^\infty
  \sum_{\bm{\tau}_{1:m} \in \mathcal{C}_m(\tau^\prime)}
  P\left(t^{(n)}_1 = \tau_1, t^{(n)}_2 = \tau_2,
         \dots, t^{(n)}_m = \tau_m \right) \nonumber \\
&=\sum_{\tau^\prime = \tau+1}^\infty
  \sum_{\bm{\tau}_{1:m} \in \mathcal{C}_m(\tau^\prime)}
  \frac{1}{n^{\tau^\prime}} \frac{n!}{(n-m)!}
  \prod_{i=1}^m (i-1)^{\tau_i-1},
\label{eq:tail-distribution-partial-equality}
\end{align}
where $\tau$ is an arbitrary positive integer, and $\bm{\tau}_{1:m}$
denotes a tuple $(\tau_1, \tau_2, \dots, \tau_m)$ collectively.
The second summation over $\bm{\tau}_{1:m}$ on the right-hand side accounts
for every possible combination of $\tau_1, \tau_2, \dots, \tau_m$ that
satisfies $T^{(n)}_m = \tau^\prime$. The first summation over $\tau^\prime$
covers all cases where $T^{(n)}_m$ exceeds $\tau$.

We further transform the above equation as follows:
\begin{align}
&P\left(T^{(n)}_m > \tau \right) \nonumber \\
&=\sum_{\tau^\prime = \tau+1}^\infty
  \sum_{\bm{\tau}_{1:m} \in \mathcal{C}_m(\tau^\prime)}
  \left(\frac{m}{n}\right)^{\tau^\prime} \frac{n!}{(n-m)!m!}
  \frac{m!}{m^{\tau^\prime}} \prod_{i=1}^m (i-1)^{\tau_i-1} \nonumber \\
&\le \left(\frac{m}{n}\right)^{\tau+1} \binom{n}{m}
  \sum_{\tau^\prime = \tau+1}^\infty
  \sum_{\bm{\tau}_{1:m} \in \mathcal{C}_m(\tau^\prime)}
  \frac{m!}{m^{\tau^\prime}} \prod_{i=1}^m (i-1)^{\tau_i-1} \nonumber \\
&=\left(\frac{m}{n}\right)^{\tau+1} \binom{n}{m}
  P\left(T^{(m)}_m > \tau \right).
\end{align}
In the transformation from the first line to the second line,
we replace the factor $(m/n)^{\tau^\prime}$ with
$(m/n)^{\tau+1}$ because $(m/n) \le 1$ and $\tau^\prime \ge \tau+1$.
The last step of the transformation is according to
Eq.~\eqref{eq:tail-distribution-partial-equality} where $n$ is replaced by $m$.
Substituting $\lceil m\ln(m/\epsilon) \rceil$ for $\tau$ in the above
equation and applying Lemma~\ref{lemma:tail-distribution-complete} complete
the proof.
\end{proof}

This tail distribution estimate may be roughly interpreted as follows:
Consider an event where the sampler generates solutions only from
an $m$-element subset of $X$. The probability that this event occurs
consecutively at least until the $(\tau+1)$th trial is $(m/n)^{\tau+1}$.
Furthermore, under this event, the probability that the number of trials
needed to collect all $m$ solutions in this subset exceeds $\tau$ is
$P(T^{(m)}_m > \tau)$. Considering all possible combinations of
$m$ solutions from $X$, an upper bound for the probability that
$T^{(n)}_m$ exceeds $\tau$ would be given by
\begin{equation}
\binom{n}{m} \left(\frac{m}{n}\right)^{\tau+1}
P\left(T^{(m)}_m > \tau \right).
\end{equation}
This expression appears in the last equation of the above proof.

We now have an upper bound estimate for the tail distribution of $T^{(n)}_m$.
To calculate an upper bound for the failure probability of Algorithm~1, we will
sum $P(T^{(n)}_m > \lceil m\ln(m\kappa_1/\epsilon) \rceil)$ over $m=2$ to $n$.
However, the upper bound for the tail distribution derived in
Lemma~\ref{lemma:tail-distribution-partial} is still complex and difficult
to sum over $m$. Therefore, our next goal is to simplify the right-hand side
of the inequality in Lemma~\ref{lemma:tail-distribution-partial}.

\begin{lemma} \label{lemma:prefactor-upper-bound}
Let $n$ and $m$ be positive integers satisfying $2 \le m \le n$, and let
$\epsilon$ be a positive real number less than one. Then, the following
inequality holds:
\begin{equation}
\left(\frac{m}{n}\right)^{\left\lceil{m\ln\frac{m}{\epsilon}}\right\rceil}
\binom{n}{m}
< \left(\frac{m}{n}\right)^{\alpha m},
\end{equation}
where $\alpha \coloneqq \ln(1/\epsilon) - 1$.
\end{lemma}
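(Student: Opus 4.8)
The plan is to take logarithms, reduce the statement to a clean $\epsilon$-independent inequality, and then prove the latter by a monotonicity argument anchored at the diagonal $m=n$. First I would divide both sides by $(m/n)^{\alpha m}$ and take logarithms. Since $m\le n$ gives $\ln(m/n)\le 0$, the claim becomes $\ln\binom{n}{m} < (L-\alpha m)\ln(n/m)$, where $L\coloneqq\lceil m\ln(m/\epsilon)\rceil$. Using $\alpha=\ln(1/\epsilon)-1$ together with $L\ge m\ln(m/\epsilon)=m\ln m+m\ln(1/\epsilon)$ yields $L-\alpha m\ge m(\ln m+1)$. Because $\ln(n/m)>0$ for $m<n$, it then suffices to prove the $\epsilon$-free bound $\ln\binom{n}{m} < m(\ln m+1)\ln(n/m)$. (At the diagonal $m=n$ both sides of the lemma equal $1$, so the genuine content is this strict inequality for $m<n$, with $m=n$ a boundary equality.)

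The crux is that the naive estimate $\binom{n}{m}\le n^m/m!$ is far too loose when $n$ is close to $m$: there $\binom{n}{m}$ grows only polynomially in $n$ while $n^m/m!$ is exponential, so a direct substitution fails in exactly the tightest regime. To circumvent this I would anchor at the diagonal, where $\binom{m}{m}=1$ makes the target an equality, and prove monotonicity instead. Concretely, set $r(n)\coloneqq\binom{n}{m}\,(m/n)^{m(\ln m+1)}$, so that the desired inequality is $r(n)<1$ while $r(m)=1$. It then suffices to show that $r$ is strictly decreasing in the integer $n\ge m$, i.e. that $F(n)\coloneqq\ln\!\big(r(n)/r(n+1)\big)=m(\ln m+1)\ln(1+1/n)-\ln\tfrac{n+1}{n+1-m}$ is positive.

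I would establish $F>0$ by treating $n$ as a real variable on $[m,\infty)$ and analysing its shape. Differentiating gives $F'(n)=\tfrac{m}{n+1}\big[\tfrac{1}{n+1-m}-\tfrac{\ln m+1}{n}\big]$, whose bracket is positive for $n$ near $m$ and negative for large $n$ (as $\ln m+1>1$), changing sign exactly once; hence $F$ is unimodal, increasing then decreasing. Since $F(n)\to 0$ as $n\to\infty$ and is positive at the left endpoint, a function that rises from a positive value and then descends to the limit $0$ stays positive on all of $[m,\infty)$. The main obstacle is the endpoint estimate $F(m)=m(\ln m+1)\ln(1+1/m)-\ln(m+1)>0$: I would obtain it from $m\ln(1+1/m)>1-\tfrac{1}{2m}$ and $\ln(m+1)<\ln m+1/m$, which reduce the claim to the elementary inequality $2m>\ln m+3$ for $m\ge 2$. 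Combining the monotonicity of $r$ with $r(m)=1$ gives $r(n)<1$ for every $n>m$, which is precisely the reduced inequality and hence the lemma.
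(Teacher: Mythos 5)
Your proof is correct, but it takes a genuinely different route from the paper's. The paper treats $n$ as a continuous variable: it defines $g(u) = (m/u)^{\lceil m\ln(m/\epsilon)\rceil}\,\prod_{i=1}^m[u-(i-1)]/m!$ for $u\ge m$, bounds its logarithmic derivative by $\tfrac{\mathrm{d}}{\mathrm{d}u}\ln g(u) < -\alpha m/u$ using the monotonic decrease of $\sum_{i=1}^m u/[u-(i-1)]$ together with the harmonic-number estimate $1+\tfrac{1}{2}+\cdots+\tfrac{1}{m} < 1+\ln m$, and integrates from $m$ to $n$; since $g(m)=1$, the lemma follows in one stroke. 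You instead strip out $\epsilon$ first, via $\lceil m\ln(m/\epsilon)\rceil - \alpha m \ge m(\ln m+1)$, reducing everything to the $\epsilon$-free inequality $\ln\binom{n}{m} < m(\ln m+1)\ln(n/m)$, and then prove that by discrete monotonicity of $r(n)=\binom{n}{m}(m/n)^{m(\ln m+1)}$, which forces a shape analysis of the increment $F$: your derivative formula, the single sign change of the bracket, the endpoint reduction to $2m>\ln m+3$, and the limit $F\to 0$ all check out. Note that the two proofs share the same anchor (equality at $n=m$, decay in $n$) and even the same arithmetic — the paper's $m(1+\ln m)-\lceil m\ln(m/\epsilon)\rceil \le -\alpha m$ is exactly your $L-\alpha m\ge m(\ln m+1)$ — but the paper applies it inside the derivative while you apply it up front. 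What each buys: the paper's continuous-derivative-plus-integration argument is shorter and needs only one elementary bound, whereas your route isolates a clean, reusable $\epsilon$-independent combinatorial inequality at the cost of a more delicate three-part analysis. A further merit of your write-up is that it explicitly flags that at $m=n$ both sides of the lemma equal $1$, so the strict inequality genuinely holds only for $m<n$; the paper's own proof has the same limitation (its integration step is vacuous at $m=n$) but does not say so — harmless in the application, since the lemma is combined there with a strict inequality from Lemma~\ref{lemma:tail-distribution-partial}.
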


\begin{proof}
Define a function $g$ by the following expression:
\begin{equation}
g(u) \coloneqq
\left(\frac{m}{u}\right)^{\left\lceil{m\ln\frac{m}{\epsilon}}\right\rceil}
\frac{\prod_{i=1}^m [u-(i-1)]}{m!} \quad (u \ge m).
\end{equation}
The inequality to be proven can be expressed as $g(n) \le (m/n)^{\alpha m}$.
Differentiating $\ln g(u)$ with respect to $u$ gives
\begin{align}
\frac{\mathrm{d}}{\mathrm{d}u} \ln g(u)
&= -\frac{\left\lceil{m\ln\frac{m}{\epsilon}}\right\rceil}{u} +
    \sum_{i=1}^m \frac{1}{u-(i-1)} \nonumber \\
&= \frac{1}{u}
   \left[\sum_{i=1}^m \frac{u}{u-(i-1)} -
         \left\lceil{m\ln\frac{m}{\epsilon}}\right\rceil \right].
\end{align}
The summation $\sum_{i=1}^m u/[u-(i-1)]$ decreases monotonically
as $u$ increases. Hence, for $u \ge m$, the summation is upper
bounded by $\sum_{i=1}^m m/[m-(i-1)]\ [=m(1+1/2+\dots+1/m)]$,
which is the $m$th harmonic number multiplied by $m$. Furthermore,
the $m$th harmonic number ($m \ge 2$) can be evaluated as
\begin{align}
\sum_{i=1}^m \frac{1}{m-(i-1)}
&= 1 + \sum_{k=2}^m \frac{1}{k} \nonumber \\
&< 1 + \int_1^m \frac{\mathrm{d}s}{s} \nonumber \\
&= 1 + \ln m.
\end{align}
Therefore, we obtain
\begin{align}
\frac{\mathrm{d}}{\mathrm{d}u} \ln g(u)
&< \frac{1}{u}
  \left[m(1+\ln m) - m\ln\frac{m}{\epsilon}\right] \nonumber \\
&= -\frac{\alpha m}{u},
\end{align}
where $\alpha = \ln(1/\epsilon) - 1$.
Integrating both sides of this inequality from $m$ to $n$ yields
\begin{equation}
\ln \frac{g(n)}{g(m)} < -\alpha m \ln\frac{n}{m}.
\end{equation}
Because $g(m) = 1$, this inequality implies
\begin{equation}
g(n) =
\left(\frac{m}{n}\right)^{\left\lceil{m\ln\frac{m}{\epsilon}}\right\rceil}
\binom{n}{m}
< \left(\frac{m}{n}\right)^{\alpha m}.
\end{equation}
This concludes the proof.
\end{proof}

We further simplify the upper bound as follows:

\begin{lemma} \label{lemma:prefactor-upper-bound-2}
Let $n$ and $m$ be positive integers satisfying $m \le n$, and let
$\alpha$ be a positive real number. Then the following inequalities hold:
\begin{align}
\left(\frac{m}{n}\right)^{\alpha m}
&\le \left(\frac{2}{n}\right)^{2\alpha} \mathrm{e}^{-\beta(m-2)},
& &\text{if}\quad 2 \le m < \frac{n}{\mathrm{e}}, \\
\left(\frac{m}{n}\right)^{\alpha m}
&\le \mathrm{e}^{\frac{\alpha}{\mathrm{e}-1}(m-n)},
& &\text{if}\quad \frac{n}{\mathrm{e}} < m \le n,
\end{align}
where $\beta$ is defined as
\begin{equation}
\beta \coloneqq
\frac{\frac{1}{\mathrm{e}}+\frac{1}{3}\ln\frac{1}{3}}
     {\frac{1}{\mathrm{e}}-\frac{1}{3}} \alpha.
\label{eq:beta-def}
\end{equation}
\end{lemma}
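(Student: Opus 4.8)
The plan is to take logarithms of both target inequalities and recognize each as a statement about chords/secants of the single convex function $h(t) := t\ln t$ evaluated at $t = m/n$. Dividing the logarithm of each inequality by $\alpha n$ converts $(m/n)^{\alpha m}$ into $n\,h(m/n)$ and rescales the right-hand sides; since $h''(t) = 1/t > 0$, $h$ is strictly convex on $(0,1]$, and this convexity does all the work. So I would state at the outset that $h$ is convex and then treat the two regimes separately.

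For the second inequality I would set $t = m/n \in (1/e, 1]$. Taking logs and dividing by $\alpha n$, the claim becomes $h(t) \le \frac{1}{e-1}(t-1)$. The right-hand side is exactly the chord of $h$ joining $(1/e, h(1/e)) = (1/e, -1/e)$ and $(1, h(1)) = (1, 0)$: its slope is $\frac{0 - (-1/e)}{1 - 1/e} = \frac{1}{e-1}$ and it passes through $(1,0)$. Because a convex function lies on or below each of its chords, $h(t)$ stays below this chord throughout $[1/e, 1]$, which is precisely the desired bound. This half is immediate.

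For the first inequality, with $t = m/n$ and $t_0 = 2/n$, the log-form (divided by $\alpha n$) reads $h(t) \le h(t_0) - \frac{\beta}{\alpha}(t - t_0)$; for $m > 2$ this is equivalent to the secant-slope bound $\frac{h(t) - h(t_0)}{t - t_0} \le -\frac{\beta}{\alpha}$ (the case $m = 2$ is an equality requiring no argument). I would first fix $n$ and vary $m$: by convexity the secant slope from the fixed left endpoint $t_0$ to $t$ is increasing in $t$, so on the admissible range $t \in (t_0, 1/e)$ its supremum is the limiting value at $t = 1/e$, namely $\frac{h(1/e) - h(t_0)}{1/e - t_0} = -\xi(n)$, where $\xi(n) := \frac{1/e + (2/n)\ln(2/n)}{1/e - 2/n}$. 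Hence it suffices to verify $\beta/\alpha \le \xi(n)$ for every admissible $n$.

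The remaining step, which is the main obstacle and the reason the specific constant $\beta$ appears, is to minimize $\xi$ over $n$. Observe that $\xi(n)$ equals $-1$ times the secant slope of $h$ between $t_0 = 2/n$ and $1/e$; since secant slopes of a convex function are monotone in their endpoints, $\xi(n)$ is increasing in $n$ (equivalently decreasing in $t_0 = 2/n$). The first inequality only has content when $2 \le m < n/e$ is nonempty, i.e. $n > 2e$, which for integer $n$ forces $n \ge 6$; this integrality is essential, because the continuous relaxation would allow $n \downarrow 2e$ and drive $\xi \to 0$. Evaluating at the smallest admissible integer gives $\min_{n \ge 6} \xi(n) = \xi(6) = \frac{1/e + (1/3)\ln(1/3)}{1/e - 1/3} = \beta/\alpha$ by the definition of $\beta$, so $\beta/\alpha \le \xi(n)$ holds for all $n \ge 6$ and the inequality follows. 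The crux is precisely this calibration: one must notice that $\beta$ is tuned so that the boundary integer case $n = 6$ (where $t_0 = 1/3$) is exactly tight, and that monotonicity of $\xi$ together with the integer constraint $n \ge 6$ promotes this single equality into the bound for all admissible $n$.
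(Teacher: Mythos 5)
Your proof is correct and follows essentially the same route as the paper: both rest on convexity of $h(u)=u\ln u$, bounding it by the chord/secant over $[1/\mathrm{e},1]$ for the second inequality and over $[2/n,1/\mathrm{e}]$ for the first, with $\beta$ calibrated so that the smallest admissible integer $n=6$ (forced by $n>2\mathrm{e}$) is the tight case. Your use of secant-slope monotonicity is just a rephrasing of the paper's chord argument, though your explicit justification that $\xi(n)$ is increasing in $n$ (the paper appeals to a figure) and your remark that integrality of $n$ is essential are nice touches.
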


\begin{proof}
The left-hand side of the inequalities can be written as
\begin{equation}
\left(\frac{m}{n}\right)^{\alpha m} =
\exp\left(n\alpha \left(\frac{m}{n}\right)\ln\left(\frac{m}{n}\right)\right).
\label{eq:exp-ulnu}
\end{equation}
To evaluate the exponent in the above equation, we examine
a function $h$ defined as
\begin{equation}
h(u) \coloneqq u\ln u
\end{equation}
for $u \in [2/n, 1]$. Here, the range of $u$ corresponds
to $2 \le m \le n$ via the relation $u = m/n$.
The graph of $v= u\ln u$ is shown in Fig.~\ref{fig:ulnu}.

\begin{figure}
    \centering
    \includegraphics[width=\linewidth]{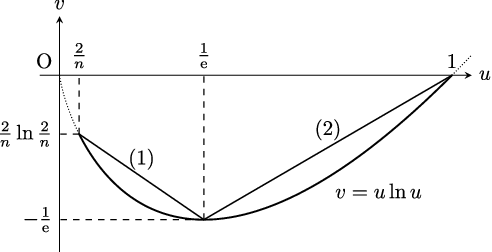}
    \caption{The graph of $v = u\ln u$. The straight lines (1) and (2)
             provide upper bounds for the value of $u \ln u$ in the intervals
             $[2/n, 1/\mathrm{e}]$ and $[1/\mathrm{e}, 1]$, respectively.
             These lines are used to prove the two inequalities
             in Lemma~\ref{lemma:prefactor-upper-bound-2}.}
    \label{fig:ulnu}
\end{figure}

The function $h$ is convex. Therefore, for all $u_1, u_2 \in [2/n, 1]$
and all $\lambda \in [0, 1]$,
\begin{equation}
h((1-\lambda)u_1 + \lambda u_2) \le (1-\lambda)h(u_1) + \lambda h(u_2).
\end{equation}
This inequality is equivalent to
\begin{equation}
h(u) \le h(u_1) + \frac{h(u_2) - h(u_1)}{u_2 - u_1}(u - u_1),
\label{eq:convex-inequality}
\end{equation}
where $u\ [= (1-\lambda)u_1 + \lambda u_2]$ lies between $u_1$ and $u_2$.
The right-hand side of Eq.~\eqref{eq:convex-inequality} represents the line passing
through points $(u_1, h(u_1))$ and $(u_2, h(u_2))$. We apply this inequality
to two intervals: $[2/n, 1/\mathrm{e}]$ and $[1/\mathrm{e}, 1]$,
as shown in Fig.~\ref{fig:ulnu}.

First, suppose $2 \le m < n/\mathrm{e}$. This condition corresponds to
the interval $[2/n, 1/\mathrm{e}]$, implying $n > 2\mathrm{e}$.
Let $u_1 = 2/n$ and $u_2 = 1/\mathrm{e}$. Then Eq.~\eqref{eq:convex-inequality} gives
\begin{equation}
h(u) \le
\frac{2}{n}\ln\frac{2}{n} -
\frac{\frac{1}{\mathrm{e}}+\frac{2}{n}\ln\frac{2}{n}}
     {\frac{1}{\mathrm{e}}-\frac{2}{n}} \left(u-\frac{2}{n}\right)
\label{eq:inequality-ulnu-1}
\end{equation}
for $u \in [2/n, 1/\mathrm{e}]$ [see the line (1) in Fig.~\ref{fig:ulnu}].
We can verify that the coefficient of $u$ on the right-hand side decreases
as $n$ increases. As we can see from Fig.~\ref{fig:ulnu}, when $n$ becomes larger
(i.e., $2/n$ becomes smaller), the slope of the line (1) becomes steeper
in the negative direction. Thus, the coefficient attains its maximum value
when $n=6$, which is the smallest integer satisfying $n > 2\mathrm{e}$:
\begin{equation}
-\frac{\frac{1}{\mathrm{e}}+\frac{2}{n}\ln\frac{2}{n}}
      {\frac{1}{\mathrm{e}}-\frac{2}{n}}
\le - \frac{\frac{1}{\mathrm{e}}+\frac{1}{3}\ln\frac{1}{3}}
           {\frac{1}{\mathrm{e}}-\frac{1}{3}}
= -\frac{\beta}{\alpha}.
\end{equation}
Therefore, we obtain
\begin{align}
\left(\frac{m}{n}\right)^{\alpha m}
&= \exp\left[n\alpha \cdot h\left(\frac{m}{n}\right)\right] \nonumber \\
&\le \exp\left[n\alpha \left(\frac{2}{n}\ln\frac{2}{n}-
     \frac{\beta}{\alpha}\left(\frac{m}{n}-\frac{2}{n}\right)\right)\right] \nonumber \\
&= \left(\frac{2}{n}\right)^{2\alpha} \mathrm{e}^{-\beta(m-2)}
\end{align}
for $2 \le m < n/\mathrm{e}$. This is the first inequality of the lemma.

Next, suppose $n/\mathrm{e} < m \le n$, which corresponds to the interval
$[1/\mathrm{e}, 1]$. Let $u_1 = 1$ and $u_2 = 1/\mathrm{e}$.
Then Eq.~\eqref{eq:convex-inequality} gives
\begin{equation}
h(u) \le 0 + \frac{-\frac{1}{\mathrm{e}}-0}{\frac{1}{\mathrm{e}}-1}(u-1)
= \frac{u-1}{\mathrm{e}-1}
\end{equation}
for $u \in [1/\mathrm{e}, 1]$ [see the line (2) in Fig.~\ref{fig:ulnu}].
Therefore, applying this inequality with $u = m/n$ to Eq.~\eqref{eq:exp-ulnu},
we obtain
\begin{align}
\left(\frac{m}{n}\right)^{\alpha m}
&\le \exp\left(n\alpha \frac{\frac{m}{n}-1}{\mathrm{e}-1}\right) \nonumber\\
&= \exp\left(\frac{\alpha}{\mathrm{e}-1}(m-n)\right)
\end{align}
for $n/\mathrm{e} < m \le n$. This is the second inequality of the lemma.
\end{proof}

Now we are ready to prove that the failure probability of Algorithm~1
is less than $\epsilon$.

\begin{theorem} \label{theorem:algorithm-1}
Let $X$ be the set of all feasible solutions to be enumerated.
Suppose that the number of feasible solutions, denoted by $n$, is unknown.
Let $\epsilon \in (0, 1/\mathrm{e})$ be a user-specified tolerance for
the failure probability of the exhaustive solution enumeration. Then,
using a fair sampler that follows the discrete uniform distribution on $X$,
Algorithm~1 successfully enumerates all feasible solutions in $X$ with
a probability greater than $1-\epsilon$, regardless of the unknown value of $n$.
\end{theorem}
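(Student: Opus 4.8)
The plan is to control the total failure probability by a union bound over the collection deadlines and then prove that the resulting finite sum lies strictly below $\epsilon$. As established in the discussion preceding the theorem, Algorithm~1 fails exactly when at least one deadline $\lceil m\ln(m\kappa_1/\epsilon)\rceil$ is missed for some $m \in \{2,\dots,n\}$, so the failure probability is at most $\sum_{m=2}^{n} P\bigl(T^{(n)}_m > \lceil m\ln(m\kappa_1/\epsilon)\rceil\bigr)$. The whole argument then reduces to showing this sum is less than $\epsilon$, and crucially the bound must be uniform in the unknown cardinality $n$.

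First I would estimate each summand. Because $\kappa_1 > 1$ gives $\epsilon/\kappa_1 \in (0,1)$, I can invoke Lemma~\ref{lemma:tail-distribution-partial} with $\epsilon$ replaced by $\epsilon/\kappa_1$, noting $\ln\bigl(m/(\epsilon/\kappa_1)\bigr) = \ln(m\kappa_1/\epsilon)$, to obtain
\[
P\bigl(T^{(n)}_m > \lceil m\ln(m\kappa_1/\epsilon)\rceil\bigr)
< \left(\frac{m}{n}\right)^{\lceil m\ln(m\kappa_1/\epsilon)\rceil+1}\binom{n}{m}\frac{\epsilon}{\kappa_1}.
\]
Since $m/n \le 1$ and $\lceil m\ln(m\kappa_1/\epsilon)\rceil \ge \lceil m\ln(m/\epsilon)\rceil$, Lemma~\ref{lemma:prefactor-upper-bound} (applied with the original $\epsilon$, which keeps $\alpha$ consistent with the definition of $\kappa_1$) bounds the binomial prefactor by $(m/n)^{\alpha m}$; discarding the leftover factor $m/n \le 1$ yields the clean per-term estimate
\[
P\bigl(T^{(n)}_m > \lceil m\ln(m\kappa_1/\epsilon)\rceil\bigr)
< \left(\frac{m}{n}\right)^{\alpha m}\frac{\epsilon}{\kappa_1}.
\]

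The heart of the proof is then to verify $\sum_{m=2}^{n}(m/n)^{\alpha m} \le \kappa_1$, from which the total is bounded by $(\epsilon/\kappa_1)\kappa_1 = \epsilon$. I would split the range at $m = n/\mathrm{e}$; since $\mathrm{e}$ is irrational, no integer coincides with $n/\mathrm{e}$, so every $m$ falls cleanly into exactly one of the two regimes covered by Lemma~\ref{lemma:prefactor-upper-bound-2}. For $2 \le m < n/\mathrm{e}$ (a nonempty range only when $n \ge 6$), the first inequality gives $(m/n)^{\alpha m} \le (2/n)^{2\alpha}\mathrm{e}^{-\beta(m-2)}$; using the uniform bound $(2/n)^{2\alpha} \le 3^{-2\alpha}$ valid for $n \ge 6$ and extending the geometric series in $m$ to infinity reproduces the first term $3^{-2\alpha}/(1-\mathrm{e}^{-\beta})$ of $\kappa_1$. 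For $n/\mathrm{e} < m \le n$, the second inequality gives $(m/n)^{\alpha m} \le \mathrm{e}^{\frac{\alpha}{\mathrm{e}-1}(m-n)}$; reindexing by $j = n-m$ and again summing the geometric tail to infinity reproduces the second term $1/(1-\mathrm{e}^{-\alpha/(\mathrm{e}-1)})$. Adding the two contributions gives exactly $\kappa_1$.

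The step I expect to be the main obstacle is making the small-$m$ contribution independent of the unknown $n$: the naive estimate carries the $n$-dependent factor $(2/n)^{2\alpha}$, and the key observation is that this regime is empty unless $n > 2\mathrm{e}$, forcing $n \ge 6$ and hence $2/n \le 1/3$ — this is precisely where the constant $3$ in $\kappa_1$ originates. I would finish by confirming the hypotheses tied to the stated range $\epsilon \in (0, 1/\mathrm{e})$: this range is exactly what makes $\alpha = \ln(1/\epsilon) - 1$ positive, and one verifies $\beta > 0$ as well, so that $\mathrm{e}^{-\beta} < 1$ and $\mathrm{e}^{-\alpha/(\mathrm{e}-1)} < 1$ and both geometric series converge. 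Since each per-term inequality is strict and the index set is nonempty, the summed bound is strictly below $\epsilon$, giving success probability greater than $1-\epsilon$.
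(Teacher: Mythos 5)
Your proof is correct and follows essentially the same route as the paper's: a union bound over the deadlines, Lemma~\ref{lemma:tail-distribution-partial} with $\epsilon$ replaced by $\epsilon/\kappa_1$, Lemma~\ref{lemma:prefactor-upper-bound} to reduce each term to $(m/n)^{\alpha m}\,\epsilon/\kappa_1$, and the split at $m = n/\mathrm{e}$ via Lemma~\ref{lemma:prefactor-upper-bound-2} with infinite geometric series reproducing exactly the two terms of $\kappa_1$. The only (harmless) difference is that you apply Lemma~\ref{lemma:prefactor-upper-bound} with the original $\epsilon$ after first relaxing the ceiling exponent, whereas the paper applies it with $\epsilon/\kappa_1$ and then relaxes the exponent $\bigl(\ln(\kappa_1/\epsilon)-1\bigr)m$ down to $\alpha m$; both orderings yield the identical per-term bound.
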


\begin{proof}
Since Algorithm~1 always succeeds when $n=1$, it is sufficient to prove
the theorem for $n \ge 2$. Algorithm~1 fails to exhaustively enumerate
all solutions if and only if the number of samples needed to collect
$m$ distinct solutions, denoted by $T^{(n)}_m$, exceeds the deadline
$\lceil m\ln(m\kappa_1/\epsilon) \rceil$ for some positive integer
$m \in [2, n]$. Here, $\kappa_1$ is defined in Eq.~\eqref{eq:kappa1-def}.
Note that $\kappa_1 > 1$, and thus $(\epsilon/\kappa_1) < 1$.
Therefore, the failure probability of Algorithm~1 can be evaluated as
\begin{align}
&P\left(\bigcup_{m=2}^n \left\{
        T^{(n)}_m > \left\lceil m\ln\frac{m\kappa_1}{\epsilon} \right\rceil
       \right\}\right) \nonumber\\
&\le \sum_{m=2}^n P\left(
        T^{(n)}_m > \left\lceil m\ln\frac{m\kappa_1}{\epsilon} \right\rceil
      \right) \nonumber\\
&< \sum_{m=2}^n
   \left(\frac{m}{n}\right)^{\left(\ln\frac{\kappa_1}{\epsilon}-1\right) m}
   \left(\frac{\epsilon}{\kappa_1}\right) \nonumber\\
&< \sum_{m=2}^n
   \left(\frac{m}{n}\right)^{\alpha m}
   \left(\frac{\epsilon}{\kappa_1}\right)
\end{align}
The second last step in the derivation follows from
Lemmas~\ref{lemma:tail-distribution-partial} and \ref{lemma:prefactor-upper-bound}
[with $\epsilon$ in these lemmas replaced by $\epsilon/\kappa_1\ (<1)$],
and the inequality $m/n \le 1$. In the final step, $\ln(\kappa_1/\epsilon)-1$
is replaced by $\alpha\ [\coloneqq \ln(1/\epsilon)-1]$ because
$\ln(\kappa_1/\epsilon) > \ln(1/\epsilon)$.

Now we aim to demonstrate that the summation $\sum_{m=2}^n (m/n)^{\alpha m}$
is less than $\kappa_1$, which makes the right-hand side of the above equation
bounded above by $\epsilon$. Using the inequalities given in
Lemma~\ref{lemma:prefactor-upper-bound-2}, we get
\begin{equation}
\sum_{m=2}^n \left(\frac{m}{n}\right)^{\alpha m}
\le \sum_{m=2}^{\left\lfloor \frac{n}{\mathrm{e}} \right\rfloor}
    \left(\frac{2}{n}\right)^{2\alpha} \mathrm{e}^{-\beta(m-2)} +
    \sum_{m = \left\lceil \frac{n}{\mathrm{e}} \right\rceil}^n
    \mathrm{e}^{\frac{\alpha}{\mathrm{e}-1}(m-n)},
\label{eq:split-summation}
\end{equation}
where $\lfloor\ \rfloor$ denotes the floor function.
For the right-hand side, the first summation is considered to be zero
when $\left\lfloor \frac{n}{\mathrm{e}} \right\rfloor < 2$.
Additionally, when $\lceil n/\mathrm{e} \rceil = 1$, it is considered that
the variable $m$ in the second summation starts at $2$ instead of $1$.
Since the first term contributes only if $n > 2\mathrm{e} = 5.43\cdots$,
the factor $(2/n)^{2\alpha}$ in the first summation can be bounded above
by $(2/6)^{2\alpha}= 3^{-2\alpha}$. Furthermore, we can bound the finite
summations from above by their corresponding infinite geometric series.
Therefore, we obtain
\begin{equation}
\sum_{m=2}^n \left(\frac{m}{n}\right)^{\alpha m}
< 3^{-2\alpha}
  \sum_{m=2}^\infty \mathrm{e}^{-\beta(m-2)} +
  \sum_{m^\prime = 0}^\infty
  \mathrm{e}^{-\frac{\alpha}{\mathrm{e}-1}m^\prime},
\end{equation}
where $m^\prime$ denotes $n-m$. Since $\epsilon$ is set to be
less than $1/\mathrm{e}$, the parameter $\alpha\ [= \ln(1/\epsilon)-1]$
is positive, which also implies that the parameter $\beta$, given in
Eq.~\eqref{eq:beta-def}, is positive. Thus, the common ratios of the geometric series,
$\exp(-\beta)$ and $\exp(-\alpha/(\mathrm{e}-1))$, are less than one.
Consequently, these geometric series converge, which leads to
\begin{equation}
\sum_{m=2}^n \left(\frac{m}{n}\right)^{\alpha m}
< \frac{3^{-2\alpha}}{1-\mathrm{e}^{-\beta}} +
  \frac{1}{1-\mathrm{e}^{-\frac{\alpha}{\mathrm{e}-1}}}.
\end{equation}
The right-hand side equals $\kappa_1$ by definition. Therefore, we conclude
that the failure probability of Algorithm~1 remains strictly below $\epsilon$,
irrespective of the value of $n$.
\end{proof}

This proof clarifies that $\kappa_1$ is designed to bound the sum of
the failure probability at each deadline for $m \in [2, n]$.
In other words, $\kappa_1$ compensates for the increased error chances
caused by checking the number of collected solutions at every deadline
for $m \in [2, n]$---which is necessitated by the lack of information
about $n$.

To derive $\kappa_1$, we replaced the finite summations by the infinite
summations. This transformation effectively removes the dependence on
the unknown value of $n$. Although infinitely many redundant terms are
included in the infinite summations, they become exponentially small as
the index increases; thus, convergence is expected to be fast.
Indeed, the value of $\kappa_1$ is around 1.14 when $\epsilon=0.01$,
which is only slightly larger than its lower bound 1.

\subsection{Failure Probability of Algorithm~2} \label{appx:algorithm-2}

In this subsection, we evaluate the failure probability of Algorithm~2.
Since Algorithm~2 utilizes a cost-ordered fair sampler of feasible solutions,
we assume that $X$ is a feasible solution set with cardinality $n$ and
$p$ is a probability distribution on $X$ satisfying the fair and cost-ordered
sampling conditions given in Eq.~\eqref{eq:cost-ordered-fair}. Furthermore,
Algorithm~2 initially samples one feasible solution (see line 22 in the pseudocode),
and thus it always succeeds when $n=1$. Hence, we assume $n \ge 2$.

When the current minimum cost among sampled solutions is $\theta$,
the algorithm discards any sample with cost exceeding $\theta$.
In other words, the sampler effectively generates samples from the set of
feasible solutions with cost lower than or equal to $\theta$.
To analyze Algorithm~2 equipped with this mechanism, we introduce
the following notation: let us define $X_\theta$ and $Y_\theta$ as
\begin{align}
X_\theta &\coloneqq \{x \in X \mid f(x) \le \theta \}, \\
Y_\theta &\coloneqq \{x \in X \mid f(x) = \theta \}.
\end{align}
We denote the cardinalities of $X_\theta$ and $Y_\theta$ by
$n_\theta$ and $l_\theta$, respectively. The sampling probability
distribution for $\theta$, denoted by $p_\theta$, is defined as:
\begin{equation}
p_\theta(x) \coloneqq
\begin{cases}
\frac{p(x)}{\sum_{x^\prime \in X_\theta}p(x^\prime)},
& \text{if}\ x \in X_\theta, \\
0, & \text{if}\ x \notin X_\theta.
\end{cases}
\end{equation}
The second line corresponds to rejecting any $x \notin X_\theta$.
This sampling distribution also satisfies the cost-ordered and fair
sampling conditions: for any two feasible solutions $x_1, x_2 \in X_\theta$,
\begin{align}
f(x_1) < f(x_2) \Rightarrow p_\theta(x_1) \ge p_\theta(x_2), \\
f(x_1) = f(x_2) \Rightarrow p_\theta(x_1) = p_\theta(x_2).
\end{align}
Additionally, for $\theta > \min_{x \in X} f(x)$, the cost value for
$x \in X_\theta \setminus Y_\theta$ is less than that for any $y \in Y_\theta$
by definition, and thus
\begin{equation}
y \in Y_\theta \ \text{and}\ x \in X_\theta \setminus Y_\theta
\Rightarrow p_\theta(y) \le p_\theta(x).
\end{equation}
This condition is used in Lemma~\ref{lemma:tail-distribution-subset},
as described below. We emphasize here that although $X_\theta$, $Y_\theta$,
$p_\theta$, and related quantities are not directly accessible in practice
without complete knowledge of the energy (or cost) landscape and
the sampling distribution, introducing these notations enables
the subsequent general analysis and allows us to derive an upper bound
on the failure probability of Algorithm~2 that does not depend on
these unknown quantities.

We first analyze a failure scenario where Algorithm~2 terminates before
sampling any optimal solution. In this scenario, the algorithm returns $m^\prime$
feasible solutions whose cost value is $\theta$, where $1 \le m^\prime \le l_\theta$
and $\theta > \min_{x \in X} f(x)$. Such a failure occurs if the following
conditions are met during the sampling process governed by $p_\theta$ over $X_\theta$:
\begin{itemize}
\item The first $m^\prime$ sampled distinct solutions all have cost value $\theta$;
      that is, $\mathfrak{S}^{(p_\theta)}_{m^\prime} \subset Y_\theta$.
\item The time to obtain the $(m^\prime+1)$th distinct solution,
      $T^{(p_\theta)}_{m^\prime + 1}$, exceeds the deadline for
      collecting $m^\prime + 1$ distinct solutions.
\end{itemize}
Note that $X_\theta$ contains at least $m^\prime + 1$ distinct solutions
because $n_\theta \ge l_\theta + 1\ [\because \theta > \min_{x \in X} f(x)]$,
and consequently $T^{(p_\theta)}_{m^\prime + 1}$ is well-defined.

The following lemma provides an upper bound for the probability of
such an event. [For notational simplicity, we omit subscript $\theta$
and replace $m^\prime$ with $m - 1$ ($2 \le m \le l_\theta + 1$) in the lemma.]

\begin{lemma} \label{lemma:tail-distribution-subset}
Let $X$ be a finite set with cardinality $n$, and let $Y$ be a proper subset
of $X$ with cardinality $l$. Assume that the probability distribution $p$
governing the sampling process from $X$ satisfies the conditions:
(1) $y_1 \in Y \ \text{and}\ y_2 \in Y \Rightarrow p(y_1) = p(y_2)$;
(2) $y \in Y \ \text{and}\ x \in X \setminus Y \Rightarrow p(y) \le p(x)$.
Then, for any positive integer $m \in [2, l+1]$ and any positive real
number $\epsilon$ less than one, the probability that $T^{(p)}_m$ exceeds
$\lceil m\ln(m/\epsilon) \rceil$ and $\mathfrak{S}^{(p)}_{m-1}$ is a subset
of $Y$ is bounded from above as follows:
\begin{align}
&P\left(T^{(p)}_m > \left\lceil m\ln\frac{m}{\epsilon} \right\rceil,\
       \mathfrak{S}^{(p)}_{m-1} \subset Y \right) \nonumber\\
&< \left(\frac{m}{n}\right)^{\left\lceil{m\ln\frac{m}{\epsilon}}\right\rceil+1}
   \binom{n}{m} \epsilon.
\end{align}
\end{lemma}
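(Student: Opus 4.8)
The plan is to mimic the proof of Lemma~\ref{lemma:tail-distribution-partial}, but now tracking the extra requirement that the first $m-1$ discovered items lie in $Y$ and accounting for the non-uniform distribution $p$. First I would record the joint law of the discovery order together with the inter-arrival times. If $z_1,\dots,z_m$ denotes the ordered sequence of the first $m$ distinct items and $t_i=\tau_i$ the corresponding gaps, then the same geometric-stage argument used in Lemma~\ref{lemma:tail-distribution-partial} gives the joint probability $\prod_{i=1}^m\bigl(\sum_{j<i}p(z_j)\bigr)^{\tau_i-1}p(z_i)$ (the $i=1$ factor enforcing $\tau_1=1$ via the $0^0$ convention, exactly as before). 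The crucial simplification is that on the event $\mathfrak{S}^{(p)}_{m-1}\subset Y$ we have $z_1,\dots,z_{m-1}\in Y$, so by the fairness assumption (1) the common value $p(z_j)=q$ holds for $j<m$; hence $\sum_{j<i}p(z_j)=(i-1)q$ for $i\le m$, and the joint probability collapses to $q^{\tau'-1}p(z_m)\prod_{i=1}^m(i-1)^{\tau_i-1}$, where $\tau'=\sum_i\tau_i$.

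Next I would sum over all admissible configurations. Summing over the ordered distinct $(m-1)$-tuples drawn from $Y$ contributes the falling factorial $l!/(l-m+1)!$, while summing $p(z_m)$ over the remaining new item gives $1-(m-1)q$. Writing $C_{\tau'}:=\sum_{\bm{\tau}\in\mathcal{C}_m(\tau')}\prod_{i}(i-1)^{\tau_i-1}$ and recognizing, as in Lemma~\ref{lemma:tail-distribution-partial}, that $\tfrac{m!}{m^{\tau'}}C_{\tau'}=P(T^{(m)}_m=\tau')$, I can recast the tail as $\tfrac{l!}{(l-m+1)!}(1-(m-1)q)\tfrac{1}{m!\,q}\sum_{\tau'>\tau}(mq)^{\tau'}P(T^{(m)}_m=\tau')$. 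Since $m\le n$ forces $mq\le 1$ and every term has $\tau'\ge\tau+1$, I pull out $(mq)^{\tau+1}$ and bound the remaining sum by $P(T^{(m)}_m>\tau)$.

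The remaining work is to turn the prefactor into $\binom{n}{m}$ and invoke the base case. The key quantitative input is $q\le 1/n$: normalization together with the minimality condition (2), namely $p(y)\le p(x)$ for $y\in Y,\,x\in X\setminus Y$, gives $1=\sum_{x}p(x)\ge lq+(n-l)q=nq$. Using $1-(m-1)q\le 1$ and $\tfrac{(mq)^{\tau+1}}{q}=m^{\tau+1}q^{\tau}\le m^{\tau+1}n^{-\tau}=n\,(m/n)^{\tau+1}$, the prefactor is at most $\tfrac{l!\,n}{(l-m+1)!\,m!}(m/n)^{\tau+1}$. What then remains is the combinatorial inequality $n\cdot\tfrac{l!}{(l-m+1)!}\le\tfrac{n!}{(n-m)!}$, which I would verify by a term-by-term comparison of the two length-$m$ descending products $n,l,l-1,\dots,l-m+2$ and $n,n-1,\dots,n-m+1$; here the hypothesis that $Y$ is a \emph{proper} subset (so $l\le n-1$) is exactly what makes each factor on the left dominated by the one on the right. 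Dividing by $m!$ converts this into $\le\binom{n}{m}$, and setting $\tau=\lceil m\ln(m/\epsilon)\rceil$ and applying Lemma~\ref{lemma:tail-distribution-complete} with $n$ replaced by $m$ (so that $P(T^{(m)}_m>\tau)<\epsilon$) yields the claimed strict bound.

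I expect the main obstacle to be the careful bookkeeping in the first two steps—correctly isolating the distinct roles of the two sampling conditions—rather than any single hard estimate: fairness on $Y$ is what produces the clean factor $(i-1)q$ and the falling factorial, whereas the minimality condition is what is needed (via normalization) to secure $q\le 1/n$. Once both are in place, the reduction to the uniform-on-$m$ tail and the term-by-term combinatorial comparison proceed mechanically, and the final substitution simply reuses Lemma~\ref{lemma:tail-distribution-complete}.
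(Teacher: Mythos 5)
Your proposal is correct, and it rests on the same three essential ingredients as the paper's proof: fairness on $Y$ collapsing each stage into a geometric factor with ratio $(i-1)q$; condition (2) plus normalization forcing $q \le 1/n$; and properness of $Y$ (i.e., $l \le n-1$) supplying the product comparison. Where you genuinely diverge is in the reduction target. The paper bounds the joint probability $P(t^{(p)}_1=\tau_1,\dots,t^{(p)}_m=\tau_m,\ \mathfrak{S}^{(p)}_{m-1}\subset Y)$ termwise by the joint probability for \emph{uniform sampling on all of $X$} [its Eq.~\eqref{eq:joint-probability}], thereby obtaining the clean intermediate domination $P(T^{(p)}_m > \tau,\ \mathfrak{S}^{(p)}_{m-1}\subset Y) \le P(T^{(n)}_m > \tau)$, and then invokes Lemma~\ref{lemma:tail-distribution-partial} as a black box. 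You instead bypass Lemma~\ref{lemma:tail-distribution-partial} entirely: by recasting the sum in terms of $P(T^{(m)}_m=\tau^\prime)$, pulling out $(mq)^{\tau+1}$, and bounding the prefactor via the term-by-term comparison $n\cdot l(l-1)\cdots(l-m+2)\le n!/(n-m)!$, you reduce directly to the $m$-element coupon collector and finish with Lemma~\ref{lemma:tail-distribution-complete}; in effect you inline the pull-out trick from the proof of Lemma~\ref{lemma:tail-distribution-partial}, with its factor $(m/n)^{\tau^\prime}$ replaced by your $(mq)^{\tau^\prime}$. The paper's route buys a conceptually transparent intermediate statement (the constrained, non-uniform tail is dominated by the unconstrained uniform tail) and verbatim reuse of an existing lemma; your route buys a self-contained argument in which each hypothesis is visibly consumed at the exact step that needs it. One cosmetic caveat: your recast divides by $q$, so the degenerate case $q=0$ should be dispatched separately (there the left-hand side vanishes, since no element of $Y$ is ever sampled while $m-1\ge 1$, so the strict bound holds trivially); the paper's formulation never divides by $p_Y$ and so avoids this.
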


\begin{proof}
Due to the first condition on $p$, we can denote the equal probability
of sampling $y \in Y$ as $p_Y$, i.e., $p(y) = p_Y$ for all $y \in Y$.
This sampling probability satisfies $p_Y \le 1/n$, because if $p_Y > 1/n$,
it would violate the unit-measure axiom of probability:
\begin{align}
\sum_{x\in X}p(x) &= \sum_{y \in Y}p_Y + \sum_{x\in X \setminus Y}p(x) \nonumber\\
&\ge \sum_{y \in Y}p_Y + \sum_{x\in X \setminus Y}p_Y \nonumber \\
&\qquad (\because \text{the second condition on }p) \nonumber \\
&= np_Y > 1.
\end{align}

Given that $\mathfrak{S}^{(p)}_{i-1} \subset Y$, there are $l-(i-1)$
uncollected items in $Y$. The probability of sampling the $i$th new
distinct item $\mathfrak{x}^{(p)}_i$ from these uncollected items in $Y$
at $t^{(p)}_i = \tau_i$ is then calculated as
\begin{align}
&P\left(t^{(p)}_i = \tau_i,\ \mathfrak{x}^{(p)}_i \in Y
       \ \middle| \ \mathfrak{S}^{(p)}_{i-1} \subset Y \right) \nonumber \\
&= [(i-1)p_Y]^{\tau_i-1} \left[l-(i-1)\right]p_Y.
\end{align}
Similarly, the probability that $t^{(p)}_i$ equals $\tau_i$ is given by
\begin{equation}
P\left(t^{(p)}_i = \tau_i \ \middle| \
       \mathfrak{S}^{(p)}_{i-1} \subset Y \right) =
[(i-1)p_Y]^{\tau_i-1} \left[1-(i-1)p_Y\right],
\end{equation}
because any of the uncollected items in $X$ can be $\mathfrak{x}^{(p)}_i$,
and the probability of sampling such an item is
$1-\sum_{y \in \mathfrak{S}^{(p)}_{i-1}}p_Y = 1 - (i-1)p_Y$.
(Note that for the fair sampling case where $p_Y = 1/n$, this probability
 distribution is equivalent to the geometric distribution, as shown in
 the first equation of the proof of Lemma~\ref{lemma:tail-distribution-partial}.)
Furthermore, the random variables $t^{(p)}_1, t^{(p)}_2, \dots, t^{(p)}_m$
are mutually independent, reflecting the independence of sampling trials.
Additionally, we note that
\begin{equation}
\mathfrak{S}^{(p)}_j \subset Y
\iff \bigwedge_{i=1}^j \mathfrak{x}^{(p)}_i \in Y.
\end{equation}
Therefore, we get the following equation using the chain rule:
\begin{align}
&P\left(t^{(p)}_1 = \tau_1, t^{(p)}_2 = \tau_2, \dots, t^{(p)}_m = \tau_m,\
        \mathfrak{S}^{(p)}_{m-1} \subset Y \right) \nonumber\\
&= \left[\prod_{i=1}^{m-1}
         P\left(t^{(p)}_i = \tau_i,\ \mathfrak{x}^{(p)}_i \in Y \ \middle| \
                \mathfrak{S}^{(p)}_{i-1} \subset Y \right)\right] \nonumber\\
&\qquad \times P\left(t^{(p)}_m = \tau_m \ \middle| \
                 \mathfrak{S}^{(p)}_{m-1} \subset Y \right) \nonumber\\
&= \left[\prod_{i=1}^{m-1} [(i-1)p_Y]^{\tau_i-1}
         \left[l-(i-1)\right] p_Y\right] \nonumber\\
&\qquad  \times [(m-1)p_Y]^{\tau_m-1} \left[1-(m-1)p_Y\right] \nonumber\\
&= p_Y^{\tau^\prime-1} \left[1-(m-1)p_Y\right]
   \left[\prod_{i=1}^{m-1}\left[l-(i-1)\right]\right] \nonumber\\
&\qquad \times \left[\prod_{i=1}^m(i-1)^{\tau_i-1}\right],
\end{align}
where $\tau^\prime$ denotes $\sum_{i=1}^m \tau_i$. Since $p_Y \le 1/n$
and $1-(m-1)p_Y \le 1$, we can derive the following inequality:
\begin{align}
&p_Y^{\tau^\prime-1}\left[1-(m-1)p_Y\right]
 \prod_{i=1}^{m-1}\left[l-(i-1)\right] \nonumber\\
&\le \frac{1}{n^{\tau^\prime-1}} \prod_{i=1}^{m-1}\left[l-(i-1)\right] \nonumber\\
&= \frac{\prod_{i=1}^m\left[n-(i-1)\right]}{n^{\tau^\prime}}
   \frac{n \prod_{i=1}^{m-1}\left[l-(i-1)\right]}
        {\prod_{i=1}^m\left[n-(i-1)\right]} \nonumber\\
&= \frac{\prod_{i=1}^m\left[n-(i-1)\right]}{n^{\tau^\prime}}
   \prod_{i=1}^{m-1}\frac{(l+1)-i}{n-i} \nonumber\\
&\le \frac{\prod_{i=1}^m\left[n-(i-1)\right]}{n^{\tau^\prime}}.
\end{align}
Here, we derive the final expression following the fact that $Y$ is
a proper subset of $X$, which implies $n \ge l+1$. In summary, we obtain
the inequality
\begin{align}
&P\left(t^{(p)}_1 = \tau_1, t^{(p)}_2 = \tau_2, \dots, t^{(p)}_m = \tau_m,\
        \mathfrak{S}^{(p)}_{m-1} \subset Y \right) \nonumber\\
&\le n^{-\tau^\prime} \prod_{i=1}^m (i-1)^{\tau_i-1}\left[n-(i-1)\right].
\end{align}
According to Eq.~\eqref{eq:joint-probability} in the proof of
Lemma~\ref{lemma:tail-distribution-partial}, the right-hand side equals the joint
probability of $t^{(p)}_1, t^{(p)}_2, \dots, t^{(p)}_m$ for the case
where $p$ is the discrete uniform distribution on $X$, i.e.,
$P\left(t^{(n)}_1=\tau_1, t^{(n)}_2=\tau_2, \dots, t^{(n)}_m=\tau_m \right)$.

As in the proof of Lemma~\ref{lemma:tail-distribution-partial}, we sum the joint
probabilities over all combinations of $\tau_1, \tau_2, \dots, \tau_m$ that
result in $\tau^\prime > \lceil m\ln(m/\epsilon) \rceil$. This calculation
yields
\begin{align}
&P\left(T^{(p)}_m > \left\lceil m\ln\frac{m}{\epsilon} \right\rceil,\
       \mathfrak{S}^{(p)}_{m-1} \subset Y \right) \nonumber\\
&\le P\left(T^{(n)}_m > \left\lceil m\ln\frac{m}{\epsilon} \right\rceil \right).
\end{align}
By applying the inequality established in Lemma~\ref{lemma:tail-distribution-partial}
to the above inequality, we obtain the inequality stated in
the current lemma.
\end{proof}

The inequality in Lemma~\ref{lemma:tail-distribution-subset} can also be roughly
interpreted as follows. The probability that all of the $m-1$ items already sampled
belong to $Y$ is maximized when $p_Y$ is maximized. Under the constraint that
$p(x) \ge p_Y = 1/n$ for all $x \in X$ and $\sum_{x \in X}p(x) = 1$,
this maximization of $p_Y$ results in $p$ being the uniform distribution
on $X$. Thus, we consider the fair sampling case, replacing
superscripts $(p)$ with $(n)$. Obviously,
\begin{align}
&P\left(T^{(n)}_m > \left\lceil m\ln\frac{m}{\epsilon} \right\rceil,\
        \mathfrak{S}^{(n)}_{m-1} \subset Y \right) \nonumber \\
&\le P\left(T^{(n)}_m > \left\lceil m\ln\frac{m}{\epsilon} \right\rceil \right).
\end{align}
This equation is the same as the last equation in the above proof,
except for the difference between the superscripts $(n)$ and $(p)$
on the left-hand sides.

Finally, we prove that the failure probability of Algorithm~2 is
less than $\epsilon$. The following theorem is the main theoretical
result of this article.

\begin{theorem} \label{theorem:algorithm-2}
Let $X$ be the set of all feasible solutions, and let $f: X \to \mathbb{R}$
be the cost function of a combinatorial optimization problem. In addition,
let $\epsilon \in (0, 1/\mathrm{e}^{1.5})$ be a user-specified tolerance
for the failure probability of enumerating all optimal solutions.
Then, using a cost-ordered fair sampler on $X$, Algorithm~2 successfully
enumerates all optimal solutions in $\operatorname{argmin}_{x \in X}f(x)$
with a probability exceeding $1-\epsilon$, regardless of the unknown minimum
value of $f$ and the unknown number of the optimal solutions.
\end{theorem}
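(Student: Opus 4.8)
The plan is to decompose the overall failure event according to the cost level at which Algorithm~2 halts, and then bound each level's contribution using the tail estimates already established. Let the distinct values of $f$ be $\theta_0 < \theta_1 < \cdots < \theta_K$ with $\theta_0 = \min_{x\in X} f(x)$, and write $n_j \coloneqq n_{\theta_j} = |X_{\theta_j}|$ and $l_j \coloneqq l_{\theta_j} = |Y_{\theta_j}|$. Since the current minimum cost $\theta$ maintained by the algorithm is nonincreasing and the set $S$ together with the sample counter is reset whenever $\theta$ decreases, the sampling that occurs while the algorithm sits at level $\theta_j$ is, conditionally on reaching that level, a fresh sampling process governed by $p_{\theta_j}$ on $X_{\theta_j}$. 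The algorithm halts at exactly one level, so the total failure probability equals the sum over $j$ of the probability of halting at $\theta_j$ in a way that misses an optimal solution; bounding each summand and using a union bound will suffice.

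First I would treat the two kinds of levels separately. For a nonoptimal level $\theta_j$ ($j \ge 1$), halting there is itself a failure; it occurs only if, for some $m \in [2, l_j + 1]$, the first $m-1$ distinct samples all lie in $Y_{\theta_j}$ and the $m$th distinct sample arrives after the deadline $\lceil m\ln(m\kappa_2/\epsilon)\rceil$. This is precisely the event bounded by Lemma~\ref{lemma:tail-distribution-subset} (applied with $X \to X_{\theta_j}$, $Y \to Y_{\theta_j}$, $\epsilon \to \epsilon/\kappa_2$; the hypotheses hold because $p_{\theta_j}$ is fair on $Y_{\theta_j}$ and, by the cost-ordered condition, assigns at least as much mass to the lower-cost solutions in $X_{\theta_j}\setminus Y_{\theta_j}$). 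For the optimal level $\theta_0$, fair sampling makes $p_{\theta_0}$ uniform on the $n_0$ optimal solutions, so this phase is exactly Algorithm~1 and Lemma~\ref{lemma:tail-distribution-partial} applies directly. In both cases, after invoking Lemma~\ref{lemma:prefactor-upper-bound} to absorb the binomial prefactor and using $\ln(\kappa_2/\epsilon)-1 > \alpha$, a union bound over $m$ shows that the failure contributed by level $\theta_j$ is at most $(\epsilon/\kappa_2)\sum_{m=2}^{M_j}(m/n_j)^{\alpha m}$, where $M_j = n_0$ for $j=0$ and $M_j = l_j+1$ for $j \ge 1$. Hence it suffices to prove that the grand sum $\sum_{j}\sum_{m}(m/n_j)^{\alpha m}$ is strictly less than $\kappa_2$.

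To bound this grand sum I would split each inner sum into the small-$m$ regime ($2 \le m < n_j/\mathrm{e}$) and the large-$m$ regime ($n_j/\mathrm{e} < m$) and apply the two inequalities of Lemma~\ref{lemma:prefactor-upper-bound-2}, extending each truncated geometric sum to its infinite counterpart. The small-$m$ part of level $\theta_j$ is then at most $\frac{4^\alpha}{1-\mathrm{e}^{-\beta}}\,n_j^{-2\alpha}$ and appears only when $n_j \ge 6$; because the cardinalities $n_0 < n_1 < \cdots$ are distinct integers, those exceeding $5$ form a subset of $\{6,7,8,\dots\}$, so summing over all levels yields at most $\frac{4^\alpha}{1-\mathrm{e}^{-\beta}}\sum_{k=6}^{\infty}k^{-2\alpha} = \frac{4^\alpha}{1-\mathrm{e}^{-\beta}}\bigl(\zeta(2\alpha) - \sum_{k=1}^{5}k^{-2\alpha}\bigr)$, which is exactly the first term of $\kappa_2$. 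For the large-$m$ part I would substitute $m' = n_j - m$: the optimal level contributes $\sum_{m'\ge 0}\mathrm{e}^{-\frac{\alpha}{\mathrm{e}-1}m'} \le (1-\mathrm{e}^{-\frac{\alpha}{\mathrm{e}-1}})^{-1}$, while a nonoptimal level $\theta_j$ has $m'$ starting at $n_j - (l_j+1) = n_{j-1}-1$, so its large-$m$ part is at most $\mathrm{e}^{-\frac{\alpha}{\mathrm{e}-1}(n_{j-1}-1)}/(1-\mathrm{e}^{-\frac{\alpha}{\mathrm{e}-1}})$. Using $n_{j-1} \ge j$ (again by distinctness) and summing the resulting geometric series over $j$ gives the $(1-\mathrm{e}^{-\frac{\alpha}{\mathrm{e}-1}})^{-2}$ contribution; combined with the optimal-level term this reproduces exactly the second term $\frac{2-\mathrm{e}^{-\frac{\alpha}{\mathrm{e}-1}}}{(1-\mathrm{e}^{-\frac{\alpha}{\mathrm{e}-1}})^2}$ of $\kappa_2$. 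Adding the two regimes bounds the grand sum by $\kappa_2$, hence the total failure probability by $\epsilon$; the convergence of $\zeta(2\alpha)$ is what forces the hypothesis $\epsilon < \mathrm{e}^{-1.5}$, i.e. $2\alpha > 1$.

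I expect the main obstacle to be the multilevel bookkeeping in the last paragraph rather than any single inequality: one must verify that the regime boundaries $n_j/\mathrm{e}$ and the truncation points $l_j+1$ interact correctly (in particular that levels with $n_j \le 5$ lie entirely in the large-$m$ regime and contribute nothing to the zeta sum), and that the distinctness and nesting $X_{\theta_0}\subset X_{\theta_1}\subset\cdots$ are precisely what let the per-level power and geometric sums be dominated by the clean, $n$-independent series defining $\kappa_2$. Getting the offset $m' = n_{j-1}-1$ and the index shift that produces the extra multiplicity in $\frac{2-\mathrm{e}^{-\frac{\alpha}{\mathrm{e}-1}}}{(1-\mathrm{e}^{-\frac{\alpha}{\mathrm{e}-1}})^2}$ exactly right is the delicate step.
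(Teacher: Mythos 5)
Your proposal is correct and follows essentially the same route as the paper's proof: the same decomposition by halting level (Lemma~\ref{lemma:tail-distribution-subset} for nonoptimal levels, the Algorithm~1/Theorem~\ref{theorem:algorithm-1} analysis for the optimal level), the same absorption of prefactors via Lemmas~\ref{lemma:prefactor-upper-bound} and \ref{lemma:prefactor-upper-bound-2}, and the same use of strict monotonicity of the cardinalities $n_\theta$ and the identity $n_\theta - l_\theta = n_{\theta_{\mathrm{prev}}}$ to dominate the level sums by the zeta and squared-geometric series defining $\kappa_2$. The only cosmetic difference is that you invoke $n_{j-1}\ge j$ where the paper cites strict increase of $\{n_\theta - l_\theta\}$ directly; these are the same argument.
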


\begin{proof}
The failure scenarios of Algorithm~2 fall into two categories:
\begin{enumerate}
\item  Algorithm~2 terminates without having sampled any optimal solution.
\item  Algorithm~2 terminates and returns a proper subset of optimal solutions
       with some optimal solution(s) missing.
\end{enumerate}

First, we analyze failures of the first type: Algorithm~2
samples a feasible solution with cost $\theta > \min_{x \in X}f(x)$
and stops during the sampling process for $\theta$, which is governed by
$p_\theta$. Let $\mathcal{E}_\theta$ denote the event where Algorithm~2
samples a feasible solution with cost $\theta$. Given that
the event $\mathcal{E}_\theta$ occurs, the algorithm stops during
the sampling for $\theta$ when all the first $m-1$ sampled distinct
solutions have cost value $\theta$
(i.e., $\mathfrak{S}^{(p_\theta)}_{m-1} \subset Y_\theta$),
and $T^{(p_\theta)}_m$ exceeds the deadline for collecting
$m$ distinct solutions ($m \in [2, l_\theta+1]$). Then, based on
Lemma~\ref{lemma:tail-distribution-subset} with $\epsilon$ replaced
by $\epsilon/\kappa_2\ (<1)$, we can evaluate the probability
of this failure case, denoted by $P^\mathrm{fail}_\theta$, as follows:
\begin{align}
&P^\mathrm{fail}_\theta \nonumber\\
&= P\left(\bigcup_{m=2}^{l_\theta+1}\left\{
            T^{(p_\theta)}_m >
            \left\lceil m\ln\frac{m\kappa_2}{\epsilon} \right\rceil
            \land
            \mathfrak{S}^{(p_\theta)}_{m-1} \subset Y_\theta
          \right\} \cap \mathcal{E}_\theta \right) \nonumber\\
&\le \sum_{m=2}^{l_\theta+1}
     P\left(T^{(p_\theta)}_m >
            \left\lceil m\ln\frac{m\kappa_2}{\epsilon} \right\rceil,\
            \mathfrak{S}^{(p_\theta)}_{m-1} \subset Y_\theta \right) \nonumber\\
&< \frac{\epsilon}{\kappa_2} \sum_{m=2}^{l_\theta+1}
   \left(\frac{m}{n_\theta}\right)^{
    \left\lceil m\ln\frac{m\kappa_2}{\epsilon} \right\rceil + 1
   } \binom{n_\theta}{m}.
\label{eq:failure-probability-theta}
\end{align}
Using Lemma~\ref{lemma:prefactor-upper-bound} with $\epsilon$ replaced by $\epsilon/\kappa_2$,
each term in the summation can be simplified as
\begin{equation}
P^\mathrm{fail}_\theta
< \frac{\epsilon}{\kappa_2} \sum_{m=2}^{l_\theta+1}
  \left(\frac{m}{n_\theta}\right)^{
    \left(\ln\frac{\kappa_2}{\epsilon}-1\right) m
  }
< \frac{\epsilon}{\kappa_2} \sum_{m=2}^{l_\theta+1}
  \left(\frac{m}{n_\theta}\right)^{\alpha m}
\end{equation}
where $\alpha =\ln(1/\epsilon)-1$. The replacement of
$\ln(\kappa_2/\epsilon)-1$ by $\alpha$ (at the second inequality)
is valid, because $(m/n_\theta) \le 1$, and $\kappa_2 > 1$ implies
$\ln(\kappa_2/\epsilon)-1 > \alpha$. Following the proof of
Theorem~\ref{theorem:algorithm-1}, we can derive the following
upper bound of $P^\mathrm{fail}_\theta$ using
Lemma~\ref{lemma:prefactor-upper-bound-2}:
\begin{align}
&P^\mathrm{fail}_\theta \nonumber\\ 
&< \left[
   \sum_{m=2}^{\left\lfloor\frac{n_\theta}{\mathrm{e}}\right\rfloor}
   \left(\frac{2}{n_\theta}\right)^{2\alpha}\mathrm{e}^{-\beta(m-2)} +
   \sum_{m=\left\lceil\frac{n_\theta}{\mathrm{e}}\right\rceil}^{l_\theta+1}
   \mathrm{e}^{\frac{\alpha}{\mathrm{e}-1}(m-n_\theta)}
   \right] \frac{\epsilon}{\kappa_2} \nonumber\\
&< \left[
   \left(\frac{2}{n_\theta}\right)^{2\alpha}
   \sum_{m=2}^\infty\mathrm{e}^{-\beta(m-2)} +
   \sum_{m^\prime=n_\theta-l_\theta-1}^\infty
   \mathrm{e}^{-\frac{\alpha}{\mathrm{e}-1}m^\prime}
   \right]\frac{\epsilon}{\kappa_2} \nonumber\\
&= \left[
   \left(\frac{2}{n_\theta}\right)^{2\alpha}\frac{1}{1-\mathrm{e}^{-\beta}} +
   \frac{\mathrm{e}^{-\frac{\alpha}{\mathrm{e}-1}(n_\theta-l_\theta-1)}}
        {1-\mathrm{e}^{-\frac{\alpha}{\mathrm{e}-1}}}
   \right] \frac{\epsilon}{\kappa_2},
\end{align}
where the parameter $\beta$ is defined in Eq.~\eqref{eq:beta-def}.
Note that the first term in the last expression can be omitted
if$ \lfloor n_\theta / \mathrm{e} \rfloor < 2$, i.e.,
$n_\theta < 2\mathrm{e}$.

Next, we consider the failure probability of the second type: Algorithm~2
samples an optimal solution but stops before collecting all optimal solutions.
This failure probability is essentially the same as the failure probability of
Algorithm~1. Let $f_{\min}$ denote $\min_{x \in X}f(x)$, and let
$\mathcal{E}_{f_{\min}}$ be the event where the algorithm samples
an optimal solution. Under $\mathcal{E}_{f_{\min}}$, the sampling process
is governed by the probability distribution $p_{f_{\min}}$, which is
the uniform distribution on $X_{f_{\min}}$. Thus, following the proof of
Theorem~\ref{theorem:algorithm-1}, we obtain an upper bound for the failure probability
of the second type as:
\begin{align}
P^\mathrm{fail}_{f_{\min}}
&= P\left(\bigcup_{m=2}^{n_{f_{\min}}}
          \left\{
            T^{(p_{f_{\min}})}_m >
            \left\lceil m\ln\frac{m\kappa_2}{\epsilon} \right\rceil
          \right\}
          \cap \mathcal{E}_{f_{\min}}\right) \nonumber\\
&\le \sum_{m=2}^{n_{f_{\min}}}
     P\left(T^{(p_{f_{\min}})}_m >
            \left\lceil m\ln\frac{m\kappa_2}{\epsilon} \right\rceil \right) \nonumber\\
&< \left[
   \left(\frac{2}{n_{f_{\min}}}\right)^{2\alpha}
   \frac{1}{1-\mathrm{e}^{-\beta}} +
   \frac{1}{1-\mathrm{e}^{-\frac{\alpha}{\mathrm{e}-1}}}
   \right] \frac{\epsilon}{\kappa_2}.
\end{align}
Note that $n_{f_{\min}}$ in the last expression is replaced by six
in Theorem~\ref{theorem:algorithm-1}, because the first term can be neglected
for $n_{f_{\min}} < 2\mathrm{e} = 5.43\cdots$ [see the discussion after
Eq.~\eqref{eq:split-summation}]. However, we maintain the dependence on
$n_{f_{\min}}$ for subsequent discussion.

The total failure probability is bounded above by the sum of
$P^\mathrm{fail}_\theta$ across all $\theta$ in the image of $f$, i.e.,
$f[X] \coloneqq \{\theta \in \mathbb{R} \mid
 \exists x \in X\ \text{s.t.}\ f(x) = \theta\}$.
Thus, the total failure probability, denoted by $P^{\mathrm{fail}}$,
satisfies the inequality
\begin{widetext}
\begin{align}
P^{\mathrm{fail}}
&= P_{f_{\min}}^{\mathrm{fail}}
  + \sum_{\theta \in f[X] \setminus \{f_{\min}\}} P_\theta^{\mathrm{fail}}
  \nonumber \\
&< \left[
   \frac{1}{1-\mathrm{e}^{-\beta}}
   \sum_{\theta \in f[X]} \left(\frac{2}{n_\theta}\right)^{2\alpha} +
   \frac{1}{1-\mathrm{e}^{-\frac{\alpha}{\mathrm{e}-1}}}
   \left(1 + \sum_{\theta \in f[X] \setminus \{f_{\min}\}}
         \mathrm{e}^{-\frac{\alpha}{\mathrm{e}-1}(n_\theta-l_\theta-1)}\right)
   \right] \frac{\epsilon}{\kappa_2}.
\label{eq:failure-probability-theta-sum}
\end{align}
\end{widetext}

We evaluate the first summation in Eq.~\eqref{eq:failure-probability-theta-sum}.
Recall that $X_\theta$ is the set of all feasible solutions
whose cost value is less than or equal to $\theta$. In particular,
the indexed family of sets $\{X_\theta\}_{\theta \in f[X]}$
forms a strictly increasing sequence with respect to $\theta$:
for $\theta_1, \theta_2 \in f[X]$,
$\theta_1 < \theta_2 \Rightarrow X_{\theta_1} \subsetneq X_{\theta_2}$.  
Hence, $\theta_1 < \theta_2 \Rightarrow n_{\theta_1} < n_{\theta_2}$,
meaning that the sequence $\{n_\theta\}_{\theta \in f[X]}$ is
a strictly increasing sequence and has no duplicated values.
Moreover, the terms for $n_\theta < 2\mathrm{e} = 5.43\cdots$ can be
excluded from the first summation. Thus, the first summation over $\theta$ is
upper bounded by the infinite summation over $n_\theta \ge 6$ as follows:
\begin{align}
\sum_{\theta \in f[X]}
\left(\frac{2}{n_\theta}\right)^{2\alpha}
&< \sum_{n_\theta=6}^\infty \left(\frac{2}{n_\theta}\right)^{2\alpha} \nonumber\\
&= 4^\alpha \left(\zeta(2\alpha) - \sum_{k=1}^5 \frac{1}{k^{2\alpha}}\right).
\end{align}
Here, we rewrite the infinite sum in terms of the Riemann zeta function
$\zeta(s) \coloneqq \sum_{k=1}^\infty k^{-s}$. Since $\epsilon$ is set
to be less than $1/\mathrm{e}^{1.5}$, the argument $2\alpha$ exceeds 1.
This ensures the convergence of $\zeta(2\alpha)$.

Next, we evaluate the second summation in Eq.~\eqref{eq:failure-probability-theta-sum}.
Suppose $\theta_1 \in f[X] \setminus \{f_{\min}\}$. Let $\theta_0$ denote
the value in $f[X]$ immediately below $\theta_1$, i.e.,
$\theta_0 \coloneqq \max \{\theta \in f[X] \mid \theta < \theta_1\}$.
As $Y_{\theta_1}$ denotes the set of all feasible solutions whose cost value
equals $\theta_1$, $X_{\theta_1} = X_{\theta_0} \cup Y_{\theta_1}$
and $X_{\theta_0} \cap Y_{\theta_1} = \emptyset$ hold. This implies
$n_{\theta_1} - l_{\theta_1} = n_{\theta_0}$. Since the sequence
$\{n_{\theta_0}\}_{\theta_0 \in f[X]}$ is a strictly increasing
sequence of positive integers, the sequence
$\{n_{\theta_1}-l_{\theta_1}\}_{\theta_1 \in f[X]\setminus \{f_{\min}\}}$
is also a strictly increasing sequence of positive integers. Therefore,
the second summation over $\theta$ is upper bounded by the infinite summation
over positive integers $n_\theta - l_\theta$, which we denote by $k$,
as follows:
\begin{align}
1+\sum_{\theta \in f[X] \setminus \{f_{\min}\}}
\mathrm{e}^{-\frac{\alpha}{\mathrm{e}-1}(n_\theta-l_\theta-1)}
&< 1 + \sum_{k=1}^\infty \mathrm{e}^{-\frac{\alpha}{\mathrm{e}-1}(k-1)} \nonumber\\
&= \frac{2-\mathrm{e}^{-\frac{\alpha}{\mathrm{e}-1}}}
        {1-\mathrm{e}^{-\frac{\alpha}{\mathrm{e}-1}}}.
\end{align}

Finally, we derive an upper bound for the total failure probability
of Algorithm~2 as follows:
\begin{align}
&P^\mathrm{fail} \nonumber\\
&< \left[\frac{4^\alpha}{1-\mathrm{e}^{-\beta}}
         \left(\zeta(2\alpha) - \sum_{k=1}^5 \frac{1}{k^{2\alpha}}\right) +
         \frac{2-\mathrm{e}^{-\frac{\alpha}{\mathrm{e}-1}}}
              {\left(1-\mathrm{e}^{-\frac{\alpha}{\mathrm{e}-1}}\right)^2}
   \right] \frac{\epsilon}{\kappa_2}.
\end{align}
Since the expression inside the brackets on the right-hand side
equals $\kappa_2$ [Eq.~\eqref{eq:kappa2-def}], the right-hand side equals $\epsilon$.
Therefore, the failure probability of Algorithm~2 remains below $\epsilon$,
irrespective of the minimum value of $f$ and the number of optimal solutions.
\end{proof}

Note here that, while we utilize the notions of $n_\theta$ and $l_\theta$
as well as $f_\mathrm{min}$ in the derivation---quantities one could not
access unless one knew the underlying cost landscape---the final
conclusion of the proof does not rely on such inaccessible information.

This proof also clarifies that $\kappa_2$ is designed to compensate for
the increased error chances caused by the lack of information about
$f_{\min}$ as well as $n_{f_{\min}}$. In contrast, the design of
$\kappa_1$ takes into account only the failure cases due to
ignorance of $n_{f_{\min}}$ (i.e., the failure scenarios of
the second type in the above proof). Therefore, $\kappa_2$ should
be larger than $\kappa_1$. Indeed, $\kappa_2$ includes $\kappa_1$:
\begin{widetext}
\begin{align}
\kappa_2
&= \frac{4^\alpha}{1-\mathrm{e}^{-\beta}}
   \sum_{k=6}^\infty \frac{1}{k^{2\alpha}} +
   \frac{1}{1-\mathrm{e}^{-\frac{\alpha}{\mathrm{e}-1}}}
   \left(1 + \frac{1}{1-\mathrm{e}^{-\frac{\alpha}{\mathrm{e}-1}}}\right) \nonumber\\
&= \underbrace{
    \left[\frac{4^\alpha}{1-\mathrm{e}^{-\beta}}\frac{1}{6^{2\alpha}} +
          \frac{1}{1-\mathrm{e}^{-\frac{\alpha}{\mathrm{e}-1}}}\right]
   }_{= \kappa_1} +
   \left[\frac{4^\alpha}{1-\mathrm{e}^{-\beta}}
         \sum_{k=7}^\infty \frac{1}{k^{2\alpha}} +
         \frac{1}{\left(1-\mathrm{e}^{-\frac{\alpha}{\mathrm{e}-1}}\right)^2}
   \right] \nonumber\\
&= \kappa_1 +
   \left[\frac{4^\alpha}{1-\mathrm{e}^{-\beta}}
         \sum_{k=7}^\infty \frac{1}{k^{2\alpha}} +
         \frac{1}{\left(1-\mathrm{e}^{-\frac{\alpha}{\mathrm{e}-1}}\right)^2}
   \right].
\end{align}
\end{widetext}

\bibliographystyle{unsrt}

\begin{thebibliography}{55}%
\makeatletter
\providecommand \@ifxundefined [1]{%
 \@ifx{#1\undefined}
}%
\providecommand \@ifnum [1]{%
 \ifnum #1\expandafter \@firstoftwo
 \else \expandafter \@secondoftwo
 \fi
}%
\providecommand \@ifx [1]{%
 \ifx #1\expandafter \@firstoftwo
 \else \expandafter \@secondoftwo
 \fi
}%
\providecommand \natexlab [1]{#1}%
\providecommand \enquote  [1]{``#1''}%
\providecommand \bibnamefont  [1]{#1}%
\providecommand \bibfnamefont [1]{#1}%
\providecommand \citenamefont [1]{#1}%
\providecommand \href@noop [0]{\@secondoftwo}%
\providecommand \href [0]{\begingroup \@sanitize@url \@href}%
\providecommand \@href[1]{\@@startlink{#1}\@@href}%
\providecommand \@@href[1]{\endgroup#1\@@endlink}%
\providecommand \@sanitize@url [0]{\catcode `\\12\catcode `\$12\catcode `\&12\catcode `\#12\catcode `\^12\catcode `\_12\catcode `\%12\relax}%
\providecommand \@@startlink[1]{}%
\providecommand \@@endlink[0]{}%
\providecommand \url  [0]{\begingroup\@sanitize@url \@url }%
\providecommand \@url [1]{\endgroup\@href {#1}{\urlprefix }}%
\providecommand \urlprefix  [0]{URL }%
\providecommand \Eprint [0]{\href }%
\providecommand \doibase [0]{https://doi.org/}%
\providecommand \selectlanguage [0]{\@gobble}%
\providecommand \bibinfo  [0]{\@secondoftwo}%
\providecommand \bibfield  [0]{\@secondoftwo}%
\providecommand \translation [1]{[#1]}%
\providecommand \BibitemOpen [0]{}%
\providecommand \bibitemStop [0]{}%
\providecommand \bibitemNoStop [0]{.\EOS\space}%
\providecommand \EOS [0]{\spacefactor3000\relax}%
\providecommand \BibitemShut  [1]{\csname bibitem#1\endcsname}%
\let\auto@bib@innerbib\@empty
\bibitem [{\citenamefont {Mizuno}\ and\ \citenamefont {Komatsuzaki}(2024)}]{Mizuno2024}%
  \BibitemOpen
  \bibfield  {author} {\bibinfo {author} {\bibfnamefont {Y.}~\bibnamefont {Mizuno}}\ and\ \bibinfo {author} {\bibfnamefont {T.}~\bibnamefont {Komatsuzaki}},\ }\bibfield  {title} {\bibinfo {title} {Finding optimal pathways in chemical reaction networks using {Ising} machines},\ }\href@noop {} {\bibfield  {journal} {\bibinfo  {journal} {Physical Review Research}\ }\textbf {\bibinfo {volume} {6}},\ \bibinfo {pages} {013115} (\bibinfo {year} {2024})}\BibitemShut {NoStop}%
\bibitem [{\citenamefont {Ali}\ \emph {et~al.}(2025)\citenamefont {Ali}, \citenamefont {Mizuno}, \citenamefont {Akiyama}, \citenamefont {Nagata},\ and\ \citenamefont {Komatsuzaki}}]{Ali2025}%
  \BibitemOpen
  \bibfield  {author} {\bibinfo {author} {\bibfnamefont {M.}~\bibnamefont {Ali}}, \bibinfo {author} {\bibfnamefont {Y.}~\bibnamefont {Mizuno}}, \bibinfo {author} {\bibfnamefont {S.}~\bibnamefont {Akiyama}}, \bibinfo {author} {\bibfnamefont {Y.}~\bibnamefont {Nagata}},\ and\ \bibinfo {author} {\bibfnamefont {T.}~\bibnamefont {Komatsuzaki}},\ }\bibfield  {title} {\bibinfo {title} {Enumeration approach to atom-to-atom mapping accelerated by {Ising} computing},\ }\href@noop {} {\bibfield  {journal} {\bibinfo  {journal} {Journal of Chemical Information and Modeling}\ }\textbf {\bibinfo {volume} {65}},\ \bibinfo {pages} {1901} (\bibinfo {year} {2025})}\BibitemShut {NoStop}%
\bibitem [{\citenamefont {Kitai}\ \emph {et~al.}(2020)\citenamefont {Kitai}, \citenamefont {Guo}, \citenamefont {Ju}, \citenamefont {Tanaka}, \citenamefont {Tsuda}, \citenamefont {Shiomi},\ and\ \citenamefont {Tamura}}]{Kitai2020}%
  \BibitemOpen
  \bibfield  {author} {\bibinfo {author} {\bibfnamefont {K.}~\bibnamefont {Kitai}}, \bibinfo {author} {\bibfnamefont {J.}~\bibnamefont {Guo}}, \bibinfo {author} {\bibfnamefont {S.}~\bibnamefont {Ju}}, \bibinfo {author} {\bibfnamefont {S.}~\bibnamefont {Tanaka}}, \bibinfo {author} {\bibfnamefont {K.}~\bibnamefont {Tsuda}}, \bibinfo {author} {\bibfnamefont {J.}~\bibnamefont {Shiomi}},\ and\ \bibinfo {author} {\bibfnamefont {R.}~\bibnamefont {Tamura}},\ }\bibfield  {title} {\bibinfo {title} {Designing metamaterials with quantum annealing and factorization machines},\ }\href@noop {} {\bibfield  {journal} {\bibinfo  {journal} {Physical Review Research}\ }\textbf {\bibinfo {volume} {2}},\ \bibinfo {pages} {013319} (\bibinfo {year} {2020})}\BibitemShut {NoStop}%
\bibitem [{\citenamefont {Sakaguchi}\ \emph {et~al.}(2016)\citenamefont {Sakaguchi}, \citenamefont {Ogata}, \citenamefont {Isomura}, \citenamefont {Utsunomiya}, \citenamefont {Yamamoto},\ and\ \citenamefont {Aihara}}]{Sakaguchi2016}%
  \BibitemOpen
  \bibfield  {author} {\bibinfo {author} {\bibfnamefont {H.}~\bibnamefont {Sakaguchi}}, \bibinfo {author} {\bibfnamefont {K.}~\bibnamefont {Ogata}}, \bibinfo {author} {\bibfnamefont {T.}~\bibnamefont {Isomura}}, \bibinfo {author} {\bibfnamefont {S.}~\bibnamefont {Utsunomiya}}, \bibinfo {author} {\bibfnamefont {Y.}~\bibnamefont {Yamamoto}},\ and\ \bibinfo {author} {\bibfnamefont {K.}~\bibnamefont {Aihara}},\ }\bibfield  {title} {\bibinfo {title} {{Boltzmann} sampling by degenerate optical parametric oscillator network for structure-based virtual screening},\ }\href@noop {} {\bibfield  {journal} {\bibinfo  {journal} {Entropy}\ }\textbf {\bibinfo {volume} {18}},\ \bibinfo {pages} {365} (\bibinfo {year} {2016})}\BibitemShut {NoStop}%
\bibitem [{\citenamefont {Kirkpatrick}\ \emph {et~al.}(1983)\citenamefont {Kirkpatrick}, \citenamefont {Gelatt},\ and\ \citenamefont {Vecchi}}]{Kirkpatrick1983}%
  \BibitemOpen
  \bibfield  {author} {\bibinfo {author} {\bibfnamefont {S.}~\bibnamefont {Kirkpatrick}}, \bibinfo {author} {\bibfnamefont {C.~D.}\ \bibnamefont {Gelatt}},\ and\ \bibinfo {author} {\bibfnamefont {M.~P.}\ \bibnamefont {Vecchi}},\ }\bibfield  {title} {\bibinfo {title} {Optimization by simulated annealing},\ }\href@noop {} {\bibfield  {journal} {\bibinfo  {journal} {Science}\ }\textbf {\bibinfo {volume} {220}},\ \bibinfo {pages} {671} (\bibinfo {year} {1983})}\BibitemShut {NoStop}%
\bibitem [{\citenamefont {Hopfield}\ and\ \citenamefont {Tank}(1985)}]{Hopfield1985}%
  \BibitemOpen
  \bibfield  {author} {\bibinfo {author} {\bibfnamefont {J.~J.}\ \bibnamefont {Hopfield}}\ and\ \bibinfo {author} {\bibfnamefont {D.~W.}\ \bibnamefont {Tank}},\ }\bibfield  {title} {\bibinfo {title} {“{N}eural” computation of decisions in optimization problems},\ }\href@noop {} {\bibfield  {journal} {\bibinfo  {journal} {Biological Cybernetics}\ }\textbf {\bibinfo {volume} {52}},\ \bibinfo {pages} {141} (\bibinfo {year} {1985})}\BibitemShut {NoStop}%
\bibitem [{\citenamefont {Rieffel}\ \emph {et~al.}(2015)\citenamefont {Rieffel}, \citenamefont {Venturelli}, \citenamefont {O’Gorman}, \citenamefont {Do}, \citenamefont {Prystay},\ and\ \citenamefont {Smelyanskiy}}]{Rieffel2015}%
  \BibitemOpen
  \bibfield  {author} {\bibinfo {author} {\bibfnamefont {E.~G.}\ \bibnamefont {Rieffel}}, \bibinfo {author} {\bibfnamefont {D.}~\bibnamefont {Venturelli}}, \bibinfo {author} {\bibfnamefont {B.}~\bibnamefont {O’Gorman}}, \bibinfo {author} {\bibfnamefont {M.~B.}\ \bibnamefont {Do}}, \bibinfo {author} {\bibfnamefont {E.~M.}\ \bibnamefont {Prystay}},\ and\ \bibinfo {author} {\bibfnamefont {V.~N.}\ \bibnamefont {Smelyanskiy}},\ }\bibfield  {title} {\bibinfo {title} {A case study in programming a quantum annealer for hard operational planning problems},\ }\href@noop {} {\bibfield  {journal} {\bibinfo  {journal} {Quantum Information Processing}\ }\textbf {\bibinfo {volume} {14}},\ \bibinfo {pages} {1} (\bibinfo {year} {2015})}\BibitemShut {NoStop}%
\bibitem [{\citenamefont {Ohzeki}\ \emph {et~al.}(2019)\citenamefont {Ohzeki}, \citenamefont {Miki}, \citenamefont {Miyama},\ and\ \citenamefont {Terabe}}]{Ohzeki2019}%
  \BibitemOpen
  \bibfield  {author} {\bibinfo {author} {\bibfnamefont {M.}~\bibnamefont {Ohzeki}}, \bibinfo {author} {\bibfnamefont {A.}~\bibnamefont {Miki}}, \bibinfo {author} {\bibfnamefont {M.~J.}\ \bibnamefont {Miyama}},\ and\ \bibinfo {author} {\bibfnamefont {M.}~\bibnamefont {Terabe}},\ }\bibfield  {title} {\bibinfo {title} {Control of automated guided vehicles without collision by quantum annealer and digital devices},\ }\href@noop {} {\bibfield  {journal} {\bibinfo  {journal} {Frontiers in Computer Science}\ }\textbf {\bibinfo {volume} {1}},\ \bibinfo {pages} {9} (\bibinfo {year} {2019})}\BibitemShut {NoStop}%
\bibitem [{\citenamefont {Rosenberg}\ \emph {et~al.}(2016)\citenamefont {Rosenberg}, \citenamefont {Haghnegahdar}, \citenamefont {Goddard}, \citenamefont {Carr}, \citenamefont {Wu},\ and\ \citenamefont {de~Prado}}]{Rosenberg2016}%
  \BibitemOpen
  \bibfield  {author} {\bibinfo {author} {\bibfnamefont {G.}~\bibnamefont {Rosenberg}}, \bibinfo {author} {\bibfnamefont {P.}~\bibnamefont {Haghnegahdar}}, \bibinfo {author} {\bibfnamefont {P.}~\bibnamefont {Goddard}}, \bibinfo {author} {\bibfnamefont {P.}~\bibnamefont {Carr}}, \bibinfo {author} {\bibfnamefont {K.}~\bibnamefont {Wu}},\ and\ \bibinfo {author} {\bibfnamefont {M.~L.}\ \bibnamefont {de~Prado}},\ }\bibfield  {title} {\bibinfo {title} {Solving the optimal trading trajectory problem using a quantum annealer},\ }\href@noop {} {\bibfield  {journal} {\bibinfo  {journal} {{IEEE} Journal of Selected Topics in Signal Processing}\ }\textbf {\bibinfo {volume} {10}},\ \bibinfo {pages} {1053} (\bibinfo {year} {2016})}\BibitemShut {NoStop}%
\bibitem [{\citenamefont {Mukasa}\ \emph {et~al.}(2021)\citenamefont {Mukasa}, \citenamefont {Wakaizumi}, \citenamefont {Tanaka},\ and\ \citenamefont {Togawa}}]{Mukasa2021}%
  \BibitemOpen
  \bibfield  {author} {\bibinfo {author} {\bibfnamefont {Y.}~\bibnamefont {Mukasa}}, \bibinfo {author} {\bibfnamefont {T.}~\bibnamefont {Wakaizumi}}, \bibinfo {author} {\bibfnamefont {S.}~\bibnamefont {Tanaka}},\ and\ \bibinfo {author} {\bibfnamefont {N.}~\bibnamefont {Togawa}},\ }\bibfield  {title} {\bibinfo {title} {An {Ising} machine-based solver for visiting-route recommendation problems in amusement parks},\ }\href@noop {} {\bibfield  {journal} {\bibinfo  {journal} {{IEICE} Transactions on Information and Systems}\ }\textbf {\bibinfo {volume} {104}},\ \bibinfo {pages} {1592} (\bibinfo {year} {2021})}\BibitemShut {NoStop}%
\bibitem [{\citenamefont {Eblen}\ \emph {et~al.}(2012)\citenamefont {Eblen}, \citenamefont {Phillips}, \citenamefont {Rogers},\ and\ \citenamefont {Langston}}]{Eblen2012}%
  \BibitemOpen
  \bibfield  {author} {\bibinfo {author} {\bibfnamefont {J.~D.}\ \bibnamefont {Eblen}}, \bibinfo {author} {\bibfnamefont {C.~A.}\ \bibnamefont {Phillips}}, \bibinfo {author} {\bibfnamefont {G.~L.}\ \bibnamefont {Rogers}},\ and\ \bibinfo {author} {\bibfnamefont {M.~A.}\ \bibnamefont {Langston}},\ }\bibfield  {title} {\bibinfo {title} {The maximum clique enumeration problem: algorithms, applications, and implementations},\ }in\ \href@noop {} {\emph {\bibinfo {booktitle} {{BMC} bioinformatics}}},\ Vol.~\bibinfo {volume} {13}\ (\bibinfo {organization} {Springer},\ \bibinfo {year} {2012})\ pp.\ \bibinfo {pages} {1--11}\BibitemShut {NoStop}%
\bibitem [{\citenamefont {Shibukawa}\ \emph {et~al.}(2020)\citenamefont {Shibukawa}, \citenamefont {Ishida}, \citenamefont {Yoshizoe}, \citenamefont {Wasa}, \citenamefont {Takasu}, \citenamefont {Okuno}, \citenamefont {Terayama},\ and\ \citenamefont {Tsuda}}]{Shibukawa2020}%
  \BibitemOpen
  \bibfield  {author} {\bibinfo {author} {\bibfnamefont {R.}~\bibnamefont {Shibukawa}}, \bibinfo {author} {\bibfnamefont {S.}~\bibnamefont {Ishida}}, \bibinfo {author} {\bibfnamefont {K.}~\bibnamefont {Yoshizoe}}, \bibinfo {author} {\bibfnamefont {K.}~\bibnamefont {Wasa}}, \bibinfo {author} {\bibfnamefont {K.}~\bibnamefont {Takasu}}, \bibinfo {author} {\bibfnamefont {Y.}~\bibnamefont {Okuno}}, \bibinfo {author} {\bibfnamefont {K.}~\bibnamefont {Terayama}},\ and\ \bibinfo {author} {\bibfnamefont {K.}~\bibnamefont {Tsuda}},\ }\bibfield  {title} {\bibinfo {title} {{CompRet}: a comprehensive recommendation framework for chemical synthesis planning with algorithmic enumeration},\ }\href@noop {} {\bibfield  {journal} {\bibinfo  {journal} {Journal of Cheminformatics}\ }\textbf {\bibinfo {volume} {12}},\ \bibinfo {pages} {1} (\bibinfo {year} {2020})}\BibitemShut {NoStop}%
\bibitem [{\citenamefont {Karp}(1972)}]{Karp1972}%
  \BibitemOpen
  \bibfield  {author} {\bibinfo {author} {\bibfnamefont {R.~M.}\ \bibnamefont {Karp}},\ }\bibinfo {title} {Reducibility among combinatorial problems},\ in\ \href@noop {} {\emph {\bibinfo {booktitle} {Complexity of Computer Computations}}},\ \bibinfo {editor} {edited by\ \bibinfo {editor} {\bibfnamefont {R.~E.}\ \bibnamefont {Miller}}, \bibinfo {editor} {\bibfnamefont {J.~W.}\ \bibnamefont {Thatcher}},\ and\ \bibinfo {editor} {\bibfnamefont {J.~D.}\ \bibnamefont {Bohlinger}}}\ (\bibinfo  {publisher} {Springer},\ \bibinfo {address} {Boston, MA},\ \bibinfo {year} {1972})\ pp.\ \bibinfo {pages} {85--103}\BibitemShut {NoStop}%
\bibitem [{\citenamefont {Mohseni}\ \emph {et~al.}(2022)\citenamefont {Mohseni}, \citenamefont {McMahon},\ and\ \citenamefont {Byrnes}}]{Mohseni2022}%
  \BibitemOpen
  \bibfield  {author} {\bibinfo {author} {\bibfnamefont {N.}~\bibnamefont {Mohseni}}, \bibinfo {author} {\bibfnamefont {P.~L.}\ \bibnamefont {McMahon}},\ and\ \bibinfo {author} {\bibfnamefont {T.}~\bibnamefont {Byrnes}},\ }\bibfield  {title} {\bibinfo {title} {{Ising} machines as hardware solvers of combinatorial optimization problems},\ }\href@noop {} {\bibfield  {journal} {\bibinfo  {journal} {Nature Reviews Physics}\ }\textbf {\bibinfo {volume} {4}},\ \bibinfo {pages} {363} (\bibinfo {year} {2022})}\BibitemShut {NoStop}%
\bibitem [{Note1()}]{Note1}%
  \BibitemOpen
  \bibinfo {note} {In particular, the Nobel Prize in Physics 2024 was awarded to John J. Hopfield and Geoffrey Hinton for their contributions, including the Hopfield network (Hopfield) and the Boltzmann machine (Hinton).}\BibitemShut {Stop}%
\bibitem [{\citenamefont {Hopfield}(1982)}]{Hopfield1982}%
  \BibitemOpen
  \bibfield  {author} {\bibinfo {author} {\bibfnamefont {J.~J.}\ \bibnamefont {Hopfield}},\ }\bibfield  {title} {\bibinfo {title} {Neural networks and physical systems with emergent collective computational abilities.},\ }\href@noop {} {\bibfield  {journal} {\bibinfo  {journal} {Proceedings of the National Academy of Sciences of the United States of America}\ }\textbf {\bibinfo {volume} {79}},\ \bibinfo {pages} {2554} (\bibinfo {year} {1982})}\BibitemShut {NoStop}%
\bibitem [{\citenamefont {Hopfield}(1984)}]{Hopfield1984}%
  \BibitemOpen
  \bibfield  {author} {\bibinfo {author} {\bibfnamefont {J.~J.}\ \bibnamefont {Hopfield}},\ }\bibfield  {title} {\bibinfo {title} {Neurons with graded response have collective computational properties like those of two-state neurons.},\ }\href@noop {} {\bibfield  {journal} {\bibinfo  {journal} {Proceedings of the National Academy of Sciences of the United States of America}\ }\textbf {\bibinfo {volume} {81}},\ \bibinfo {pages} {3088} (\bibinfo {year} {1984})}\BibitemShut {NoStop}%
\bibitem [{\citenamefont {Ackley}\ \emph {et~al.}(1985)\citenamefont {Ackley}, \citenamefont {Hinton},\ and\ \citenamefont {Sejnowski}}]{Ackley1985}%
  \BibitemOpen
  \bibfield  {author} {\bibinfo {author} {\bibfnamefont {D.~H.}\ \bibnamefont {Ackley}}, \bibinfo {author} {\bibfnamefont {G.~E.}\ \bibnamefont {Hinton}},\ and\ \bibinfo {author} {\bibfnamefont {T.~J.}\ \bibnamefont {Sejnowski}},\ }\bibfield  {title} {\bibinfo {title} {A learning algorithm for {Boltzmann} machines},\ }\href@noop {} {\bibfield  {journal} {\bibinfo  {journal} {Cognitive Science}\ }\textbf {\bibinfo {volume} {9}},\ \bibinfo {pages} {147} (\bibinfo {year} {1985})}\BibitemShut {NoStop}%
\bibitem [{\citenamefont {Pearson}\ \emph {et~al.}(1983)\citenamefont {Pearson}, \citenamefont {Richardson},\ and\ \citenamefont {Toussain}}]{Pearson1983}%
  \BibitemOpen
  \bibfield  {author} {\bibinfo {author} {\bibfnamefont {R.~B.}\ \bibnamefont {Pearson}}, \bibinfo {author} {\bibfnamefont {J.~L.}\ \bibnamefont {Richardson}},\ and\ \bibinfo {author} {\bibfnamefont {D.}~\bibnamefont {Toussain}},\ }\bibfield  {title} {\bibinfo {title} {A fast processor for {Monte-Carlo} simulation},\ }\href@noop {} {\bibfield  {journal} {\bibinfo  {journal} {Journal of Computational Physics}\ }\textbf {\bibinfo {volume} {51}},\ \bibinfo {pages} {241} (\bibinfo {year} {1983})}\BibitemShut {NoStop}%
\bibitem [{\citenamefont {Hoogland}\ \emph {et~al.}(1983)\citenamefont {Hoogland}, \citenamefont {Spaa}, \citenamefont {Selman},\ and\ \citenamefont {Compagner}}]{Hoogland1983}%
  \BibitemOpen
  \bibfield  {author} {\bibinfo {author} {\bibfnamefont {A.}~\bibnamefont {Hoogland}}, \bibinfo {author} {\bibfnamefont {J.}~\bibnamefont {Spaa}}, \bibinfo {author} {\bibfnamefont {B.}~\bibnamefont {Selman}},\ and\ \bibinfo {author} {\bibfnamefont {A.}~\bibnamefont {Compagner}},\ }\bibfield  {title} {\bibinfo {title} {A special-purpose processor for the {Monte Carlo} simulation of {Ising} spin systems},\ }\href@noop {} {\bibfield  {journal} {\bibinfo  {journal} {Journal of Computational Physics}\ }\textbf {\bibinfo {volume} {51}},\ \bibinfo {pages} {250} (\bibinfo {year} {1983})}\BibitemShut {NoStop}%
\bibitem [{\citenamefont {Kadowaki}\ and\ \citenamefont {Nishimori}(1998)}]{Kadowaki1998}%
  \BibitemOpen
  \bibfield  {author} {\bibinfo {author} {\bibfnamefont {T.}~\bibnamefont {Kadowaki}}\ and\ \bibinfo {author} {\bibfnamefont {H.}~\bibnamefont {Nishimori}},\ }\bibfield  {title} {\bibinfo {title} {Quantum annealing in the transverse {Ising} model},\ }\href@noop {} {\bibfield  {journal} {\bibinfo  {journal} {Physical Review E}\ }\textbf {\bibinfo {volume} {58}},\ \bibinfo {pages} {5355} (\bibinfo {year} {1998})}\BibitemShut {NoStop}%
\bibitem [{\citenamefont {Johnson}\ \emph {et~al.}(2011)\citenamefont {Johnson}, \citenamefont {Amin}, \citenamefont {Gildert}, \citenamefont {Lanting}, \citenamefont {Hamze}, \citenamefont {Dickson}, \citenamefont {Harris}, \citenamefont {Berkley}, \citenamefont {Johansson}, \citenamefont {Bunyk}, \citenamefont {Chapple}, \citenamefont {Enderud}, \citenamefont {Hilton}, \citenamefont {Karimi}, \citenamefont {Ladizinsky}, \citenamefont {Ladizinsky}, \citenamefont {Oh}, \citenamefont {Perminov}, \citenamefont {Rich}, \citenamefont {Thom}, \citenamefont {Tolkacheva}, \citenamefont {Truncik}, \citenamefont {Uchaikin}, \citenamefont {Wang}, \citenamefont {Wilson},\ and\ \citenamefont {Rose}}]{Johnson2011}%
  \BibitemOpen
  \bibfield  {author} {\bibinfo {author} {\bibfnamefont {M.~W.}\ \bibnamefont {Johnson}}, \bibinfo {author} {\bibfnamefont {M.~H.~S.}\ \bibnamefont {Amin}}, \bibinfo {author} {\bibfnamefont {S.}~\bibnamefont {Gildert}}, \bibinfo {author} {\bibfnamefont {T.}~\bibnamefont {Lanting}}, \bibinfo {author} {\bibfnamefont {F.}~\bibnamefont {Hamze}}, \bibinfo {author} {\bibfnamefont {N.~G.}\ \bibnamefont {Dickson}}, \bibinfo {author} {\bibfnamefont {R.}~\bibnamefont {Harris}}, \bibinfo {author} {\bibfnamefont {A.~J.}\ \bibnamefont {Berkley}}, \bibinfo {author} {\bibfnamefont {J.}~\bibnamefont {Johansson}}, \bibinfo {author} {\bibfnamefont {P.~I.}\ \bibnamefont {Bunyk}}, \bibinfo {author} {\bibfnamefont {E.~M.}\ \bibnamefont {Chapple}}, \bibinfo {author} {\bibfnamefont {C.}~\bibnamefont {Enderud}}, \bibinfo {author} {\bibfnamefont {J.~P.}\ \bibnamefont {Hilton}}, \bibinfo {author} {\bibfnamefont {K.}~\bibnamefont {Karimi}}, \bibinfo {author} {\bibfnamefont {E.}~\bibnamefont {Ladizinsky}}, \bibinfo {author}
  {\bibfnamefont {N.}~\bibnamefont {Ladizinsky}}, \bibinfo {author} {\bibfnamefont {T.}~\bibnamefont {Oh}}, \bibinfo {author} {\bibfnamefont {I.~G.}\ \bibnamefont {Perminov}}, \bibinfo {author} {\bibfnamefont {C.}~\bibnamefont {Rich}}, \bibinfo {author} {\bibfnamefont {M.~C.}\ \bibnamefont {Thom}}, \bibinfo {author} {\bibfnamefont {E.}~\bibnamefont {Tolkacheva}}, \bibinfo {author} {\bibfnamefont {C.~J.~S.}\ \bibnamefont {Truncik}}, \bibinfo {author} {\bibfnamefont {S.}~\bibnamefont {Uchaikin}}, \bibinfo {author} {\bibfnamefont {J.}~\bibnamefont {Wang}}, \bibinfo {author} {\bibfnamefont {B.~A.}\ \bibnamefont {Wilson}},\ and\ \bibinfo {author} {\bibfnamefont {G.}~\bibnamefont {Rose}},\ }\bibfield  {title} {\bibinfo {title} {Quantum annealing with manufactured spins},\ }\href@noop {} {\bibfield  {journal} {\bibinfo  {journal} {Nature}\ }\textbf {\bibinfo {volume} {473}},\ \bibinfo {pages} {194} (\bibinfo {year} {2011})}\BibitemShut {NoStop}%
\bibitem [{\citenamefont {Farhi}\ \emph {et~al.}(2014)\citenamefont {Farhi}, \citenamefont {Goldstone},\ and\ \citenamefont {Gutmann}}]{Farhi2014}%
  \BibitemOpen
  \bibfield  {author} {\bibinfo {author} {\bibfnamefont {E.}~\bibnamefont {Farhi}}, \bibinfo {author} {\bibfnamefont {J.}~\bibnamefont {Goldstone}},\ and\ \bibinfo {author} {\bibfnamefont {S.}~\bibnamefont {Gutmann}},\ }\href@noop {} {\bibinfo {title} {A quantum approximate optimization algorithm}} (\bibinfo {year} {2014}),\ \Eprint {https://arxiv.org/abs/1411.4028} {arXiv:1411.4028 [quant-ph]} \BibitemShut {NoStop}%
\bibitem [{\citenamefont {Wang}\ \emph {et~al.}(2013)\citenamefont {Wang}, \citenamefont {Marandi}, \citenamefont {Wen}, \citenamefont {Byer},\ and\ \citenamefont {Yamamoto}}]{Wang2013}%
  \BibitemOpen
  \bibfield  {author} {\bibinfo {author} {\bibfnamefont {Z.}~\bibnamefont {Wang}}, \bibinfo {author} {\bibfnamefont {A.}~\bibnamefont {Marandi}}, \bibinfo {author} {\bibfnamefont {K.}~\bibnamefont {Wen}}, \bibinfo {author} {\bibfnamefont {R.~L.}\ \bibnamefont {Byer}},\ and\ \bibinfo {author} {\bibfnamefont {Y.}~\bibnamefont {Yamamoto}},\ }\bibfield  {title} {\bibinfo {title} {Coherent {Ising} machine based on degenerate optical parametric oscillators},\ }\href@noop {} {\bibfield  {journal} {\bibinfo  {journal} {Physical Review A}\ }\textbf {\bibinfo {volume} {88}},\ \bibinfo {pages} {063853} (\bibinfo {year} {2013})}\BibitemShut {NoStop}%
\bibitem [{\citenamefont {Lucas}(2014)}]{Lucas2014}%
  \BibitemOpen
  \bibfield  {author} {\bibinfo {author} {\bibfnamefont {A.}~\bibnamefont {Lucas}},\ }\bibfield  {title} {\bibinfo {title} {{Ising} formulations of many {NP} problems},\ }\href@noop {} {\bibfield  {journal} {\bibinfo  {journal} {Frontiers in physics}\ }\textbf {\bibinfo {volume} {2}},\ \bibinfo {pages} {5} (\bibinfo {year} {2014})}\BibitemShut {NoStop}%
\bibitem [{\citenamefont {Yarkoni}\ \emph {et~al.}(2022)\citenamefont {Yarkoni}, \citenamefont {Raponi}, \citenamefont {B{\"a}ck},\ and\ \citenamefont {Schmitt}}]{Yarkoni2022}%
  \BibitemOpen
  \bibfield  {author} {\bibinfo {author} {\bibfnamefont {S.}~\bibnamefont {Yarkoni}}, \bibinfo {author} {\bibfnamefont {E.}~\bibnamefont {Raponi}}, \bibinfo {author} {\bibfnamefont {T.}~\bibnamefont {B{\"a}ck}},\ and\ \bibinfo {author} {\bibfnamefont {S.}~\bibnamefont {Schmitt}},\ }\bibfield  {title} {\bibinfo {title} {Quantum annealing for industry applications: Introduction and review},\ }\href@noop {} {\bibfield  {journal} {\bibinfo  {journal} {Reports on Progress in Physics}\ }\textbf {\bibinfo {volume} {85}},\ \bibinfo {pages} {104001} (\bibinfo {year} {2022})}\BibitemShut {NoStop}%
\bibitem [{\citenamefont {Barahona}(1982)}]{Barahona1982}%
  \BibitemOpen
  \bibfield  {author} {\bibinfo {author} {\bibfnamefont {F.}~\bibnamefont {Barahona}},\ }\bibfield  {title} {\bibinfo {title} {On the computational complexity of {Ising} spin glass models},\ }\href@noop {} {\bibfield  {journal} {\bibinfo  {journal} {Journal of Physics A: Mathematical and General}\ }\textbf {\bibinfo {volume} {15}},\ \bibinfo {pages} {3241} (\bibinfo {year} {1982})}\BibitemShut {NoStop}%
\bibitem [{\citenamefont {Geman}\ and\ \citenamefont {Geman}(1984)}]{Geman1984}%
  \BibitemOpen
  \bibfield  {author} {\bibinfo {author} {\bibfnamefont {S.}~\bibnamefont {Geman}}\ and\ \bibinfo {author} {\bibfnamefont {D.}~\bibnamefont {Geman}},\ }\bibfield  {title} {\bibinfo {title} {Stochastic relaxation, {Gibbs} distributions, and the {Bayesian} restoration of images},\ }\href@noop {} {\bibfield  {journal} {\bibinfo  {journal} {{IEEE} Transactions on Pattern Analysis and Machine Intelligence}\ }\textbf {\bibinfo {volume} {PAMI-6}},\ \bibinfo {pages} {721} (\bibinfo {year} {1984})}\BibitemShut {NoStop}%
\bibitem [{\citenamefont {Vuffray}\ \emph {et~al.}(2022)\citenamefont {Vuffray}, \citenamefont {Coffrin}, \citenamefont {Kharkov},\ and\ \citenamefont {Lokhov}}]{Vuffray2022}%
  \BibitemOpen
  \bibfield  {author} {\bibinfo {author} {\bibfnamefont {M.}~\bibnamefont {Vuffray}}, \bibinfo {author} {\bibfnamefont {C.}~\bibnamefont {Coffrin}}, \bibinfo {author} {\bibfnamefont {Y.~A.}\ \bibnamefont {Kharkov}},\ and\ \bibinfo {author} {\bibfnamefont {A.~Y.}\ \bibnamefont {Lokhov}},\ }\bibfield  {title} {\bibinfo {title} {Programmable quantum annealers as noisy {Gibbs} samplers},\ }\href@noop {} {\bibfield  {journal} {\bibinfo  {journal} {{PRX} Quantum}\ }\textbf {\bibinfo {volume} {3}},\ \bibinfo {pages} {020317} (\bibinfo {year} {2022})}\BibitemShut {NoStop}%
\bibitem [{\citenamefont {Nelson}\ \emph {et~al.}(2022)\citenamefont {Nelson}, \citenamefont {Vuffray}, \citenamefont {Lokhov}, \citenamefont {Albash},\ and\ \citenamefont {Coffrin}}]{Nelson2022}%
  \BibitemOpen
  \bibfield  {author} {\bibinfo {author} {\bibfnamefont {J.}~\bibnamefont {Nelson}}, \bibinfo {author} {\bibfnamefont {M.}~\bibnamefont {Vuffray}}, \bibinfo {author} {\bibfnamefont {A.~Y.}\ \bibnamefont {Lokhov}}, \bibinfo {author} {\bibfnamefont {T.}~\bibnamefont {Albash}},\ and\ \bibinfo {author} {\bibfnamefont {C.}~\bibnamefont {Coffrin}},\ }\bibfield  {title} {\bibinfo {title} {High-quality thermal {Gibbs} sampling with quantum annealing hardware},\ }\href@noop {} {\bibfield  {journal} {\bibinfo  {journal} {Physical Review Applied}\ }\textbf {\bibinfo {volume} {17}},\ \bibinfo {pages} {044046} (\bibinfo {year} {2022})}\BibitemShut {NoStop}%
\bibitem [{\citenamefont {Shibukawa}\ \emph {et~al.}(2024)\citenamefont {Shibukawa}, \citenamefont {Tamura},\ and\ \citenamefont {Tsuda}}]{Shibukawa2024}%
  \BibitemOpen
  \bibfield  {author} {\bibinfo {author} {\bibfnamefont {R.}~\bibnamefont {Shibukawa}}, \bibinfo {author} {\bibfnamefont {R.}~\bibnamefont {Tamura}},\ and\ \bibinfo {author} {\bibfnamefont {K.}~\bibnamefont {Tsuda}},\ }\bibfield  {title} {\bibinfo {title} {{Boltzmann} sampling with quantum annealers via fast {Stein} correction},\ }\href@noop {} {\bibfield  {journal} {\bibinfo  {journal} {Physical Review Research}\ }\textbf {\bibinfo {volume} {6}},\ \bibinfo {pages} {043050} (\bibinfo {year} {2024})}\BibitemShut {NoStop}%
\bibitem [{\citenamefont {D{\'\i}ez-Valle}\ \emph {et~al.}(2023)\citenamefont {D{\'\i}ez-Valle}, \citenamefont {Porras},\ and\ \citenamefont {Garc{\'\i}a-Ripoll}}]{Diez2023}%
  \BibitemOpen
  \bibfield  {author} {\bibinfo {author} {\bibfnamefont {P.}~\bibnamefont {D{\'\i}ez-Valle}}, \bibinfo {author} {\bibfnamefont {D.}~\bibnamefont {Porras}},\ and\ \bibinfo {author} {\bibfnamefont {J.~J.}\ \bibnamefont {Garc{\'\i}a-Ripoll}},\ }\bibfield  {title} {\bibinfo {title} {Quantum approximate optimization algorithm pseudo-{Boltzmann} states},\ }\href@noop {} {\bibfield  {journal} {\bibinfo  {journal} {Physical Review Letters}\ }\textbf {\bibinfo {volume} {130}},\ \bibinfo {pages} {050601} (\bibinfo {year} {2023})}\BibitemShut {NoStop}%
\bibitem [{\citenamefont {Lotshaw}\ \emph {et~al.}(2023)\citenamefont {Lotshaw}, \citenamefont {Siopsis}, \citenamefont {Ostrowski}, \citenamefont {Herrman}, \citenamefont {Alam}, \citenamefont {Powers},\ and\ \citenamefont {Humble}}]{Lotshaw2023}%
  \BibitemOpen
  \bibfield  {author} {\bibinfo {author} {\bibfnamefont {P.~C.}\ \bibnamefont {Lotshaw}}, \bibinfo {author} {\bibfnamefont {G.}~\bibnamefont {Siopsis}}, \bibinfo {author} {\bibfnamefont {J.}~\bibnamefont {Ostrowski}}, \bibinfo {author} {\bibfnamefont {R.}~\bibnamefont {Herrman}}, \bibinfo {author} {\bibfnamefont {R.}~\bibnamefont {Alam}}, \bibinfo {author} {\bibfnamefont {S.}~\bibnamefont {Powers}},\ and\ \bibinfo {author} {\bibfnamefont {T.~S.}\ \bibnamefont {Humble}},\ }\bibfield  {title} {\bibinfo {title} {Approximate {Boltzmann} distributions in quantum approximate optimization},\ }\href@noop {} {\bibfield  {journal} {\bibinfo  {journal} {Physical Review A}\ }\textbf {\bibinfo {volume} {108}},\ \bibinfo {pages} {042411} (\bibinfo {year} {2023})}\BibitemShut {NoStop}%
\bibitem [{\citenamefont {Goto}\ \emph {et~al.}(2018)\citenamefont {Goto}, \citenamefont {Lin},\ and\ \citenamefont {Nakamura}}]{Goto2018}%
  \BibitemOpen
  \bibfield  {author} {\bibinfo {author} {\bibfnamefont {H.}~\bibnamefont {Goto}}, \bibinfo {author} {\bibfnamefont {Z.}~\bibnamefont {Lin}},\ and\ \bibinfo {author} {\bibfnamefont {Y.}~\bibnamefont {Nakamura}},\ }\bibfield  {title} {\bibinfo {title} {{Boltzmann} sampling from the {Ising} model using quantum heating of coupled nonlinear oscillators},\ }\href@noop {} {\bibfield  {journal} {\bibinfo  {journal} {Scientific Reports}\ }\textbf {\bibinfo {volume} {8}},\ \bibinfo {pages} {7154} (\bibinfo {year} {2018})}\BibitemShut {NoStop}%
\bibitem [{Note2()}]{Note2}%
  \BibitemOpen
  \bibinfo {note} {QA devices with transverse-field driving Hamiltonian may not be able to identify all ground states in some problems; the sampling of some ground state is sometimes significantly suppressed \cite {Matsuda2009,Mandra2017,Konz2019}. We do not consider the use of such {``unfair"} Ising machines in this article.}\BibitemShut {Stop}%
\bibitem [{\citenamefont {Kumar}\ \emph {et~al.}(2020)\citenamefont {Kumar}, \citenamefont {Tomlin}, \citenamefont {Nehrkorn}, \citenamefont {O'Malley},\ and\ \citenamefont {Dulny~III}}]{Kumar2020}%
  \BibitemOpen
  \bibfield  {author} {\bibinfo {author} {\bibfnamefont {V.}~\bibnamefont {Kumar}}, \bibinfo {author} {\bibfnamefont {C.}~\bibnamefont {Tomlin}}, \bibinfo {author} {\bibfnamefont {C.}~\bibnamefont {Nehrkorn}}, \bibinfo {author} {\bibfnamefont {D.}~\bibnamefont {O'Malley}},\ and\ \bibinfo {author} {\bibfnamefont {J.}~\bibnamefont {Dulny~III}},\ }\href@noop {} {\bibinfo {title} {Achieving fair sampling in quantum annealing}} (\bibinfo {year} {2020}),\ \Eprint {https://arxiv.org/abs/2007.08487} {arXiv:2007.08487 [quant-ph]} \BibitemShut {NoStop}%
\bibitem [{\citenamefont {Mizuno}\ and\ \citenamefont {Komatsuzaki}(2021)}]{Mizuno2021}%
  \BibitemOpen
  \bibfield  {author} {\bibinfo {author} {\bibfnamefont {Y.}~\bibnamefont {Mizuno}}\ and\ \bibinfo {author} {\bibfnamefont {T.}~\bibnamefont {Komatsuzaki}},\ }\href@noop {} {\bibinfo {title} {A note on enumeration by fair sampling}} (\bibinfo {year} {2021}),\ \Eprint {https://arxiv.org/abs/2104.01941} {arXiv:2104.01941 [quant-ph]} \BibitemShut {NoStop}%
\bibitem [{\citenamefont {Wu}\ and\ \citenamefont {Hao}(2015)}]{Wu2015}%
  \BibitemOpen
  \bibfield  {author} {\bibinfo {author} {\bibfnamefont {Q.}~\bibnamefont {Wu}}\ and\ \bibinfo {author} {\bibfnamefont {J.-K.}\ \bibnamefont {Hao}},\ }\bibfield  {title} {\bibinfo {title} {A review on algorithms for maximum clique problems},\ }\href@noop {} {\bibfield  {journal} {\bibinfo  {journal} {European Journal of Operational Research}\ }\textbf {\bibinfo {volume} {242}},\ \bibinfo {pages} {693} (\bibinfo {year} {2015})}\BibitemShut {NoStop}%
\bibitem [{\citenamefont {Erd{\H{o}}s}\ and\ \citenamefont {R{\'e}nyi}(1959)}]{Erdos1959}%
  \BibitemOpen
  \bibfield  {author} {\bibinfo {author} {\bibfnamefont {P.}~\bibnamefont {Erd{\H{o}}s}}\ and\ \bibinfo {author} {\bibfnamefont {A.}~\bibnamefont {R{\'e}nyi}},\ }\bibfield  {title} {\bibinfo {title} {On random graphs {I}.},\ }\href@noop {} {\bibfield  {journal} {\bibinfo  {journal} {Publicationes Mathematicae Debrecen}\ }\textbf {\bibinfo {volume} {6}},\ \bibinfo {pages} {18} (\bibinfo {year} {1959})}\BibitemShut {NoStop}%
\bibitem [{\citenamefont {{D-Wave Systems Inc.}}(2024)}]{DWaveOcean}%
  \BibitemOpen
  \bibfield  {author} {\bibinfo {author} {\bibnamefont {{D-Wave Systems Inc.}}},\ }\href {https://docs.ocean.dwavesys.com/en/stable/} {\bibinfo {title} {{D-Wave Ocean Software}}} (\bibinfo {year} {2024}),\ \bibinfo {note} {{Accessed: 2024-11-12}}\BibitemShut {NoStop}%
\bibitem [{\citenamefont {Bron}\ and\ \citenamefont {Kerbosch}(1973)}]{Bron1973}%
  \BibitemOpen
  \bibfield  {author} {\bibinfo {author} {\bibfnamefont {C.}~\bibnamefont {Bron}}\ and\ \bibinfo {author} {\bibfnamefont {J.}~\bibnamefont {Kerbosch}},\ }\bibfield  {title} {\bibinfo {title} {Algorithm 457: finding all cliques of an undirected graph},\ }\href@noop {} {\bibfield  {journal} {\bibinfo  {journal} {Communications of the {ACM}}\ }\textbf {\bibinfo {volume} {16}},\ \bibinfo {pages} {575} (\bibinfo {year} {1973})}\BibitemShut {NoStop}%
\bibitem [{\citenamefont {Tomita}\ \emph {et~al.}(2006)\citenamefont {Tomita}, \citenamefont {Tanaka},\ and\ \citenamefont {Takahashi}}]{Tomita2006}%
  \BibitemOpen
  \bibfield  {author} {\bibinfo {author} {\bibfnamefont {E.}~\bibnamefont {Tomita}}, \bibinfo {author} {\bibfnamefont {A.}~\bibnamefont {Tanaka}},\ and\ \bibinfo {author} {\bibfnamefont {H.}~\bibnamefont {Takahashi}},\ }\bibfield  {title} {\bibinfo {title} {The worst-case time complexity for generating all maximal cliques and computational experiments},\ }\href@noop {} {\bibfield  {journal} {\bibinfo  {journal} {Theoretical Computer Science}\ }\textbf {\bibinfo {volume} {363}},\ \bibinfo {pages} {28} (\bibinfo {year} {2006})}\BibitemShut {NoStop}%
\bibitem [{\citenamefont {Hagberg}\ \emph {et~al.}(2008)\citenamefont {Hagberg}, \citenamefont {Swart},\ and\ \citenamefont {S~Chult}}]{NetworkX}%
  \BibitemOpen
  \bibfield  {author} {\bibinfo {author} {\bibfnamefont {A.}~\bibnamefont {Hagberg}}, \bibinfo {author} {\bibfnamefont {P.}~\bibnamefont {Swart}},\ and\ \bibinfo {author} {\bibfnamefont {D.}~\bibnamefont {S~Chult}},\ }\bibfield  {title} {\bibinfo {title} {Exploring network structure, dynamics, and function using {NetworkX}},\ }in\ \href@noop {} {\emph {\bibinfo {booktitle} {Proceedings of the 7th {Python} in Science Conference}}}\ (\bibinfo {year} {2008})\ pp.\ \bibinfo {pages} {11--15}\BibitemShut {NoStop}%
\bibitem [{\citenamefont {Carraghan}\ and\ \citenamefont {Pardalos}(1990)}]{Carraghan1990}%
  \BibitemOpen
  \bibfield  {author} {\bibinfo {author} {\bibfnamefont {R.}~\bibnamefont {Carraghan}}\ and\ \bibinfo {author} {\bibfnamefont {P.~M.}\ \bibnamefont {Pardalos}},\ }\bibfield  {title} {\bibinfo {title} {An exact algorithm for the maximum clique problem},\ }\href@noop {} {\bibfield  {journal} {\bibinfo  {journal} {Operations Research Letters}\ }\textbf {\bibinfo {volume} {9}},\ \bibinfo {pages} {375} (\bibinfo {year} {1990})}\BibitemShut {NoStop}%
\bibitem [{\citenamefont {Amrhein}\ \emph {et~al.}(2019)\citenamefont {Amrhein}, \citenamefont {Greenland},\ and\ \citenamefont {McShane}}]{Amrhein2019}%
  \BibitemOpen
  \bibfield  {author} {\bibinfo {author} {\bibfnamefont {V.}~\bibnamefont {Amrhein}}, \bibinfo {author} {\bibfnamefont {S.}~\bibnamefont {Greenland}},\ and\ \bibinfo {author} {\bibfnamefont {B.}~\bibnamefont {McShane}},\ }\bibfield  {title} {\bibinfo {title} {Scientists rise up against statistical significance},\ }\href@noop {} {\bibfield  {journal} {\bibinfo  {journal} {Nature}\ }\textbf {\bibinfo {volume} {567}},\ \bibinfo {pages} {305} (\bibinfo {year} {2019})}\BibitemShut {NoStop}%
\bibitem [{\citenamefont {Clopper}\ and\ \citenamefont {Pearson}(1934)}]{Clopper1934}%
  \BibitemOpen
  \bibfield  {author} {\bibinfo {author} {\bibfnamefont {C.~J.}\ \bibnamefont {Clopper}}\ and\ \bibinfo {author} {\bibfnamefont {E.~S.}\ \bibnamefont {Pearson}},\ }\bibfield  {title} {\bibinfo {title} {The use of confidence or fiducial limits illustrated in the case of the binomial},\ }\href@noop {} {\bibfield  {journal} {\bibinfo  {journal} {Biometrika}\ }\textbf {\bibinfo {volume} {26}},\ \bibinfo {pages} {404} (\bibinfo {year} {1934})}\BibitemShut {NoStop}%
\bibitem [{\citenamefont {Thulin}(2014)}]{Thulin2014}%
  \BibitemOpen
  \bibfield  {author} {\bibinfo {author} {\bibfnamefont {M.}~\bibnamefont {Thulin}},\ }\bibfield  {title} {\bibinfo {title} {The cost of using exact confidence intervals for a binomial proportion},\ }\href@noop {} {\bibfield  {journal} {\bibinfo  {journal} {Electronic Journal of Statistics}\ }\textbf {\bibinfo {volume} {8}},\ \bibinfo {pages} {817} (\bibinfo {year} {2014})}\BibitemShut {NoStop}%
\bibitem [{Note3()}]{Note3}%
  \BibitemOpen
  \bibinfo {note} {Note that categorizing sampling probability distributions as ``fair'' and ``unfair'' based on the $p$-values is not definitive; it only indicates whether each distribution is considered compatible with the fair sampling condition under a specified criterion.}\BibitemShut {Stop}%
\bibitem [{\citenamefont {B{\"a}rtschi}\ and\ \citenamefont {Eidenbenz}(2020)}]{Bartschi2020}%
  \BibitemOpen
  \bibfield  {author} {\bibinfo {author} {\bibfnamefont {A.}~\bibnamefont {B{\"a}rtschi}}\ and\ \bibinfo {author} {\bibfnamefont {S.}~\bibnamefont {Eidenbenz}},\ }\bibfield  {title} {\bibinfo {title} {{Grover} mixers for {QAOA}: Shifting complexity from mixer design to state preparation},\ }in\ \href@noop {} {\emph {\bibinfo {booktitle} {2020 {IEEE} International Conference on Quantum Computing and Engineering {(QCE)}}}}\ (\bibinfo {organization} {IEEE},\ \bibinfo {year} {2020})\ pp.\ \bibinfo {pages} {72--82}\BibitemShut {NoStop}%
\bibitem [{\citenamefont {Wales}(2003)}]{Wales2003}%
  \BibitemOpen
  \bibfield  {author} {\bibinfo {author} {\bibfnamefont {D.~J.}\ \bibnamefont {Wales}},\ }\href@noop {} {\emph {\bibinfo {title} {Energy Landscapes}}}\ (\bibinfo  {publisher} {Cambridge University Press},\ \bibinfo {address} {Cambridge, United Kingdom},\ \bibinfo {year} {2003})\BibitemShut {NoStop}%
\bibitem [{\citenamefont {Wales}(2018)}]{Wales2018}%
  \BibitemOpen
  \bibfield  {author} {\bibinfo {author} {\bibfnamefont {D.~J.}\ \bibnamefont {Wales}},\ }\bibfield  {title} {\bibinfo {title} {Exploring energy landscapes},\ }\href@noop {} {\bibfield  {journal} {\bibinfo  {journal} {Annual Review of Physical Chemistry}\ }\textbf {\bibinfo {volume} {69}},\ \bibinfo {pages} {401} (\bibinfo {year} {2018})}\BibitemShut {NoStop}%
\bibitem [{Note4()}]{Note4}%
  \BibitemOpen
  \bibinfo {note} {We denote the probability measure associated with the sampling process simply as $P$. Although this measure depends on $X$ and $p$, we do not explicitly indicate this dependence in the present article, as the symbol $P$ is always used together with random variable symbols indicating the distribution $p$.}\BibitemShut {Stop}%
\bibitem [{\citenamefont {Matsuda}\ \emph {et~al.}(2009)\citenamefont {Matsuda}, \citenamefont {Nishimori},\ and\ \citenamefont {Katzgraber}}]{Matsuda2009}%
  \BibitemOpen
  \bibfield  {author} {\bibinfo {author} {\bibfnamefont {Y.}~\bibnamefont {Matsuda}}, \bibinfo {author} {\bibfnamefont {H.}~\bibnamefont {Nishimori}},\ and\ \bibinfo {author} {\bibfnamefont {H.~G.}\ \bibnamefont {Katzgraber}},\ }\bibfield  {title} {\bibinfo {title} {Ground-state statistics from annealing algorithms: quantum versus classical approaches},\ }\href@noop {} {\bibfield  {journal} {\bibinfo  {journal} {New Journal of Physics}\ }\textbf {\bibinfo {volume} {11}},\ \bibinfo {pages} {073021} (\bibinfo {year} {2009})}\BibitemShut {NoStop}%
\bibitem [{\citenamefont {Mandra}\ \emph {et~al.}(2017)\citenamefont {Mandra}, \citenamefont {Zhu},\ and\ \citenamefont {Katzgraber}}]{Mandra2017}%
  \BibitemOpen
  \bibfield  {author} {\bibinfo {author} {\bibfnamefont {S.}~\bibnamefont {Mandra}}, \bibinfo {author} {\bibfnamefont {Z.}~\bibnamefont {Zhu}},\ and\ \bibinfo {author} {\bibfnamefont {H.~G.}\ \bibnamefont {Katzgraber}},\ }\bibfield  {title} {\bibinfo {title} {Exponentially biased ground-state sampling of quantum annealing machines with transverse-field driving {Hamiltonians}},\ }\href@noop {} {\bibfield  {journal} {\bibinfo  {journal} {Physical Review Letters}\ }\textbf {\bibinfo {volume} {118}},\ \bibinfo {pages} {070502} (\bibinfo {year} {2017})}\BibitemShut {NoStop}%
\bibitem [{\citenamefont {K{\"o}nz}\ \emph {et~al.}(2019)\citenamefont {K{\"o}nz}, \citenamefont {Mazzola}, \citenamefont {Ochoa}, \citenamefont {Katzgraber},\ and\ \citenamefont {Troyer}}]{Konz2019}%
  \BibitemOpen
  \bibfield  {author} {\bibinfo {author} {\bibfnamefont {M.~S.}\ \bibnamefont {K{\"o}nz}}, \bibinfo {author} {\bibfnamefont {G.}~\bibnamefont {Mazzola}}, \bibinfo {author} {\bibfnamefont {A.~J.}\ \bibnamefont {Ochoa}}, \bibinfo {author} {\bibfnamefont {H.~G.}\ \bibnamefont {Katzgraber}},\ and\ \bibinfo {author} {\bibfnamefont {M.}~\bibnamefont {Troyer}},\ }\bibfield  {title} {\bibinfo {title} {Uncertain fate of fair sampling in quantum annealing},\ }\href@noop {} {\bibfield  {journal} {\bibinfo  {journal} {Physical Review A}\ }\textbf {\bibinfo {volume} {100}},\ \bibinfo {pages} {030303} (\bibinfo {year} {2019})}\BibitemShut {NoStop}%
\end{thebibliography}

\end{document}